\keywords{univalent foundations, homotopy type theory, HoTT/UF,
  constructive mathematics, predicative mathematics,
  propositional resizing, type universes,
  order theory, complete posets,
  set~quotients, propositional truncations,
  set~replacement, ordinals.
}
\Crefname{defi}{Definition}{Definitions}
\Crefname{thm}{Theorem}{Theorems}
\Crefname{thmC}{Theorem}{Theorems}
\Crefname{cor}{Corollary}{Corollaries}
\Crefname{prop}{Proposition}{Propositions}
\Crefname{propC}{Proposition}{Propositions}
\Crefname{lem}{Lemma}{Lemmas}
\Crefname{lemC}{Lemma}{Lemmas}
\Crefname{rem}{Remark}{Remarks}
\Crefname{exa}{Example}{Examples}
\Crefname{exas}{Examples}{Examples}
\newcommand{\colonequiv}{\mathrel{\vcentcolon\mspace{-1.2mu}\equiv}}
\DeclarePairedDelimiter{\pa}{(}{)}
\newcommand{\Zero}{\mathsf{0}}
\newcommand{\Nat}{\mathbb{N}}
\newcommand{\One}{\mathsf{1}}
\newcommand{\Two}{\mathsf{2}}
\DeclareMathOperator{\inl}{inl}
\DeclareMathOperator{\inr}{inr}
\DeclareMathOperator{\issubsingleton}{is-subsingleton}
\DeclareMathOperator{\issingleton}{is-singleton}
\DeclarePairedDelimiter{\squash}{\|}{\|}
\DeclarePairedDelimiter{\tosquash}{|}{|}
\DeclarePairedDelimiter{\squashVV}{\|}{\|_{v}}
\DeclarePairedDelimiter{\tosquashVV}{|}{|_v}
\DeclareMathOperator{\id}{id}
\DeclareMathOperator{\Id}{Id}
\DeclareMathOperator{\refl}{refl}
\DeclareMathOperator{\fib}{fib}
\DeclareMathOperator{\image}{im}
\newcommand{\surj}{\twoheadrightarrow}
\DeclareMathOperator{\totalspace}{\mathbb T}
\DeclareMathOperator{\fst}{pr_1}
\DeclareMathOperator{\snd}{pr_2}
\newcommand{\U}{\mathcal U}
\newcommand{\V}{\mathcal V}
\newcommand{\W}{\mathcal W}
\newcommand{\T}{\mathcal T}
\newcommand{\issmall}[1]{\operatorname{is}\;\!\mathcal{#1}\!\operatorname{-small}}
\DeclareMathOperator{\propresizing}{Propositional-Resizing}
\newcommand{\Propresizing}[2]{\propresizing_{#1,#2}}
\DeclareMathOperator{\resizing}{-Resizing}
\newcommand{\Omegaresizing}[2]{\Omega\!\resizing_{#1,#2}}
\newcommand{\Omegaresizingalt}[1]{\Omega\!\resizing_{#1}}
\newcommand{\Omeganotnot}[1]{\Omega^{\lnot\lnot}_{#1}}
\newcommand{\Omeganotnotresizing}[2]{{\Omega_{\lnot\lnot}}\!\resizing_{#1,#2}}
\newcommand{\Omeganotnotresizingalt}[1]{{\Omega_{\lnot\lnot}}\!\resizing_{#1}}
\DeclareMathOperator{\tarski}{Tarski's-Theorem}
\newcommand{\Tarski}[3]{\tarski_{#1,#2,#3}}
\DeclareMathOperator{\lifting}{\mathcal L}
\newcommand{\below}{\mathrel\sqsubseteq}
\newcommand{\sbelow}{\mathrel\sqsubset}
\newcommand{\aboveorder}{\mathrel\sqsupseteq}
\DeclareMathOperator{\powerset}{\mathcal P}
\newcommand{\deltacomplete}[1]{\(\delta_{#1}\)-complete}
\DeclareMathOperator{\Ord}{Ord}
\newcommand{\initseg}{\mathbin\downarrow}
\newcommand{\mkTTurl}[1]{\href{https://www.cs.bham.ac.uk/~mhe/TypeTopology/Published.#1.html}{\texttt{\textup{#1}}}}
\newcommand{\mkTTurllong}[2]{\href{https://www.cs.bham.ac.uk/~mhe/TypeTopology/Published.#1.html\##2}{\texttt{\textup{#1}}}}
\theoremstyle{plain}\newtheorem*{clmnn}{Claim}
\def\eg{{\em e.g.}}
\def\cf{{\em cf.}}
\def\ie{{\em i.e.}}
\begin{document}

\title[On Small Types in Univalent Foundations]%
{On Small Types in Univalent Foundations\rsuper*}
\titlecomment{{\lsuper*}%
  This is a revised and extended version of~\cite{FSCDversion}.}

\author[T.~de~Jong]{Tom de Jong\lmcsorcid{0000-0003-1585-3172}}
\author[M.~H.~Escard\'o]{Mart\'in H\"otzel Escard\'o\lmcsorcid{0000-0002-4091-6334}}
\address{School of Computer Science, University of Birmingham, Birmingham, B15 2TT, UK}
\email{tom.dejong@nottingham.ac.uk, m.escardo@cs.bham.ac.uk}

\begin{abstract}
  We investigate predicative aspects of constructive univalent
  foundations. By predicative and constructive, we respectively mean that we do
  not assume Voevodsky's propositional resizing axioms or excluded middle.
  Our work complements existing work on predicative mathematics by
  exploring what cannot be done predicatively in univalent foundations.
  Our first main result is that nontrivial (directed or bounded) complete posets
  are necessarily large. That is, if such a nontrivial poset is small, then weak
  propositional resizing holds. It is possible to derive full propositional
  resizing if we strengthen nontriviality to positivity. The distinction between
  nontriviality and positivity is analogous to the distinction between
  nonemptiness and inhabitedness.
  Moreover, we prove that locally small, nontrivial (directed or bounded)
  complete posets necessarily lack decidable equality.  We prove our results for
  a general class of posets, which includes \eg\ directed complete posets,
  bounded complete posets, sup-lattices and frames.
  Secondly, the fact that these nontrivial posets are necessarily large has the
  important consequence that Tarski's theorem (and similar results) cannot be
  applied in nontrivial instances. Furthermore, we explain that generalizations
  of Tarski's theorem that allow for large structures are provably false by
  showing that the ordinal of ordinals in a univalent universe has small suprema
  in the presence of set quotients.
  The latter also leads us to investigate the inter-definability and interaction
  of type universes of propositional truncations and set quotients, as well as a
  set replacement principle.
  Thirdly, we clarify, in our predicative setting, the relation between the
  traditional definition of sup-lattice that requires suprema for all subsets
  and our definition that asks for suprema of all small families.
\end{abstract}

\maketitle

\section{Introduction}\label{sec:introduction}

We investigate predicative aspects of constructive univalent
foundations. By~predicative and constructive, we respectively mean that we do
not assume Voevodsky's propositional resizing
axioms~\cite{Voevodsky2011,Voevodsky2015} or excluded middle and choice.
Most of our work is situated in our larger programme of developing domain theory
constructively and predicatively in univalent foundations. In~previous
work~\cite{deJongEscardo2021}, we showed how to give a constructive and
predicative account of many familiar constructions and notions in domain theory,
such as Scott's \(D_\infty\) model of untyped \(\lambda\)-calculus and the
theory of continuous dcpos. The present work complements this and other existing
work on predicative mathematics
(\eg~\cite{AczelRathjen2010,Sambin1987,CoquandEtAl2003}) by exploring what
\emph{cannot} be done predicatively, as
in~\cite{Curi2010a,Curi2010b,Curi2015,Curi2018,CuriRathjen2012}. We do so by
showing that certain statements crucially rely on resizing axioms in the sense
that they are equivalent to them. Such arguments are important in constructive
mathematics.  For example, the constructive failure of trichotomy on the real
numbers is shown~\cite{BridgesRichman1987} by reducing it to a nonconstructive
instance of excluded middle.

Our first main result is that nontrivial (directed or bounded) complete posets
are necessarily large. In~\cite{deJongEscardo2021} we observed that all our
examples of directed complete posets have large carriers. We show here that this
is no coincidence, but rather a necessity, in the sense that if such a
nontrivial poset is small, then weak propositional resizing holds. It is
possible to derive full propositional resizing if we strengthen nontriviality to
positivity in the sense of~\cite{Johnstone1984}. The distinction between
nontriviality and positivity is analogous to the distinction between
nonemptiness and inhabitedness.
We prove our results for a general class of posets, which includes directed
complete posets, bounded complete posets and sup-lattices, using a technical
notion of a \deltacomplete{\V} poset.
We also show that nontrivial locally small \deltacomplete{\V} posets necessarily
lack decidable equality. Specifically, we can derive weak excluded middle from
assuming the existence of a nontrivial locally small \deltacomplete{\V} poset
with decidable equality. Moreover, if we assume positivity instead of
nontriviality, then we can derive full excluded middle.

Secondly, the fact that these nontrivial posets are necessarily large has the
important consequence that Tarski's theorem (and similar results) cannot be
applied in nontrivial instances. Furthermore, we explain that generalizations of
Tarski's theorem that allow for large structures are provably
false. Specifically, we show that the ordinal of ordinals in a univalent
universe does not have a maximal element, but does have small suprema in the
presence of small set quotients.
The latter also leads us to investigate the inter-definability and interaction
of type universes of propositional truncations and set quotients, as well as a
set replacement principle.
Following a construction due to Voevodsky, we construct set quotients from
propositional truncations.
However, while Voevodsky assumed propositional resizing rules in his
construction, we show that, when propositional truncations are available,
resizing is not needed to prove the universal property of the set quotient, even
though the quotient will live in a higher type universe.

Finally, we clarify, in our predicative setting, the relation between the
traditional definition of sup-lattice that requires suprema for all subsets and
our definition that asks for suprema of all small families. This is important in
practice in order to obtain workable definitions of dcpo, sup-lattice, etc.\
in the context of predicative univalent mathematics.

Our foundational setup is the same as in~\cite{deJongEscardo2021}, meaning that
our work takes places in intensional Martin-L\"of Type Theory and adopts the
univalent point of view~\cite{HoTTBook}. This~means that we work with the
stratification of types as singletons, propositions (or subsingletons or truth
values), sets, {1-groupoids}, etc., and that we work with univalence. At
present, higher inductive types other than propositional truncation are not
needed. Often the only consequences of univalence needed here are functional and
propositional extensionality.  Two exceptions are
\cref{sec:smallness-and-univalence,sec:small-suprema-of-ordinals}. Full details
of our univalent type theory are given at the start
of~\cref{sec:foundations-and-small-types}.

\subsection{Reasons for studying predicativity}
We briefly describe some motivations for studying impredicativity in the form of
propositional resizing in univalent type theory.
The first reason for our interest is that, unlike the univalence axiom in
cubical type theory~\cite{CohenEtAl2018}, there is at present no known
computational interpretation of propositional resizing axioms.

Another reason for being interested in predicativity is the fact that
propositional resizing axioms fail in some models of univalent type theory. A
notable example of such a model is Uemura's cubical assembly
model~\cite{Uemura2019}. What is particularly striking about Uemura's model is
that it does support an impredicative universe \(\U\) in the sense that if \(X\)
is \emph{any} type and \(Y : X \to \U\), then \(\Pi_{x : X}Y(x)\) is in \(\U\)
again even if \(X\) isn't, but that propositional resizing fails for this
universe.
On the model-theoretic side, we also highlight Swan's (unpublished)
results~\cite{Swan2019a,Swan2019b} that show that propositional resizing axioms
fail in certain presheaf (cubical) models of type theory. Interestingly, Swan's
argument works by showing that the models violate certain collection principles
if we assume Brouwerian continuity principles in the metatheory.

By contrast, we should mention that propositional resizing is validated in many
models when a classical metatheory is assumed. For example, this is true for any
type-theoretic model topos~\cite[Proposition~11.3]{Shulman2019}. In particular,
Voevodsky's simplical sets model~\cite{KapulkinLumsdaine2021} validates excluded
middle and hence propositional resizing.
We note, however, that in other models it is possible for propositional resizing
to hold and excluded middle to fail, as shown
by~\cite[Remark~11.24]{Shulman2015}.

Another interesting aspect of impredicativity is that it is expected, by analogy
to predicative and impredicative set theories, that adding resizing axioms
significantly increases the proof-theoretic strength of univalent type
theory~\cite[Remark~1.2]{Shulman2019}.

This paper concerns resizing \emph{axioms}, meaning we ask a given type to be
\emph{equivalent} to one in some fixed universe \(\U\) of ``small'' types.
Voevodsky~\cite{Voevodsky2011} originally introduced resizing \emph{rules} which
add judgements and hence modify the syntax of the type theory to make the given
type inhabit \(\U\), rather than only asking for an equivalent copy
in~\(\U\). It is not known whether Voevodsky's resizing rules are consistent
with univalent type theory in the sense that no-one has constructed a model of
univalent type theory extended with such resizing rules, or proved a
contradiction in the system.
It is also an open problem~\cite[Section~10]{CohenEtAl2018} whether we have
normalization for cubical type theory extended with resizing rules.
In fact, as far as we know, this is an open problem for plain Martin-L\"of Type
Theory as well.

Lastly, one may have philosophical reservations regarding impredicativity. For
example, some constructivists may accept predicative set theories like Aczel's
CZF and Myhill's CST, but not Friedman's impredicative set theory IZF.
Or, paraphrasing Shulman's narrative~\cite{Shulman2011}, one can ask why
propositions (or (\(-1\))-types) should be treated differently, \ie\ given that
we have to take size seriously for \(n\)-types for \(n > -1\), why not do the
same for (\(-1\))-types?

\subsection{Related work}
Curi investigated the limits of predicative mathematics
in CZF~\cite{AczelRathjen2010} in a series of
papers~\cite{Curi2010a,Curi2010b,Curi2015,Curi2018,CuriRathjen2012}.
In particular, Curi shows (see~\cite[Theorem~4.4 and
  Corollary~4.11]{Curi2010a}, \cite[Lemma~1.1]{Curi2010b} and
  \cite[Theorem~2.5]{Curi2015}) that CZF cannot prove that various nontrivial
posets, including sup-lattices, dcpos and frames, are small. This result is
obtained by exploiting that CZF is consistent with the anti-classical
generalized uniformity principle
(GUP)~\cite[Theorem~4.3.5]{vandenBerg2006}.
Our related \cref{nontrivial-impredicativity,positive-impredicativity} is of a
different nature in two ways.
Firstly, our theorem is in the spirit of reverse constructive
mathematics~\cite{Ishihara2006}: Instead of showing that GUP implies that there
are no non-trivial small dcpos, we show that the existence of a non-trivial
small dcpo is \emph{equivalent} to weak propositional resizing, and that the
existence of a positive small dcpo is \emph{equivalent} to full propositional
resizing. Thus, if we wish to work with small dcpos, we are forced to assume
resizing axioms.
Secondly, we work in univalent foundations rather than CZF.  This may seem a
superficial difference, but a number of arguments in Curi's
papers~\cite{Curi2015,Curi2018} crucially rely on set-theoretical notions and
principles such as transitive set, set-induction, and the weak regular extension
axiom (wREA), which cannot even be formulated in the underlying type theory of
univalent foundations.
Moreover, although Curi claims that the arguments of~\cite{Curi2010a,Curi2010b}
can be adapted to some version of Martin-L\"of Type Theory, it is presently not
clear whether there is any model of univalent foundations which validates
GUP. However, one of the reviewers suggested that Uemura's cubical assemblies
model~\cite{Uemura2019} might validate it. In particular, the reviewer hinted
that \cite[Proposition~21]{Uemura2019} may be seen as a uniformity principle.

Finally, the construction of set quotients using propositional truncations is
due to Voevodsky and also appears in~\cite[Section~6.10]{HoTTBook} and
\cite[Section~3.4]{RijkeSpitters2015}. While Voevodsky assumed resizing rules
for his construction, we investigate the inter-definability of propositional
truncations and set quotients in the absence of propositional resizing axioms.

\subsection{Organization}
\emph{\cref{sec:foundations-and-small-types}}: Foundations and size matters,
including impredicativity, relation to excluded middle, univalence and closure
under embedded retracts.
\emph{\cref{sec:set-quotients-truncations-set-replacement}}: Inter-definability
of set quotients and propositional truncations, and equivalence of small set
quotients and set replacement.
\emph{\cref{sec:large-posets}}: Nontrivial and positive \deltacomplete{\V}
posets and reductions to impredicativity and excluded middle.
\emph{\cref{sec:maximal-and-fixed-points}}: Predicative unavailability of
Tarski's fixed point theorem and Pataraia's lemma, and suprema of ordinals.
\emph{\cref{sec:families-and-subsets}}: Comparison of completeness with respect
to families and with respect to subsets.
\emph{\cref{sec:conclusion}}: Conclusion and future work.


\section{Foundations and Small Types}\label{sec:foundations-and-small-types}
We work with a subset of the type theory described in~\cite{HoTTBook} and we
mostly adopt the terminological and notational conventions
of~\cite{HoTTBook}. We include \(+\)~(binary sum), \(\Pi\)~(dependent
products), \(\Sigma\)~(dependent sum), \(\Id\) (identity type), and inductive
types, including~\(\Zero\)~(empty type), \(\One\)~(type with exactly one element
\(\star : \One\)), \(\Nat\)~(natural numbers).
We assume a universe \(\U_0\) and two operations: for every universe \(\U\), a
successor universe \(\U^+\) with \(\U : \U^+\), and for every two universes
\(\U\) and \(\V\) another universe \(\U \sqcup \V\) such that for any
universe~\(\U\), we have \(\U_0 \sqcup \U \equiv \U\) and
\(\U \sqcup \U^+ \equiv \U^+\). Moreover, \((-)\sqcup(-)\) is idempotent,
commutative, associative, and \((-)^+\) distributes over \((-)\sqcup(-)\). We
write \(\U_1 \colonequiv \U_0^+\), \(\U_2 \colonequiv \U_1^+, \dots\) and so on.
If \(X : \U\) and \(Y : \V\), then \({X + Y} : \U \sqcup \V\) and if \(X : \U\)
and \(Y : X \to \V\), then the types \(\Sigma_{x : X} Y(x)\) and
\(\Pi_{x : X} Y(x)\) live in the universe \(\U \sqcup \V\); finally,
if~\(X : \U\) and \(x,y : X\), then \(\Id_{X}(x,y) : \U\). The type of natural
numbers \(\Nat\) is assumed to be in \(\U_0\) and we postulate that we have
copies \(\Zero_{\U}\) and \(\One_{\U}\) in every universe \(\U\).
This has the useful consequence that while we do not assume cumulativity of
universes, embeddings that lift types to higher universes are definable. For
example, the map \((-) \times \One_{\V}\) takes a type in any universe \(\U\) to
an equivalent type in the higher universe \(\U \sqcup \V\).
We assume function extensionality and propositional extensionality tacitly, and
univalence explicitly when needed. Finally, we use a single higher inductive
type: the propositional truncation of a type \(X\) is denoted by \(\squash*{X}\)
and we write \(\exists_{x : X}Y(x)\) for \(\squash*{\Sigma_{x : X}Y(x)}\).
Apart from \cref{sec:set-quotients-truncations-set-replacement}, we assume
throughout that every universe is closed under propositional truncations,
meaning that if \(X : \U\) then \(\squash{X} : \U\) as well.

\subsection{The Notion of a Small Type}
We introduce the fundamental notion of a type being \(\U\)-small with respect to
some type universe \(\U\), and specify the impredicativity axioms under
consideration~(\cref{sec:impred-and-em}). We also note the relation to excluded
middle~(\cref{sec:impred-and-em}) and
univalence~(\cref{sec:smallness-and-univalence}). Finally,
in~\cref{sec:small-types-and-retracts} we establish our main technical result on
small types, namely that being small is closed under retracts.

\begin{defi}[Smallness, {\cite[{\mkTTurl{UF.Size}}]{TypeTopology}}] 
    A type \(X\) in any universe is said to be \emph{\(\U\)-small} if it is
    equivalent to a type in the universe \(\U\). That is,
    \({X \issmall{\U}} \colonequiv \Sigma_{Y : \U} \pa*{Y \simeq X}\).
\end{defi}

\begin{defi}[Local smallness, \cite{Rijke2017}]
  A type \(X\) is said to be \emph{locally \(\U\)-small} if the type
  \((x = y)\) is \(\U\)-small for every \(x,y : X\).
\end{defi}

\begin{exas}\hfill
  \begin{enumerate}[(i)]
  \item Every \(\U\)-small type is locally \(\U\)-small.
  \item The type \(\Omega_{\U}\) of propositions in a universe \(\U\) lives in
    \(\U^+\), but is locally \(\U\)-small by propositional extensionality.
  \end{enumerate}
\end{exas}

\subsection{Impredicativity and Excluded Middle}
\label{sec:impred-and-em}
We consider various impredicativity axioms and their relation to (weak) excluded
middle. The definitions and propositions below may be found in
\cite[Section~3.36]{Escardo2019}, so proofs are omitted here.

\begin{defi}[Impredicativity axioms]
  \hfill
  \begin{enumerate}[(i)]
  \item By \emph{Propositional-\(\text{Resizing}_{\U,\V}\)} we mean the
    assertion that every proposition \(P\) in a universe \(\U\) is
    \(\V\)-small.
  \item We write \emph{\(\Omegaresizing{\U}{\V}\)} for the assertion that the
    type \(\Omega_{\U}\) is \(\V\)-small.
  \item The type of all \(\lnot\lnot\)-stable propositions in a universe \(\U\)
    is denoted by \(\Omeganotnot{\U}\), where a proposition \(P\) is
    \emph{\(\lnot\lnot\)-stable} if \(\lnot\lnot P\) implies \(P\).
    By \emph{\(\Omeganotnotresizing{\U}{\V}\)} we mean the assertion that the
    type \(\Omeganotnot{\U}\) is \(\V\)-small.
  \item For the particular case of a single universe, we write
    \(\Omegaresizingalt{\U}\) and \(\Omeganotnotresizingalt{\U}\) for the
    respective assertions that \(\Omega_{\U}\) is \(\U\)-small and
    \(\Omeganotnot{\U}\) is \(\U\)-small.
  \end{enumerate}
\end{defi}

\begin{prop}
  \hfill
  \begin{enumerate}[(i)]
  \item The principle \(\Omegaresizing{\U}{\V}\) implies
    \(\Propresizing{\U}{\V}\) for every two universes \(\U\) and \(\V\).
  \item The conjunction of \(\Propresizing{\U}{\V}\) and \(\Propresizing{\V}{\U}\)
  implies \(\Omegaresizing{\U}{\V^+}\) for every two universes \(\U\) and \(\V\).
  \end{enumerate}
\end{prop}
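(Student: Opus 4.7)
The plan is to treat the two implications separately. For (i), the idea is to extract smallness of an arbitrary proposition from smallness of the whole of $\Omega_\U$ by the familiar ``is equal to $\top$'' trick. For (ii), I would combine the two resizing hypotheses into a round-trip equivalence $\Omega_\U \simeq \Omega_\V$, exploiting that small copies of a proposition are essentially unique, and then observe that $\Omega_\V$ already lives in $\V^+$.

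For (i), assume $\Omegaresizing{\U}{\V}$ and pick $T : \V$ together with an equivalence $e \colon T \simeq \Omega_\U$. Write $\top_\U \colonequiv \One_\U$ for the true proposition in $\U$. Given a proposition $P : \U$, set
\[
  t_P \colonequiv e^{-1}(P), \qquad t_\top \colonequiv e^{-1}(\top_\U),
\]
both elements of $T$, and define $Q \colonequiv (t_P =_T t_\top)$. Because identity types do not raise the universe level, $Q : \V$. Applying $\ap_e$ gives an equivalence $Q \simeq (P =_{\Omega_\U} \top_\U)$; by propositional extensionality the latter is equivalent to $P \leftrightarrow \top_\U$, which in turn simplifies to $P$. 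Hence $Q \simeq P$, exhibiting $P$ as $\V$-small.

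For (ii), I would first observe that whenever $P$ is a proposition, the type $P \issmall{\V}$ is itself a proposition: any two $\V$-small representatives are propositions equivalent to $P$, hence equal by propositional extensionality, and the equivalences involved are unique because maps between propositions are. Consequently $\Propresizing{\U}{\V}$ yields an honest function $f \colon \Omega_\U \to \Omega_\V$ sending each $P$ to the underlying type of its essentially unique $\V$-small representative, which is automatically a proposition. Symmetrically, $\Propresizing{\V}{\U}$ provides $g \colon \Omega_\V \to \Omega_\U$. For any $P : \Omega_\U$, the proposition $g(f(P))$ is equivalent to $f(P)$ and hence to $P$, so propositional extensionality delivers $g(f(P)) = P$ in $\Omega_\U$; the other composite is handled symmetrically. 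Thus $\Omega_\U \simeq \Omega_\V$, and since $\Omega_\V : \V^+$ by the universe conventions, this witnesses that $\Omega_\U$ is $\V^+$-small.

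The main obstacle I expect is the delicate step verifying that $P \issmall{\V}$ is a proposition when $P$ is one: without this one would only obtain a multi-valued correspondence between $\Omega_\U$ and $\Omega_\V$, and some form of choice would be needed to turn it into a genuine function. Apart from that, the argument is largely a careful accounting of universe levels --- in particular noting that the identity type of a $\V$-type stays in $\V$, and that $\Omega_\V$ sits in $\V^+$ rather than $\V$.
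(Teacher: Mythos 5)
Your proof is correct and follows the standard argument (omitted in the paper, which refers to Escard\'o's notes for this proposition): the ``equals \(\top\)'' reduction for (i) and the round-trip equivalence \(\Omega_\U \simeq \Omega_\V\) for (ii). Note only that, since \(\Propresizing{\U}{\V}\) is defined here without propositional truncation, the function \(f : \Omega_\U \to \Omega_\V\) is obtained directly from the \(\Pi\text{-}\Sigma\) structure, so your concern about needing choice does not actually arise, though your observation that \(P \issmall{\V}\) is a proposition when \(P\) is one (via propositional extensionality) is still true and worth recording.
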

It is possible to define a weaker variation of propositional resizing for the
\(\lnot\lnot\)-stable propositions only (and derive similar connections), but we
don't need it in this paper.

\begin{defi}[(Weak) excluded middle]
  \hfill
  \begin{enumerate}[(i)]
  \item \emph{Excluded middle} in a universe \(\U\) asserts that for every
    proposition \(P\) in \(\U\) either \(P\)~or~\(\lnot P\) holds.
  \item \emph{Weak excluded middle} in a universe \(\U\) asserts that for every
    proposition \(P\) in \(\U\) either \(\lnot P\) or \(\lnot\lnot P\) holds.
  \end{enumerate}
\end{defi}
We note that weak excluded middle says precisely that \(\lnot\lnot\)-stable
propositions are decidable and is equivalent to de~Morgan's Law.

\begin{prop}
  Excluded middle implies impredicativity. Specifically,
  \begin{enumerate}[(i)]
  \item Excluded middle in \(\U\) implies \(\Omegaresizing{\U}{\U_0}\).
  \item Weak excluded middle in \(\U\) implies
    \(\Omeganotnotresizing{\U}{\U_0}\).
  \end{enumerate}
\end{prop}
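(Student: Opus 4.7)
The plan is to show, under the appropriate form of excluded middle, that the relevant type of propositions is equivalent to the two-element type $\Two$, which lies in $\U_0$, so that $\U_0$-smallness follows at once.

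For (i), I would define a map $f : \Two \to \Omega_\U$ sending $\inl(\star) \mapsto \Zero_\U$ and $\inr(\star) \mapsto \One_\U$, each paired with its canonical proof of being a proposition. My claim is that $f$ is an equivalence. Since $\Two$ is a set and $\Omega_\U$ is a set by propositional extensionality, it suffices to show $f$ is a bijection. Injectivity holds because $\Zero_\U$ and $\One_\U$ are not logically equivalent and hence, by propositional extensionality, not equal in $\Omega_\U$. For surjectivity, take any $P : \Omega_\U$; excluded middle in $\U$ yields $P$ or $\lnot P$. In the first case, $P$ and $\One_\U$ are both true propositions, hence logically equivalent, hence equal, giving $P = f(\inr(\star))$; the second case is analogous, yielding $P = f(\inl(\star))$.

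For (ii), the same map $f$ works once we observe that both $\Zero_\U$ and $\One_\U$ are $\lnot\lnot$-stable, so $f$ corestricts to a map $\Two \to \Omeganotnot{\U}$. Injectivity is unchanged. For surjectivity, I would use the remark preceding the proposition that weak excluded middle is equivalent to the assertion that every $\lnot\lnot$-stable proposition is decidable: given $P : \Omeganotnot{\U}$, weak excluded middle in $\U$ gives $\lnot P$ or $\lnot\lnot P$, and in the latter case the $\lnot\lnot$-stability of $P$ yields $P$ outright. The rest of the argument then proceeds as in~(i).

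I do not anticipate a genuine obstacle here: the substance of both parts is that (weak) excluded middle collapses the type of (stable) propositions onto two canonical representatives, so the only care required is to use propositional extensionality correctly when passing from logical equivalence to equality of propositions, which is assumed tacitly throughout the paper.
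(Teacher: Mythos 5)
Your proof is correct. The paper itself omits the proof, deferring to \cite[Section~3.36]{Escardo2020}, but the argument you give---showing that (weak) excluded middle collapses the type of (\(\lnot\lnot\)-stable) propositions in \(\U\) to \(\Two : \U_0\) via the map sending the two points to \(\Zero_\U\) and \(\One_\U\), with decidability supplying surjectivity---is the standard one and matches the cited source. One small phrasing point: to see \(\Zero_\U \neq \One_\U\) you do not need propositional extensionality; the trivial direction (equal types are interprovable, by transport) already gives the contradiction, whereas propositional extensionality is genuinely used only in the surjectivity step, as you indicate.
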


\subsection{Smallness and Univalence}
\label{sec:smallness-and-univalence}
With univalence we can prove that the statements \(\Propresizing{\U}{\V}\) and
\(\Omegaresizing{\U}{\V}\) are subsingletons. More generally, univalence allows
us to prove that the statement that \(X\) is \(\V\)-small is a proposition,
which is needed at the end of~\cref{sec:small-completeness-with-resizing}.

\begin{propC}[{\cite[\texttt{has-size-is-subsingleton}]{Escardo2019}}]%
  \label{is-small-is-prop}
  If \(\V\) and \(\U \sqcup \V\) are univalent universes, then
  \(X \issmall{\V}\) is a proposition for every \(X : \U\).
\end{propC}
The converse also holds in the following form.
\begin{prop}\label{is-small-univalence}
  The type \(X \issmall{\U}\) is a proposition for
  every \(X : \U\) if and only if the universe \(\U\) is univalent.
\end{prop}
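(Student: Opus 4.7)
The statement is a biconditional, so I would prove each direction separately.

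For the ($\Leftarrow$) direction, I would simply invoke \cref{is-small-is-prop} with $\V \colonequiv \U$. Under this specialization, $\U \sqcup \V \equiv \U$, so both required universes are univalent by hypothesis, giving that $X \issmall{\U}$ is a proposition for every $X : \U$.

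For the ($\Rightarrow$) direction, the plan is to reduce to the ``based'' formulation of univalence, namely that for every $A : \U$, the type $\Sigma_{B : \U}(A \simeq B)$ is contractible. First I would observe that, for every $X : \U$, the type $X \issmall{\U} \equiv \Sigma_{Y : \U}(Y \simeq X)$ is always inhabited by the pair $(X, \id_X)$. By hypothesis it is also a proposition, and an inhabited proposition is contractible, so $\Sigma_{Y : \U}(Y \simeq X)$ is contractible for every $X : \U$. Next, I would transport contractibility along the equivalence $\Sigma_{Y : \U}(Y \simeq X) \simeq \Sigma_{Y : \U}(X \simeq Y)$ obtained by sending an equivalence $f : Y \simeq X$ to its inverse $f^{-1} : X \simeq Y$ (this is an involution on equivalence types since inversion is its own inverse up to function extensionality). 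Contractibility of $\Sigma_{Y : \U}(X \simeq Y)$ for every $X : \U$ is one of the standard equivalent formulations of univalence for $\U$, which concludes the proof.

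The only delicate point is the passage between $\Sigma_{Y : \U}(Y \simeq X)$ and $\Sigma_{Y : \U}(X \simeq Y)$; everything else is bookkeeping. There is no serious obstacle, as both directions rest on well-known reformulations of univalence and on the basic fact that inhabited propositions are contractible. Function extensionality, which is assumed tacitly throughout the paper, is needed to verify that inversion of equivalences is an involution at the level of the $\Sigma$-types involved.
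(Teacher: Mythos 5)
Your proof is correct and is essentially the argument the paper is implicitly invoking: the paper's proof simply observes that $X \issmall{\U} \equiv \Sigma_{Y : \U}(Y \simeq X)$ and cites \cite[Section~3.14]{Escardo2020}, and the content behind that citation is exactly what you spelled out — the backward direction via \cref{is-small-is-prop} with $\V \colonequiv \U$, and the forward direction by noting the $\Sigma$-type is pointed (hence contractible once it is a proposition) and identifying contractibility of $\Sigma_{Y : \U}(Y \simeq X)$ with the ``based'' formulation of univalence (the inversion step, while fine, is optional, since contractibility of $\Sigma_{Y : \U}(Y \simeq X)$ for all $X$ is itself directly equivalent to univalence of $\U$ by the same fundamental theorem of identity types).
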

\begin{proof}
  Since \({X \issmall{\U}} \colonequiv \Sigma_{Y : \U}(Y\simeq X)\), this
  follows from~\cite[Section~3.14]{Escardo2019}. \qedhere
\end{proof}

\subsection{Small Types and Retracts}
\label{sec:small-types-and-retracts}
We show our main technical result on small types here, namely that being small
is closed under retracts.

\begin{defi}[Sections and retractions]
  A \emph{section} is a map \(s : X \to Y\) together with a left inverse
  \(r : Y \to X\), \ie\ the maps satisfy \(r \circ s \sim \id\).  We call \(r\)
  the \emph{retraction} and say that \(X\) is a \emph{retract} of \(Y\).
\end{defi}

We extend the notion of a small type to functions as follows.

\begin{defi}[Smallness (for maps), {\cite[{\mkTTurllong{UF.Size}{_is-small-map}}]{TypeTopology}}]
  A map \(f : X \to Y\) is said be \(\V\)-small if every fibre is \(\V\)-small.
\end{defi}

\begin{lemC}[{\cite[{\mkTTurllong{UF.Size}{_is-small-map}}]{TypeTopology}}]%
  \label{small-maps-lemmas}\hfill
  \begin{enumerate}[(i)]
  \item\label{small-type-in-terms-of-small-map} A type \(X\) is \(\V\)-small if
    and only if the unique map \(X \to \One_{\U_0}\) is \(\V\)-small.
  \item\label{small-domain-iff-small-map} If \(Y\) is \(\V\)-small, then a map
    \(f : X \to Y\) is \(\V\)-small if and only if \(X\) is.
  \end{enumerate}
\end{lemC}
\begin{proof}
  \ref{small-type-in-terms-of-small-map} Writing \(!_X\) for the map
  \(X \to \One_{\U_0}\) we have \(\fib_{!_X}(\star) \simeq X\).
  \ref{small-domain-iff-small-map} If \(X\) and \(Y\) are both
  \(\V\)\nobreakdash-small, witnessed respectively by \(\varphi : X' \simeq X\)
  and \(\psi : Y' \simeq Y\), then \(\fib_f(y)\) is \(\V\)\nobreakdash-small for
  every \(y : Y\), because
  \(\fib_f(y) \equiv \Sigma_{x : X}\pa*{f(x) = y} \simeq \Sigma_{x' :
    X'}\pa*{\psi^{-1}\pa*{f(\varphi(x'))} = \psi^{-1}(y)}\).
  Conversely, if \(f\) and \(Y\) are \(\V\)-small, then so is \(X\), because
  \cite[Lemma~4.8.2]{HoTTBook} tells us that
  \(X \simeq \Sigma_{y : Y}\fib_f(y)\).
\end{proof}

\begin{thm}\label{is-small-retract}
  Every section into a \(\V\)-small type is \(\V\)-small. In particular, its
  domain is \(\V\)-small. Hence, the \(\V\)-small types are closed under
  retracts.
\end{thm}
\begin{proof}
  We show that the domain is \(\V\)-small from which it follows that the section
  is \(\V\)-small by~\cref{small-maps-lemmas}\ref{small-domain-iff-small-map}.
  So suppose we have a section \(s : X \to Y\) with retraction \(r : Y \to X\)
  and that \(Y\) is \(\V\)-small.
  By~\cite[Lemma~3.6]{Shulman2016}, the endomap \(f \colonequiv r \circ s\) on
  \(Y\) is a quasi-idempotent~\cite[Definition~3.5]{Shulman2016}. Hence,
  \cite[Theorem~5.3]{Shulman2016} tells us that \(f\) can be split as
  \(Y \xrightarrow{r'} A \xrightarrow{s'} Y\) for some maps \(s'\) and \(r'\).
  Now \(X\) and \(A\) are equivalent as witnessed by the maps
  \(x \mapsto r'(s(x))\) and \(a \mapsto r(s'(a))\).
  Finally, we recall from the proof of~\cite[Theorem~5.3]{Shulman2016} that
  \(A \colonequiv \Sigma_{\sigma : \Nat \to Y}\Pi_{n : \Nat}\pa*{f(\sigma_{n+1})
    = \sigma_n}\) which is \(\V\)-small because \(Y\) is assumed to be.
\end{proof}

\begin{rem}\label{small-retract-improvement}
  In~\cite{FSCDversion} we had a weaker version of~\cref{is-small-retract} where
  we included the additional assumption that the section was an embedding. (Note
  that if every section is an embedding, then every type is a
  set~\cite[Remark~3.11(2)]{Shulman2016}, but that all sections into \emph{sets}
  are embeddings~\cite[\texttt{lc-maps-into-sets-are-embeddings}]{Escardo2019}.)
  We are grateful to the anonymous reviewer who proposed the above
  strengthening.
\end{rem}

\section{Set Quotients, Propositional Truncations and Set Replacement}
\label{sec:set-quotients-truncations-set-replacement}
We investigate the inter-definability and interaction of type universe levels of
propositional truncations and set quotients in the absence of propositional
resizing axioms. In particular, we will see that it is not so important if the
set quotient or propositional truncation lives in a higher universe. What is
paramount instead is whether the universal property applies to types in
arbitrary universes.
However, in some cases, like in~\cref{sec:small-suprema-of-ordinals}, it is
relevant whether set quotients are small and we show this to be equivalent to a
set replacement principle in~\cref{sec:set-replacement}.

We start by recalling (the universal property of) the propositional truncation,
which, borrowing terminology from category theory, we could also call the
\emph{subsingleton reflection} or \emph{propositional reflection}.

\begin{defi}[Propositional truncation, \(\squash{-}\)]\label{def:prop-trunc}
  A \emph{propositional truncation} of a type~\(X\), if it exists, is a
  proposition \(\squash{X}\) with a map \(\tosquash{-} : X \to \squash{X}\) such
  that every function \(f : X \to P\) to any proposition factors through
  \(\tosquash{-}\).
  \begin{center}
  \begin{tikzcd}
    X \ar[dr,"f"'] \ar[rr, "\tosquash{-}"] & & \squash{X} \ar[dl,dashed,"\bar{f}"] \\
    & P
  \end{tikzcd}
  \end{center}
\end{defi}

Some sources, \eg~\cite{HoTTBook}, also demand that the diagram above commutes
\emph{definitionally}: for every \(x : X\), we have
\(f(x) \equiv \bar{f}(\tosquash{x})\). Having definitional equalities has some
interesting consequences, such as being able to prove function
extensionality~\cite[Section~8]{KrausEtAl2017}.
We do not require definitional equalities, but notice that
we do have \(f(x) = \bar{f}(\tosquash{x})\) (up to an identification) for every
\(x : X\), as \(P\) is a subsingleton.
In particular it follows using function extensionality that \(\bar{f}\) is the
unique factorization.

Notice that if a propositional truncation exists, then it is unique up to unique
equivalence.

\begin{rem}\label{prop-trunc-universes}
Some remarks regarding universes are in order:
\begin{enumerate}[(i)]
\item In~\cref{def:prop-trunc}, the subsingleton \(P\) may live in an
  \emph{arbitrary} universe, regardless of the universe in which \(X\) sits. The
  importance of this will be revisited throughout this section and
  in~\cref{Voevodsky-prop-trunc-func} in particular.
\item In~\cref{def:prop-trunc}, we haven't specified in what universe
  \(\squash{X}\) should be. When adding propositional truncations as higher
  inductive types, one typically assumes that \(\squash{X} : \U\) if \(X : \U\),
  and indeed this is what we do in most of this paper. In this section,
  however, we will be more general and instead assume that
  \(\squash{X} : F(\U)\) where \(F\) is a (meta)function on universes, so that
  the above case is obtained by taking \(F\) to be the identity. We will also
  consider \(F(\U) = \U_1 \sqcup \U\) in the final subsection.
\end{enumerate}
\end{rem}

While in general propositional truncations may fail to exist in intensional
Martin-L\"of Type Theory, it is possible to construct a propositional truncation
of some types in specific cases~\cite[Section~3.1]{EscardoXu2015}. A particular
example~\cite[Corollary~4.4]{KrausEtAl2017} is for a type \(X\) with a weakly
constant (viz.\ any of its values are equal) endofunction \(f\): the
propositional truncation of \(X\) can be constructed as
\(\Sigma_{x : X}(x=f(x))\), the type of fixed points of \(f\).

We review an approach by Voevodsky, who used resizing rules, to
constructing propositional truncations in general in the next section.

\subsection{Propositional Truncations and Propositional Resizing}
Voevodsky~\cite{Voevodsky2011} introduced propositional resizing rules in order
to construct propositional
truncations~\cite[Section~2.4]{PelayoVoevodskyWarren2015}. Here we review
Voevodsky's construction, paying special attention to the universes involved.

\emph{NB.\ We do not assume the availability of propositional truncations in
this section.}

\begin{defi}[Voevodsky propositional truncation, \(\squashVV{X}\)]
  The \emph{Voevodsky propositional truncation} \(\squashVV{X}\) of a type
  \(X : \U\) is defined as
  \[
    \squashVV{X} \colonequiv \prod_{P : \U}
    \pa*{\issubsingleton(P) \to (X \to P) \to P}.
  \]
\end{defi}

Because of function extensionality, one can show that \(\squashVV{X}\) is indeed
a proposition for every type \(X\). Moreover, we have a map
\(\tosquashVV{-} : X \to \squashVV{X}\) given by
\(\tosquashVV{x} \colonequiv (P,i,f) \mapsto f(x)\).

Observe that \(\squashVV{X} : \U^+\), so using the notation
from~\cref{prop-trunc-universes}, we have \(F(\U) = \U^+\).
However, as we will argue for set quotients, it does not matter so much where
the truncated proposition lives; it is much more important that we can eliminate
into subsingletons in arbitrary universes, \ie\ that \(\squashVV{-}\) satisfies
the right universal property.
Given \(X : \U\) and a map \(f : X \to P\) to a proposition \(P : \U\) with
\(i : \issubsingleton(P)\), we have a map \(\squashVV{X} \to P\) given as
\(\Phi \mapsto \Phi(P,i,f)\).
However, if the proposition \(P\) lives in some other universe \(\V\), then we
seem to be completely stuck. To clarify this, we consider the example of
functoriality.

\begin{exa}\label{Voevodsky-prop-trunc-func}
  If we have a map \(f : X \to Y\) with \(X : \U\) and \(Y : \U\), then we get a
  lifting simply by precomposition, \ie\ we define
  \(\tosquashVV{f} : \squashVV{X} \to \squashVV{Y}\) by
  \(\tosquashVV{f}(\Phi) \colonequiv (P,i,g) \mapsto \Phi(P,i,g \circ f)\).
  But obviously, we also want functoriality for maps \(f : X \to Y\) with
  \(X : \U\) and \(Y : \V\), but this is impossible with the above definition of
  \(\tosquashVV{f}\), because for \(\squashVV{X}\) we are considering
  propositions in \(\U\), while for \(\squashVV{Y}\) we are considering
  propositions in \(\V\).

  In particular, even if the types \(X : \U\) and \(Y : \V\) are equivalent,
  then it does not seem possible to construct an equivalence between
  \(\squashVV{X}\) and \(\squashVV{Y}\). This issue also comes up if one tries
  to prove that the map \(\tosquashVV{-} : X \to \squashVV{X}\) is a
  surjection~\cite[Section~3.34.1]{Escardo2019}.
\end{exa}

\begin{propC}[{\cite[Theorem~3.8]{KrausEtAl2017}}]
  \label{prop-trunc-implies-resizing-of-VV-trunc}
  If our type theory has propositional truncations with \(\squash{X} : \U\)
  whenever \(X : \U\), then \(\squashVV{X}\) is \(\U\)-small.
\end{propC}
\begin{proof}
  We will show that \(\squash{X}\) and \(\squashVV{X}\) are logically equivalent
  (\ie\ we have maps in both directions), which suffices, because both types
  are subsingletons. We obtain a map \(\squash{X} \to \squashVV{X}\) by applying
  the universal property of \(\squash{X}\) to the map
  \(\tosquashVV{-} : X \to \squashVV{X}\). Observe that it is essential that the
  universal property allows for elimination into subsingletons in universes
  other than \(\U\), as \(\squashVV{X} : \U^+\). For the function in the other
  direction, simply note that \(\squash{X} : \U\), so that we can construct
  \(\squashVV{X} \to \squash{X}\) as
  \(\Phi \mapsto \Phi(\squash{X},i,\tosquash{-})\) where
  \(i\) witnesses that \(\squash{X}\) is a subsingleton.
\end{proof}
Thus, as is folklore in the univalent foundations community, we can view
higher inductive types as specific resizing axioms. But note that the converse
to the above proposition does not appear to hold, because even if
\(\squashVV{X}\) is \(\U\)-small, then it still wouldn't have the appropriate
universal property. This is because the definition of \(\squashVV{X}\) is a
dependent product over propositions in \(\U\) only, which now includes
\(\squashVV{X}\), but still misses propositions in other universes.
In the presence of resizing axioms, we could obtain the full universal property,
because we would have (equivalent copies of) all propositions in a single
universe:

\begin{prop}[see \eg~{\cite[Section~36.5]{Escardo2019}}]
  If \(\Propresizing{\U}{\U_0}\) holds for every universe \(\U\), then the
  Voevodsky proposition truncation satisfies the full universal property with respect
  to all types in all universes.
\end{prop}

\subsection{Set Quotients from Propositional Truncations}
\label{set-quotients-from-propositional-truncations}

In this section we assume to have propositional truncations with
\(\squash{X} : F(\U)\) when \(X : \U\) for some (meta)function~\(F\) on
universes. We will be mainly interested in \(F(\U) = \U\) and
\(F(\U) = \U_1 \sqcup \U\) for the reasons explained below.
We prove that we can construct set quotients using propositional
truncations. The construction is due to Voevodsky and also appears
in~\cite[Section~6.10]{HoTTBook} and \cite[Section~3.4]{RijkeSpitters2015}. %
However, while Voevodsky assumed propositional resizing rules in his
construction, the point of this section is to show that resizing is not needed
to prove the universal property of the set quotient, provided propositional
truncations are available.
Our proof follows our earlier Agda development~\cite{TypeTopologyQuotientLarge}
(see also~\cite[Section~3.37]{Escardo2019}) and is fully
formalized~\cite{TypeTopologyQuotientF}.

\subsubsection{Images and Surjections}
\label{sec:image-and-surjection}
It will be convenient to first state and prove two lemmas on images and
surjections.

\begin{defi}[Image, \(\image(f)\), surjection, corestriction]
  \hfill
  \begin{enumerate}
  \item The \emph{image} of a function \(f : X \to Y\) is defined as
    \( \image(f) \colonequiv \Sigma_{y : Y}\exists_{x : X}(f(x) = y) \).
  \item A function \(f : X \to Y\) is a \emph{surjection} if for every
    \(y : Y\), there exists some \(x : X\) such that \(f(x) = y\).
  \item The \emph{corestriction} of a function \(f : X \to Y\) is the function
    \(f : X \to \image(f)\) given by
    \(x \mapsto \pa*{f(x),\tosquash*{x,\refl}}\).
  \end{enumerate}
\end{defi}

\begin{rem}\label{image-universes}
  Note that if \(X : \U\) and \(Y : \V\) and \(f : X \to Y\), then
  \(\image(f) : \V \sqcup F(\U \sqcup \V)\), because
  \(\Sigma_{x : X}(f(x) = y) : \U \sqcup \V\) and \(\squash{-}\) takes types in
  \(\W\) to subsingletons in \(F(\W)\).
  In case \(F\) is the identity, then we obtain the simpler
  \(\image(f) : \U \sqcup \V\).
\end{rem}

\begin{lem}\label{corestriction-is-surjective}
  Every corestriction is surjective.
\end{lem}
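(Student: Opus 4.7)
The plan is to unfold the definitions of corestriction, surjection, image, and propositional truncation, and then carry out a routine $\Sigma$-type equality argument, using at the crucial step that the second component of an element of $\image(f)$ lives in a proposition.

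Fix a function $f : X \to Y$ with corestriction $f' : X \to \image(f)$, $x \mapsto \pa*{f(x),\tosquash*{x,\refl}}$. To prove $f'$ surjective, I would fix an arbitrary $(y,p) : \image(f)$, where $p : \exists_{x : X}(f(x) = y)$, and show $\exists_{x : X}\pa*{f'(x) = (y,p)}$. Since this goal is a proposition, I can apply the universal property of $\squash{-}$ to the witness $p$ and assume we are given a genuine pair $(x, q)$ with $x : X$ and $q : f(x) = y$.

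It then remains to exhibit the identification $f'(x) = (y, p)$ in $\image(f)$. By the characterization of equality in $\Sigma$-types, this reduces to supplying (a) an identification of the first components, which is simply $q : f(x) = y$, and (b) an identification of the transported second component with $p$ in the fibre $\exists_{x' : X}(f(x') = y)$. Obligation (b) is automatic, since that fibre is a propositional truncation and therefore a subsingleton. Having produced this equality, I conclude by forming $\tosquash*{x, \_\,}$ to inhabit the desired existential.

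There is no real obstacle here; the only subtlety worth flagging is the reliance on the universal property of $\squash{-}$ at subsingletons in an arbitrary universe (cf.\ \cref{prop-trunc-universes}(i)), because $\image(f)$ lives in $\V \sqcup F(\U \sqcup \V)$ (see \cref{image-universes}), which need not equal the universe of $X$. Thus the argument goes through exactly when propositional truncations eliminate into propositions in arbitrary universes, which is assumed throughout this subsection.
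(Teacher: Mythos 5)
Your argument is correct and is simply a careful unfolding of what the paper dismisses with ``By definition of the corestriction.'': given $(y,p):\image(f)$ you truncation-eliminate on $p$ (the goal being a proposition), and then the $\Sigma$-equality with $q$ on the first component and subsingleton-ness of the fibre on the second gives $f'(x)=(y,p)$. The remark about eliminating into propositions in arbitrary universes is a useful observation, though not a genuinely different route.
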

\begin{proof}
  By definition of the corestriction.
\end{proof}

\begin{lem}[Image induction,
  {\cite[\mkTTurllong{UF.ImageAndSurjection}{image-induction}]{TypeTopology}}]%
  \label{image-induction}
  For a surjective map \(f : X \to Y\), the following induction principle
  holds: for every prop-valued \(P : Y \to \W\), with \(\W\) an
  \emph{arbitrary} universe, if \(P(f(x))\) holds for every \(x : X\), then
  \(P(y)\) holds for every \(y : Y\).

  In the other direction, for any map \(f : X \to Y\), if the above induction
  principle holds for the specific family
  \({P(y) \colonequiv \exists_{x : X}(f(x) = y)}\), then \(f\) is a surjection.
\end{lem}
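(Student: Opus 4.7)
The plan is to prove both directions by straightforward appeal to the universal property of the propositional truncation from \cref{def:prop-trunc}, taking advantage of the fact that this universal property permits elimination into subsingletons in an \emph{arbitrary} universe.

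For the forward direction, suppose \(f : X \to Y\) is a surjection and let \(P : Y \to \W\) be a subsingleton-valued family with a proof \(h : \Pi_{x : X} P(f(x))\). Fix \(y : Y\); I want to produce an element of \(P(y)\). By surjectivity, we have \(\exists_{x : X}(f(x) = y)\), \ie\ an element of \(\squash*{\Sigma_{x : X}(f(x) = y)}\). Since \(P(y)\) is a subsingleton, the universal property of \(\squash{-}\) lets us factor through this truncation, so it suffices to give a map \(\Sigma_{x : X}(f(x) = y) \to P(y)\). Such a map sends \((x,e)\) to the transport of \(h(x) : P(f(x))\) along \(e : f(x) = y\), yielding the desired element of \(P(y)\). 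Crucially, this argument works for \(P\) valued in any universe \(\W\), because the universal property of \(\squash{-}\) allows elimination into subsingletons in arbitrary universes.

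For the converse, assume the induction principle holds for the specific family \(P(y) \colonequiv \exists_{x : X}(f(x) = y)\). Note that \(P\) is subsingleton-valued, since propositional truncations are subsingletons by definition, so the induction principle applies. To show \(f\) is a surjection, I need \(P(y)\) for every \(y : Y\). By the assumed induction principle, it suffices to verify \(P(f(x))\) for every \(x : X\); but \(\tosquash*{x,\refl}\) gives an inhabitant of \(\exists_{x' : X}(f(x') = f(x))\), completing the argument.

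There is no real obstacle here: both directions are immediate consequences of the defining universal property of the propositional truncation, together with transport (for the forward direction) and the witness \((x, \refl)\) (for the converse). The only subtlety worth highlighting in the write-up is that the forward direction requires eliminating into a subsingleton in a universe \(\W\) potentially different from the universes containing \(X\) and \(Y\), which is permitted by~\cref{prop-trunc-universes}(i).
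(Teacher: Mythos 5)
Your proof is correct and follows essentially the same route as the paper's: use surjectivity and the universal property of propositional truncation to eliminate into the subsingleton \(P(y)\) (you just make the transport step explicit, which the paper leaves implicit), and for the converse instantiate the induction principle at the given \(P\) with the witness \((x,\refl)\).
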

\begin{proof}
  Suppose that \(f : X \to Y\) is a surjection, let \(P : Y \to \W\) be
  subsingleton-valued and assume that \(P(f(x))\) holds for every \(x : X\).
  Now let \(y : Y\) be arbitrary. We are to prove that \(P(y)\) holds. Since
  \(f\) is a surjection, we have \(\exists_{x : X}(f(x) = y)\). But \(P(y)\) is
  a subsingleton, so we may assume that we have a specific \(x : X\) with
  \(f(x) = y\). But then \(P(y)\) must hold, because \(P(f(x))\) does by
  assumption.

  For the other direction, notice that if
  \(P(y) \colonequiv \exists_{x : X}(f(x) = y)\), then \(P(f(x))\) clearly holds
  for every \(x : X\). So by assuming that the induction principle applies, we
  get that \(P(y)\) holds for every \(y : Y\), which says exactly that \(f\) is
  a surjection.
\end{proof}

\subsubsection{Set Quotients}\label{sec:set-quotients}
We now construct set quotients using images and specialize image induction to
the set quotient.
\begin{defi}[Equivalence relation]
  An \emph{equivalence relation} on a type \(X\) is a binary type family
  \({\approx} : X \to X \to \V\) such that it is
  \begin{enumerate}[(i)]
  \item subsingleton-valued, \ie\ \(x \approx y\) is a subsingleton for
    every \(x,y : X\);
  \item reflexive, \ie\ \(x \approx x\) for every \(x : X\);
  \item symmetric, \ie\ \(x \approx y\) implies \(y \approx x\) for
    every \(x,y : X\);
  \item transitive, \ie\ the conjunction of \(x \approx y\) and \(y \approx z\)
    implies \(x \approx z\) for every \(x,y,z : X\).
  \end{enumerate}
\end{defi}

\begin{defi}[Set quotient, \(X/{\approx}\)]
  We define the \emph{set quotient} of \(X\) by \(\approx\) to be the type
  \(X/{\approx} \colonequiv \image (e_\approx)\) where
  \begin{align*}
    e_\approx : X &\to (X \to \Omega_{\V}) \\
    x &\mapsto \pa*{y \mapsto (x \approx y,p(x,y))}
  \end{align*}
  and \(p\) is the witness that \(\approx\) is subsingleton-valued.
\end{defi}

Of course, we should prove that \(X/{\approx}\) really is the quotient of \(X\)
by \(\approx\) by proving a suitable universal property. The following
definition and lemmas indeed build up to this. For the remainder of this
section, we will fix a type \(X : \U\) with an equivalence relation
\({\approx} : X \to X \to \V\).

\begin{rem}\label{quotient-universes}
  By~\cref{image-universes}, and because \(\Omega_{\V} : \V^+\), we have
  \(X/{\approx} : \T \sqcup F(\T)\) with \(\T \colonequiv \V^+ \sqcup \U\).
  In the particular case that \(F\) is the identity, we obtain the simpler
  \(X/{\approx} : \V^+ \sqcup \U\).
\end{rem}

\begin{lem}
  The quotient \(X/{\approx}\) is a set.
\end{lem}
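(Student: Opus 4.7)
The plan is to unfold the definition $X/{\approx} \equiv \image(e_\approx) \equiv \Sigma_{\phi : X \to \Omega_\V} \exists_{x : X}(e_\approx(x) = \phi)$ and observe that this type is built as a $\Sigma$-type whose first component lives in a set and whose second component is a proposition by construction (as a propositional truncation). Since being a set is closed under $\Sigma$-types whose base is a set and whose fibres are sets, and since every proposition is in particular a set, this will suffice.

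The key intermediate step is that $\Omega_\V$ is a set. This uses propositional and function extensionality: for any two propositions $P, Q : \Omega_\V$, an identification $P = Q$ is equivalent (via univalence for propositions, i.e.\ propositional extensionality) to a logical equivalence $P \leftrightarrow Q$, and the latter is a proposition because $P$ and $Q$ are. Hence $(P = Q)$ is a proposition, which says exactly that $\Omega_\V$ is a set. From this, using function extensionality, the function type $X \to \Omega_\V$ is a set, since pointwise the identity type $(\phi = \psi)$ is equivalent to $\Pi_{x : X}(\phi(x) = \psi(x))$, a product of propositions.

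Next, I would invoke the general fact that if $Y$ is a set and $P : Y \to \W$ is a subsingleton-valued family, then $\Sigma_{y : Y} P(y)$ is a set: identifications in the $\Sigma$-type reduce (via the usual characterization of $\Sigma$-equalities) to identifications in the base together with a transport in the fibres, and both components give propositions. Applying this to $Y \colonequiv (X \to \Omega_\V)$ and $P(\phi) \colonequiv \exists_{x : X}(e_\approx(x) = \phi)$ — which is a proposition because it is a propositional truncation — yields that $X/{\approx}$ is a set.

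I do not expect any genuine obstacle here; the argument is a standard bookkeeping exercise once the universe placements from \cref{quotient-universes} are accepted. The only mildly subtle point is remembering that $\Omega_\V$ being a set follows from \emph{propositional} extensionality (together with function extensionality), neither of which requires univalence in its full strength, so the conclusion is available within the tacit assumptions made at the start of \cref{sec:foundations-and-small-types}.
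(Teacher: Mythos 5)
Your proof is correct and is essentially the paper's own argument: the paper notes that $X/{\approx}$ is a subtype of the set $X \to \Omega_{\V}$ (via the embedding $\fst$) and that subtypes of sets are sets, which is precisely your observation that the $\Sigma$-type has a set base and subsingleton-valued fibres. You simply spell out in more detail the step (implicit in the paper's ``by function extensionality'') that $\Omega_{\V}$ is a set via propositional extensionality.
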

\begin{proof}
  Observe that %
  \(\pa*{X/{\approx}} \equiv \image(e_\approx)\) is a subtype of
  \(X \to \Omega_{\V}\) (as
  \({\fst : X/{\approx} \to (X \to \Omega_\V)}\) is an embedding), that
  \(X \to \Omega_{\V}\) is a set (by function extensionality) and that subtypes
  of sets are sets.
\end{proof}

\begin{defi}[\(\eta\)]
  The map \(\eta : X \to X/{\approx}\) is defined to be the corestriction of
  \(e_{\approx}\).
\end{defi}

Although, in general, the type \(X/{\approx}\) lives in another universe than
\(X\) (see~\cref{quotient-universes}), we can still prove the following
induction principle for (subsingleton-valued) families into \emph{arbitrary}
universes.

\begin{lem}[Set quotient induction]\label{set-quotient-induction}
  For every subsingleton-valued \(P : X/{\approx} \to \W\), with \(\W\)
  \emph{any} universe, if \(P(\eta(x))\) holds for every \(x : X\), then
  \(P(x')\) holds for every \(x' : X/{\approx}\).
\end{lem}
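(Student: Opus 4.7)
The plan is to reduce this directly to image induction (\cref{image-induction}). Since $X/{\approx}$ is defined as $\image(e_\approx)$ and $\eta : X \to X/{\approx}$ is by construction the corestriction of $e_\approx$, \cref{corestriction-is-surjective} immediately tells us that $\eta$ is a surjection. No further analysis of the structure of $X/{\approx}$ is needed at this stage.

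With surjectivity of $\eta$ in hand, the result follows by instantiating \cref{image-induction} with the surjection $\eta : X \to X/{\approx}$ and the given subsingleton-valued family $P : X/{\approx} \to \W$. The hypothesis that $P(\eta(x))$ holds for every $x : X$ is precisely the hypothesis of image induction, and the conclusion that $P(x')$ holds for every $x' : X/{\approx}$ is precisely its conclusion. Crucially, the universe $\W$ here is allowed to be arbitrary because the statement of \cref{image-induction} explicitly permits this; this in turn relies on the universal property of the propositional truncation admitting elimination into subsingletons in arbitrary universes, as emphasized in \cref{prop-trunc-universes}(i).

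There is essentially no obstacle to this argument: both pieces of work have already been isolated into the preparatory lemmas of \cref{sec:image-and-surjection}. The noteworthy design point — which could have looked like an obstacle — is that $X/{\approx}$ may live in a strictly higher universe than $X$ (see \cref{quotient-universes}), yet we still need to eliminate into families $P$ over $X/{\approx}$ valued in an \emph{arbitrary} universe $\W$ that is a priori unrelated to the universe of $X/{\approx}$. This generality is exactly what image induction was formulated to provide, so the apparent universe obstacle dissolves once the proof is routed through \cref{image-induction}.
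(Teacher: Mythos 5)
Your proof is correct and matches the paper's own proof exactly: observe that $\eta$ is surjective by \cref{corestriction-is-surjective} and then instantiate \cref{image-induction}. The additional commentary on universe levels accurately reflects why this reduction works without further fuss.
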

\begin{proof}
  The map \(\eta\) is surjective by~\cref{corestriction-is-surjective}, so
  that~\cref{image-induction} yields the desired result.
\end{proof}

\begin{defi}[Respect equivalence relation]
  A map \(f : X \to A\) \emph{respects the equivalence relation} \(\approx\) if
  \(x \approx y\) implies \(f(x) = f(y)\) for every \(x,y : X\).
\end{defi}

Observe that respecting an equivalence relation is property rather than data,
when the codomain \(A\) of the map \(f : X \to A\) is a set.

\begin{lem}\label{eta-respects-and-effective}
  The map \(\eta : X \to X/{\approx}\) respects the equivalence relation
  \({\approx}\) and the set quotient is \emph{effective}, \ie\ for every
  \(x,y : X\), we have \(x \approx y\) if and only if \(\eta(x) = \eta(y)\).
\end{lem}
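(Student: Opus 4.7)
The plan is to reduce the statement $\eta(x) = \eta(y)$ to a pointwise equivalence of propositions $x \approx z \Leftrightarrow y \approx z$ for every $z : X$, and then exploit the axioms of an equivalence relation. First I would note that since $\fst : X/{\approx} \to (X \to \Omega_{\V})$ is an embedding (as it is the first projection of an image, which is a $\Sigma$-type whose second component is a proposition), we have $\eta(x) = \eta(y)$ if and only if $e_\approx(x) = e_\approx(y)$. By function extensionality, this in turn is equivalent to having $(x \approx z, p(x,z)) = (y \approx z, p(y,z))$ in $\Omega_{\V}$ for every $z : X$, and by propositional extensionality this amounts to the logical equivalence $x \approx z \Leftrightarrow y \approx z$ for every $z$.

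For the respect part, assume $x \approx y$. To show $\eta(x) = \eta(y)$, by the reduction above it suffices to prove $x \approx z \Leftrightarrow y \approx z$ for every $z : X$. The forward direction follows from symmetry and transitivity: from $y \approx x$ (by symmetry) and $x \approx z$ we obtain $y \approx z$. The reverse direction is analogous.

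For effectiveness, the forward implication is exactly the respect property just established. For the reverse implication, assume $\eta(x) = \eta(y)$. By the reduction, we obtain $x \approx z \Leftrightarrow y \approx z$ for every $z$; specializing to $z \colonequiv y$ gives $x \approx y \Leftrightarrow y \approx y$, and since reflexivity provides $y \approx y$, we conclude $x \approx y$.

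The argument is essentially routine once the key reduction is in place; the only mildly subtle step is justifying that equality in $X/{\approx}$ coincides with pointwise propositional equality of the representing maps into $\Omega_{\V}$, which relies on the fact that $X/{\approx}$ was defined as an image (so that the first projection is an embedding) together with function extensionality and propositional extensionality. No appeal to set quotient induction is needed here, because both $x \approx y$ and $\eta(x) = \eta(y)$ refer directly to elements of $X$.
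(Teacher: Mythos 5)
Your proposal is correct and follows essentially the same route as the paper: reduce $\eta(x) = \eta(y)$ to the pointwise logical equivalence $\forall_z\,(x \approx z \iff y \approx z)$ via the embedding of the image into $X \to \Omega_{\V}$ together with function and propositional extensionality, then close the loop using reflexivity, symmetry, and transitivity. The only cosmetic difference is which instance of $z$ you specialize to in the effectiveness direction (you use $z \colonequiv y$ with reflexivity alone, the paper's phrasing suggests $z \colonequiv x$ with reflexivity and symmetry), which is immaterial.
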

\begin{proof}
  By definition of the image and function extensionality, we have for every
  \(x,y : X\) that \(\eta(x) = \eta(y)\) holds if and only if
  \begin{equation}
    \forall_{z : X}\pa*{x \approx z \iff y \approx z}
    \tag{\(\ast\)}\label{eta-equality}
  \end{equation}
  holds. If~\eqref{eta-equality} holds, then so does \(x \approx y\) by
  reflexivity and symmetry of the equivalence relation. Conversely, if
  \(x \approx y\) and \(z : X\) is such that \(x \approx z\), then
  \(y \approx z\) by symmetry and transitivity; and similarly if \(z : X\) is
  such that \(y \approx z\). Hence, \eqref{eta-equality} holds if and only if
  \(x \approx y\) holds.  Thus, \(\eta(x) = \eta(y)\) if and only if
  \(x \approx y\), as desired.
\end{proof}

The universal property of the set quotient states that the map
\(\eta : X \to X/{\approx}\) is the universal function to a set preserving the
equivalence relation. We can prove it using only~\cref{set-quotient-induction}
and~\cref{eta-respects-and-effective}, without the need to inspect the definition
of the quotient.

\begin{thm}[Universal property of the set quotient]
  \label{set-quotient-universal-property}
  For every \emph{set} \(A : \W\) in \emph{any} universe~\(\W\) and function
  \(f : X \to A\) respecting the equivalence relation, there is a unique
  function \(\bar{f} : X/{\approx} \to A\) such that the diagram
  \begin{center}
    \begin{tikzcd}
      X \ar["f"', dr] \ar["\eta", rr] & & X/{\approx} \ar["\bar{f}",dl,dashed] \\
      & A
    \end{tikzcd}
  \end{center}
  commutes.
\end{thm}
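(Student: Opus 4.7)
The plan is to establish uniqueness and existence separately, relying only on set-quotient induction (\cref{set-quotient-induction}) and effectivity (\cref{eta-respects-and-effective}), as advertised, so that the argument is independent of the particular construction of $X/{\approx}$ as an image.

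For \emph{uniqueness}, suppose $\bar{f}$ and $\bar{g}$ both factor $f$ through $\eta$. The family $x' \mapsto (\bar{f}(x') = \bar{g}(x'))$ is subsingleton-valued since $A$ is a set, and it holds at every $\eta(x)$ because both sides equal $f(x)$. Set-quotient induction yields pointwise equality, and function extensionality concludes $\bar{f} = \bar{g}$.

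For \emph{existence}, for each $x' : X/{\approx}$ I consider the type
\[
  T(x') \colonequiv \Sigma_{a : A}\, \exists_{x : X}\bigl((\eta(x) = x') \times (f(x) = a)\bigr)
\]
and show it is a singleton; then its unique first projection defines $\bar{f}(x') : A$, with commutativity $\bar{f}(\eta(x_0)) = f(x_0)$ witnessed by the inhabitant $(f(x_0),\, \tosquash*{(x_0, \refl, \refl)})$ of $T(\eta(x_0))$. To show $T(x')$ is a subsingleton, given $(a, e), (a', e') \in T(x')$ it suffices to prove $a = a'$, which is a proposition since $A$ is a set; we may therefore pass under the $\exists$-quantifiers in $e$ and $e'$ to obtain $x, y : X$ with $\eta(x) = \eta(y)$, from which effectivity gives $x \approx y$ and the hypothesis that $f$ respects $\approx$ yields $a = f(x) = f(y) = a'$. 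Having shown $T$ is subsingleton-valued, set-quotient induction reduces inhabitedness of $T(x')$ to the case $x' = \eta(x_0)$, which is immediate.

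The delicate point, and the reason we need not inspect the construction of $X/{\approx}$, is that all appeals to representatives in $X$ happen inside proofs of goals that are themselves propositions, so the choices of witnesses extracted from the $\exists$-truncations are harmless; this is precisely where the hypothesis that $A$ is a set is indispensable, both to make $\bar{f}$ well-defined and to license the use of set-quotient induction to close the argument.
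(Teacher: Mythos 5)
Your proof is correct and follows essentially the same route as the paper: the same auxiliary type family $T(x') = \Sigma_{a:A}\exists_{x:X}\bigl((\eta(x)=x')\times(f(x)=a)\bigr)$, the same use of set-quotient induction for inhabitedness and for uniqueness, and the same appeal to effectivity plus the fact that $f$ respects $\approx$ for the subsingleton claim. The one small deviation is that you prove $T(x')$ is a subsingleton \emph{directly} for arbitrary $x'$ by eliminating the two existentials and comparing representatives, whereas the paper first applies set-quotient induction to reduce to $x'=\eta(x)$ before running the same argument; your route is a mild simplification and removes one use of the induction principle.
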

\begin{proof}[Proof~\cite{TypeTopologyQuotientF}]
  Let \(A : \W\) be a set and \(f : X \to A\) respect the equivalence relation.
  The~following auxiliary type family over \(X/{\approx}\) will be at the heart
  of our proof:
  \[
    B(x') \colonequiv
    \Sigma_{a : A}\exists_{x : X}\pa{(\eta(x) = x') \times (f(x) = a)}.
  \]
  \begin{clmnn}
    The type \(B(x')\) is a subsingleton for every \(x' : X/{\approx}\).
  \end{clmnn}
  \begin{proof}[Proof of claim]
    By function extensionality, the type expressing that \(B(x')\) is a
    subsingleton for every \(x' : X/{\approx}\) is itself a subsingleton. So by
    set quotient induction, it suffices to prove that \(B(\eta(x))\) is a
    subsingleton for every \(x : X\). So assume that we have
    \((a,p) , (b,q) : B(\eta(x))\). It suffices to show that \(a = b\). The
    elements \(p\) and \(q\) witness
    \[
      \exists_{x_1 : X}\pa{(\eta(x_1) = \eta(x)) \times (f(x_1) = a)}
    \]
    and
    \[
      \exists_{x_2 : X}\pa{(\eta(x_2) = \eta(x)) \times (f(x_2) = b)},
    \]
    respectively. By~\cref{eta-respects-and-effective} and the fact that \(f\)
    respects the equivalence relation, we obtain \(f(x) = a\) and \(f(x) = b\)
    and hence the desired \(a = b\).
  \end{proof}
  Next, we define \(k : \Pi_{x : X} B(\eta(x))\) by
  \(k(x) = \pa*{f(x),\tosquash*{x,\refl,\refl}}\). By set quotient induction and
  the claim, the function \(k\) induces a dependent map
  \(\bar{k} : \Pi_{\pa*{x' : X/{\approx}}} B(x')\).

  We then define the (nondependent) function \(\bar{f} : X/{\approx} \to A\) as
  \({\fst} \circ {\bar{k}}\). We proceed by showing that
  \(\bar{f} \circ \eta = f\). By function extensionality, it suffices to prove
  that \(\bar{f}(\eta(x)) = f(x)\) for every \(x : X\). But notice that:
  \begin{align*}
    \bar{f}(\eta(x)) &\equiv \fst(\bar{k}(\eta(x))) \\
                     &= \fst(k(x)) &\text{(since \(\bar{k}(\eta(x)) = k(x)\)
                                           because of the claim)} \\
    &\equiv f(x).
  \end{align*}

  Finally, we wish to show that \(\bar{f}\) is the unique such function, so
  suppose that \(g : X/{\approx} \to A\) is another function such that
  \(g \circ \eta = f\). By function extensionality, it suffices to prove that
  \(g(x') = \bar{f}(x')\) for every \(x' : X/{\approx}\), which is a
  subsingleton as \(A\) is a set. Hence, set quotient induction tells us that it
  is enough to show that \(g(\eta(x)) = \bar{f}(\eta(x))\) for every \(x : X\),
  but this holds as both sides of the equation are equal to \(f(x)\).
\end{proof}

\begin{rem}[\cf\ {\cite[Section~3.21]{Escardo2019}}]
  In univalent foundations, some attention is needed in phrasing unique
  existence, so we pause to discuss the phrasing
  of~\cref{set-quotient-universal-property} here.
  Typically, if we wish to express unique existence of an element \(x : X\)
  satisfying \(P(x)\) for some type family \(P : \U \to \V\), then we should
  phrase it as \(\issingleton(\Sigma_{x : X}P(x))\), where
  \(\issingleton(Y) \colonequiv Y \times \issubsingleton(Y)\). That is, we
  require that there is a unique \emph{pair} \((x,p) : \Sigma_{x : X}P(x)\).
  This becomes important when the type family \(P\) is not
  subsingleton-valued. However, if \(P\) is subsingleton-valued, then it is
  equivalent to the traditional formulation of unique existence: \ie\ that
  there is an \(x : X\) with \(P(x)\) such that every \(y : X\) with \(P(y)\) is
  equal to \(x\).
  This happens to be the situation in~\cref{set-quotient-universal-property},
  because of function extensionality and the fact that \(A\) is a set.
\end{rem}

We stress that although the set quotient increases universe levels,
see~\cref{quotient-universes}, it does satisfy the appropriate universal
property, so that resizing is not needed.

Having small set quotients is closely related to propositional resizing, as we
show now.

\begin{prop}
  Suppose that \(\squash{-}\) does not increase universe levels, \ie\ in the
  notation of~\cref{prop-trunc-universes}, the function \(F\) is the identity.
  \begin{enumerate}[(i)]
  \item\label{quotient-resizing-1} %
    If \(\Omegaresizing{\V}{\U}\) holds for universes \(\U\) and \(\V\), then
    the set quotient \(X/{\approx}\) is \(\U\)-small for any type \(X : \U\) and
    any \(\V\)-valued equivalence relation.
  \item\label{quotient-resizing-2} %
    Conversely, if the set quotient \(\Two/{\approx}\) is \(\U\)-small for
    every \(\V\)-valued equivalence relation on \(\Two\), then
    \(\Propresizing{\V}{\U}\) holds.
  \end{enumerate}
\end{prop}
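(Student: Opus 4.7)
For (i), the plan is to exploit that $X/{\approx} \equiv \image(e_\approx)$ is obtained from the single function $e_\approx : X \to (X \to \Omega_\V)$, so that it suffices to resize the codomain $\Omega_\V$ into $\U$ in order to place $X/{\approx}$ (up to equivalence) in $\U$. Concretely, $\Omegaresizing{\V}{\U}$ yields an equivalence $\phi : \Omega' \simeq \Omega_\V$ with $\Omega' : \U$, and postcomposition with $\phi^{-1}$ induces an equivalence $(X \to \Omega_\V) \simeq (X \to \Omega')$. Since images are preserved under equivalences of the codomain, $X/{\approx}$ is equivalent to the image of the induced map $X \to (X \to \Omega')$. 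This latter image lives in $\U$: the codomain $X \to \Omega'$ is in $\U$ (using $X : \U$ and $\Omega' : \U$), its identity types are in $\U$ too, and the hypothesis that $F$ is the identity keeps the propositional truncations of the fibres in $\U$ as well. Thus $X/{\approx}$ is $\U$-small.

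For (ii), the plan is to encode an arbitrary proposition $P : \V$ as a carefully chosen equivalence relation on $\Two$ and then extract the smallness of $P$ from effectiveness. Define ${\approx} : \Two \to \Two \to \V$ by taking both $0 \approx 1$ and $1 \approx 0$ to be $P$, and both $0 \approx 0$ and $1 \approx 1$ to be $\One_\V$. Reflexivity, symmetry and transitivity are then immediate by a finite case analysis, using that $P$ is a proposition to close the cases involving $P \to P \to P$. Effectiveness (\cref{eta-respects-and-effective}) yields $(\eta(0) = \eta(1)) \simeq P$. Now, assuming $\Two/{\approx}$ is $\U_0$-small, it is in particular locally $\U_0$-small, so the identity type $\eta(0) = \eta(1)$ is $\U_0$-small, and hence so is $P$.

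I do not expect either direction to require a genuinely delicate step. In (i) the only care needed is bookkeeping of universes and the timely use of $F = \id$ so that the image, which is a $\Sigma$-type of truncations, remains in $\U$. In (ii) the only nontrivial point is the invocation of effectiveness to reduce smallness of $P$ to smallness of the single identity type $\eta(0) = \eta(1)$ in the quotient; beyond that, the argument is essentially the standard trick of using $\Two$ to witness an arbitrary proposition.
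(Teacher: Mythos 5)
Your proof is correct and follows essentially the same strategy as the paper in both parts; in (i) you spell out the universe bookkeeping that the paper's proof leaves implicit, and in (ii) you use an equivalent relation on \(\Two\) and extract \(P\) from effectiveness of the quotient (\cref{eta-respects-and-effective}) rather than from the paper's observation that \(\issubsingleton\pa*{\Two/{\approx_P}}\) is logically equivalent to \(P\). Both extractions are minor variants of the same encoding trick and work for the same reason: smallness of the quotient type propagates to smallness of a subsingleton derived from it.
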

\begin{proof}\hfill%
  \begin{enumerate}[(i)]
  \item If we have \(\Omegaresizing{\V}{\U}\), then \(\Omega_{\V}\) is
    \(\U\)-small, so that \(X/{\approx} \equiv \image(e_{\approx})\) is
    \(\U\)-small too when \(X : \U\) and \({\approx}\) is \(\V\)-valued.
  \item Let \(P : \V\) be any proposition and consider the \(\V\)-valued
    equivalence relation \( x \approx_P y \colonequiv (x = y) \vee P \) on
    \(\Two\). Notice that
    \[
      \pa*{\Two/{\approx_P}} \text{ is a subsingleton} \iff P \text{ holds},
    \]
    so if \(\Two/{\approx_P}\) is \(\U\)-small, then so is the type
    \(\issubsingleton\pa*{\Two/{\approx_P}}\) and therefore \(P\).
    \qedhere
  \end{enumerate}
\end{proof}

\subsection{Propositional Truncations from Set Quotients}
\label{sec:propositional-truncations-from-set-quotients}
The converse, constructing propositional truncations from set quotients, is more
straightforward, although we must pay some attention to the universes involved
in order to get an exact match.

\begin{defi}[Existence of set quotients]\label{existence-of-set-quotients}
  We say that \emph{set quotients exist} if for every type \(X\) and equivalence
  relation \({\approx}\) on \(X\), we have a set \(X/{\approx}\) with a map
  \(\eta : X \to X/{\approx}\) that respects the equivalence relation such that
  the universal property set out in~\cref{set-quotient-universal-property} is
  satisfied.
\end{defi}

\begin{thm}
  Any set quotient satisfies the induction principle
  of~\cref{set-quotient-induction}, \ie\ the induction principle is implied by
  the universal property of the set quotient.
\end{thm}
\begin{proof}[Proof~{\cite{TypeTopologyQuotient}}]
  Suppose that \(P : X/{\approx} \to \W\) is a proposition-valued type-family
  over the set quotient \(X/{\approx}\) and that we have
  \(\rho : \Pi_{x : X}P(\eta(x))\). We write
  \(S \colonequiv \Sigma_{x' : X/{\approx}}\,P(x')\) and define the map
  \(f : X \to S\) by \(f(x) \colonequiv \pa*{\eta(x) , \rho(x)}\). Note that
  \(f\) respects the equivalence relation since \(\eta\)~does and \(P\) is
  proposition-valued.
  Moreover, \(S\) is a set, because subtypes of sets are sets and the quotient
  \(X/{\approx}\) is a set by assumption.
  Hence, by the universal property, \(f\) induces a map
  \(\bar{f} : X/{\approx} \to S\) such that \(\bar{f} \circ \eta = f\).
  We claim that \(\bar{f}\) is a section of \(\fst : S \to X/{\approx}\).
  Note that this would finish the proof, because if we have
  \(e : \Pi_{x' : X/{\approx}}\,\fst\pa*{\bar{f}(x')} = x'\), then we obtain
  \(P(x')\) for every \(x'\) by transporting \(\snd\pa*{\bar{f}(x')}\) along
  \(e(x')\).
  But \(\bar{f}\) must be a section of~\(\fst\), because we can take both
  \({\fst} \circ {\bar{f}}\) and \(\id\) for the dashed map in the commutative
  diagram
  \begin{center}
    \begin{tikzcd}
      X \ar["\eta"', dr] \ar["\eta", rr] & & X/{\approx} \ar[dl,dashed] \\
      & X/{\approx}
    \end{tikzcd}
  \end{center}
  since \({\fst} \circ {\bar{f}} \circ {\eta} = {\fst} \circ {f} = \eta\), so
  \({\fst} \circ {\bar{f}}\) and \(\id\) must be equal by the universal property
  of the set quotient.
\end{proof}

\begin{thm}\label{set-quotients-give-propositional-truncations}
  If set quotients exist, then every type has a propositional truncation.
\end{thm}
\begin{proof}[Proof~{\cite{TypeTopologyQuotient}}]
  Let \(X : \U\) be any type and consider the \(\U_0\)-valued equivalence
  relation \(x \approx_{\One} y \colonequiv \One\).
  To see that \(X/{\approx_{\One}}\) is a subsingleton, note that by set
  quotient induction it suffices to prove \(\eta(x) = \eta(y)\) for every
  \(x,y : X\).
  But \(x \approx_{\One} y\) for every \(x,y : X\), and \(\eta\) respects the
  equivalence relation, so this is indeed the case.
  Now if \(P : \V\) is any subsingleton and \(f : X \to P\) is any map, then
  \(f\) respects the equivalence relation \({\approx_{\One}}\) on~\(X\), simply
  because \(P\) is a subsingleton. Thus, by the universal property of the
  quotient, we obtain the desired map \(\bar{f} : X/{\approx_{\One}} \to P\) and
  hence, \(X/{\approx_{\One}}\) has the universal property of the propositional
  truncation.
\end{proof}

\begin{rem}
  Because the set quotients constructed using the propositional truncation live
  in higher universes, we embark on a careful comparison of universes. Suppose
  that propositional truncations of types \(X : \U\) exist and that
  \(\squash{X} : F(\U)\). Then by~\cref{quotient-universes}, the set quotient
  \(X/{\approx_{\One}}\) in the proof above lives in the universe
  \((\U_1 \sqcup \U) \sqcup F(\U_1 \sqcup \U)\).

  In particular, if \(F\) is the identity and the propositional truncation of
  \(X : \U\) lives in \(\U\), then the quotient \(X/{\approx_{\One}}\) lives in
  \(\U_1 \sqcup \U\), which simplifies to \(\U\) whenever \(\U\) is at least
  \(\U_1\). In other words, the universes of \(\squash{X}\) and
  \(X/{\approx_{\One}}\) match up for types \(X\) in every universe,
  \emph{except} the first universe \(\U_0\).

  If we always wish to have \(X/{\approx_{\One}}\) in the same universe as
  \(\squash{X}\), then we can achieve this by assuming
  \(F(\V) \colonequiv \U_1 \sqcup \V\), which says that the propositional
  truncations stay in the same universe, \emph{except} when the type is in the
  first universe \(\U_0\) in which case the truncation will be in the second
  universe \(\U_1\).
\end{rem}

\begin{thm}
  All set quotients are effective, \ie~\(\eta(x) = \eta(y)\) implies
  \(x \approx y\).
\end{thm}
\begin{proof}
  If we have set quotients, then we have propositional truncations
  by~\cref{set-quotients-give-propositional-truncations} which we can use to
  construct effective set quotients
  following~\cref{set-quotients-from-propositional-truncations}.
  But any two set quotients of a type by an equivalence relation must be
  equivalent, so the original set quotients are effective too.
\end{proof}

\subsection{Set Replacement}\label{sec:set-replacement}
In this section, we return to our running assumption that universes are closed
under propositional truncations, \ie\ the function \(F\) above is assumed to be
the identity. We study the equivalence of a set replacement principle and the
existence of small set quotients. These principles will find application
in~\cref{sec:small-suprema-of-ordinals}.

\begin{defi}[Set replacement, {\cite[\mkTTurl{UF.Size}]{TypeTopology}}]
  The \emph{set replacement} principle asserts that the image of a map
  \(f : X \to Y\) is \(\U\sqcup\V\)-small if \(X\) is \(\U\)-small and \(Y\) is
  locally \(\V\)-small set.
\end{defi}

In particular, if \(\U\) and \(\V\) are the same, then the image is
\(\U\)-small.
The name ``set replacement'' is inspired by~\cite[Section~2.19]{BezemEtAl2022},
but is different in two ways: In \cite{BezemEtAl2022}, replacement is not
restricted to maps into sets, and the universe parameters \(\U\)~and~\(\V\) are
taken to be the same.
Rijke~\cite{Rijke2017} shows that the replacement of~\cite{BezemEtAl2022} is
provable in the presence of a univalent universe closed under pushouts.

We show that set replacement is logically equivalent to having small set
quotients, where the latter means that the quotient of a type \(X : \U\) by a
\(\V\)-valued equivalence relation lives in \(\U \sqcup \V\).

\begin{defi}[Existence of small set quotients]\label{existence-of-small-set-quotients}
  We say that \emph{small set quotients exist} if set quotients exists in the
  sense of~\cref{existence-of-set-quotients}, and moreover, the quotient
  \(X/{\approx}\) of a type \(X : \U\) by a \(\V\)-valued equivalence relation
  lives in \(\U \sqcup \V\).
\end{defi}

Note that we would get small set quotients if we added set quotients as a
primitive higher inductive type. Also, if one assumes \(\Omegaresizingalt{\V}\),
then the construction of set quotients in~\cref{sec:set-quotients} yields a
quotient \({X/{\approx}}\) in \({\U \sqcup \V}\) when \(X : \U\) and
\({\approx}\) is a \(\V\)-valued equivalence relation on \(X\).

\begin{thm}
  Set replacement is logically equivalent to the existence of small set
  quotients.
\end{thm}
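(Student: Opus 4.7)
The plan is to prove each direction separately, using the image-based construction of set quotients from~\cref{sec:set-quotients} as the bridge between the two principles.

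\emph{Set replacement implies small set quotients.} Given a type $X : \U$ with a $\V$-valued equivalence relation $\approx$, recall that the construction of~\cref{sec:set-quotients} defines $X/{\approx} \colonequiv \image(e_{\approx})$ where $e_{\approx} : X \to (X \to \Omega_{\V})$. The codomain $X \to \Omega_{\V}$ is a set (since $\Omega_{\V}$ is a set by propositional extensionality) and is locally $(\U \sqcup \V)$-small, because by function extensionality equality of functions reduces pointwise to equality in $\Omega_{\V}$, which is locally $\V$-small. Applying set replacement, we conclude that $\image(e_{\approx})$ is $(\U \sqcup \V)$-small. Transporting the universal property of~\cref{set-quotient-universal-property} along an equivalence with its small copy then yields small set quotients in the sense of~\cref{existence-of-small-set-quotients}.

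\emph{Small set quotients imply set replacement.} Suppose $X$ is $\U$-small and $Y$ is a locally $\V$-small set, and let $f : X \to Y$. By local $\V$-smallness, we may replace each proposition $f(x) = f(x')$ by an equivalent one in $\V$, obtaining a genuinely $\V$-valued equivalence relation $\sim$ on $X$ with $x \sim x' \iff f(x) = f(x')$. Small set quotients then provide a $(\U \sqcup \V)$-small quotient $X/{\sim}$, and it remains to exhibit an equivalence $X/{\sim} \simeq \image(f)$.

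To construct this equivalence, note that the corestriction $X \to \image(f)$ respects $\sim$, so the universal property of~\cref{set-quotient-universal-property} yields a map $\phi : X/{\sim} \to \image(f)$ factoring the corestriction through $\eta$. Since the corestriction is surjective by~\cref{corestriction-is-surjective}, so is $\phi$. Both the quotient and the image are sets, so it suffices to show $\phi$ is injective, which follows from effectiveness (\cref{eta-respects-and-effective}): $\eta(x) = \eta(x')$ iff $x \sim x'$ iff $f(x) = f(x')$ iff the corresponding elements of $\image(f)$ coincide. A surjective embedding between sets is an equivalence, whence $\image(f) \simeq X/{\sim}$ is $(\U \sqcup \V)$-small. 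The main subtlety throughout is careful universe bookkeeping: in the forward direction the codomain $X \to \Omega_{\V}$ is only locally $(\U \sqcup \V)$-small rather than locally $\V$-small, so one must instantiate set replacement accordingly; in the reverse direction the equivalence relation must be coerced from $Y$-valued equalities to a literally $\V$-valued relation via local $\V$-smallness. Everything else is routine given the properties of the image-based quotient already established.
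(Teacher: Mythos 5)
Your argument is correct and follows essentially the same route as the paper: both directions go through the image-based quotient construction of~\cref{sec:set-quotients}, and your version is in fact slightly more explicit in exhibiting the equivalence \(X/{\sim} \simeq \image(f)\) via the universal property, effectiveness, and the fact that surjective embeddings between sets are equivalences, where the paper merely says this is straightforward. One small omission in the reverse direction: the definition of small set quotients (\cref{existence-of-small-set-quotients}) applies to a type literally in \(\U\), whereas you only assume \(X\) is \(\U\)-small, so you should first transport along an equivalence \(X' \simeq X\) with \(X' : \U\) and form the quotient of \(X'\) — the paper does this explicitly. This is a routine fix and does not affect the substance of the argument.
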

\begin{proof}[Proof~{\cite{TypeTopologyQuotient,TypeTopologyQuotientReplacement}}]
  Suppose set replacement is true and that a type \(X : \U\) and a \(\V\)-valued
  equivalence relation~\({\approx}\) are given. Using the construction laid out
  in~\cref{sec:set-quotients}, we construct a set quotient \(X/{\approx}\) in
  \(\U \sqcup \V^+\) as the image of a map \(X \to (X \to \Omega_{\V})\). But by
  propositional extensionality \(\Omega_{\V}\) is locally \(\V\)-small and by
  function extensionality so is \(X \to \Omega_{\V}\). Hence, \(X/{\approx}\) is
  \((\U \sqcup \V)\)-small by set replacement, so \(X/{\approx}\) is equivalent to
  a type \(Y : \U\sqcup\V\).  It is then straightforward to show that \(Y\)
  satisfies the properties of the set quotient as well, finishing the proof of one
  implication.

  Conversely, let \(f : X \to Y\) be a map from a \(\U\)-small type to a locally
  \(\V\)-small set. Since \(X\) is \(\U\)-small, we have \(X' : \U\) such that
  \(X' \simeq X\). And because \(Y\) is locally \(\V\)-small, we have a
  \(\V\)-valued binary relation \({=_{\V}}\) on \(Y\) such that
  \((y =_{\V} y') \simeq (y = y')\) for every \(y,y' : Y\).
  We now define the \(\V\)-valued equivalence relation \({\approx}\) on \(X'\)
  by \((x \approx x') \colonequiv \pa*{f'(x) =_{\V} f'(x')}\), where
  \(f'\) is the composite \(X' \simeq X \xrightarrow{f} Y\).
  By assumption, the quotient \(X'/{\approx}\) lives in \(\U \sqcup \V\). But it
  is straightforward to work out that \(\image(f)\) is equivalent to this
  quotient. Hence, \(\image(f)\) is \(\pa*{\U \sqcup \V}\)-small, as desired.
\end{proof}

The left-to-right implication of the theorem above is similar
to~\cite[Corollary~5.1]{Rijke2017}, but our theorem generalizes the universe
parameters and restricts to maps into sets. The latter is the reason why the
converse also holds.

\section{Largeness of Complete Posets}
\label{sec:large-posets}
A well-known result of Freyd in classical mathematics says that every
complete small category is a preorder~\cite[Exercise~D of
Chapter~3]{Freyd1964}. In other words, complete categories are necessarily large
and only complete preorders can be small, at least impredicatively.
Predicatively, by contrast, we show that many weakly complete posets (including
directed complete posets, bounded complete posets and sup-lattices) are
necessarily large.
We capture these structures by a technical notion of a \deltacomplete{\V} poset
in~\cref{sec:delta-complete-posets}. In~\cref{sec:nontrivial-and-positive} we
define when such structures are nontrivial and introduce the constructively
stronger notion of positivity. \cref{sec:retract-lemmas} and
\cref{sec:small-completeness-with-resizing} contain the two fundamental
technical lemmas and the main theorems, respectively. Finally, we consider
alternative formulations of being nontrivial and positive that ensure that these
notions are properties rather than data and shows how the main theorems remain
valid, assuming univalence.

\subsection{\texorpdfstring{\(\delta_{\V}\)}{delta\_V}-complete Posets}
\label{sec:delta-complete-posets}
We start by introducing a class of weakly complete posets that we call
\deltacomplete{\V} posets. The notion of a \deltacomplete{\V} poset is a
technical and auxiliary notion sufficient to make our main theorems go
through. The important point is that many familiar structures (dcpos, bounded
complete posets, sup-lattices) are \deltacomplete{\V} posets
(see~\cref{examples-of-delta-complete-posets}).

\begin{defi}[{\deltacomplete{\V} poset,
  \(\delta_{x,y,P}\),
  \(\bigvee \delta_{x,y,P}\)}]
  A \emph{poset} is a type \(X\) with a subsingleton-valued binary relation
  \({\below}\) on \(X\) that is reflexive, transitive and antisymmetric.  It is
  not necessary to require \(X\) to be a set, as this follows from the other
  requirements.
  A poset \((X,{\below})\) is \emph{\(\delta_\V\)-complete} for a
  universe~\(\V\) if for every pair of elements \(x,y : X\) with \(x \below y\)
  and every subsingleton \(P\) in \(\V\), the family
  \begin{align*}
    \delta_{x,y,P} : 1 + P &\to X \\
    \inl(\star) &\mapsto x; \\
    \inr(p) &\mapsto y;
  \end{align*}
  has a supremum \(\bigvee \delta_{x,y,P}\) in \(X\).
\end{defi}
\begin{rem}[Classically, every poset is \deltacomplete{\V}]%
  \label{classically-every-poset-is-delta-complete}
  Consider a poset \((X,\below)\) and a pair of elements \(x \below y\). If
  \(P : \V\) is a decidable proposition, then we can define the supremum of
  \(\delta_{x,y,P}\) by case analysis on whether \(P\) holds or not. For if it
  holds, then the supremum is \(y\), and if it does not, then the supremum is
  \(x\). Hence, if excluded middle holds in \(\V\), then the family
  \(\delta_{x,y,P}\) has a supremum for every \(P : \V\). Thus, if excluded
  middle holds in \(\V\), then every poset (with carrier in any universe) is
  \deltacomplete{\V}.
\end{rem}
The above remark naturally leads us to ask whether the converse also holds,
\ie\ if every poset is \deltacomplete{\V}, does excluded middle in \(\V\)
hold?  As far as we know, we can only get weak excluded middle in \(\V\), as we
will later see in~\cref{Two-is-not-delta-complete}.
This proposition also shows that in the absence of excluded middle, the notion
of \(\delta_{\V}\)-completeness isn't trivial. For now, we focus on the fact
that, also constructively and predicatively, there are many examples of
\deltacomplete{\V} posets.

\begin{exas}\hfill
  \label{examples-of-delta-complete-posets}
  \begin{enumerate}[(i)]
  \item\label{sup-lattices-are-delta-complete} Every \(\V\)-sup-lattice is
    \deltacomplete{\V}. That is, if a poset \(X\) has suprema for all families
    \(I \to X\) with \(I\) in the universe \(\V\), then \(X\) is
    \deltacomplete{\V}.
  \item\label{Omega-is-delta-complete} The \(\V\)-sup-lattice \(\Omega_\V\) is
    \deltacomplete{\V}.  The type \(\Omega_{\V}\) of propositions in \(\V\) is a
    \(\V\)-sup-lattice with the order given by implication and suprema by
    existential quantification. Hence, \(\Omega_{\V}\) is
    \deltacomplete{\V}. Specifically, given propositions \(Q\), \(R\) and \(P\),
    the supremum of \(\delta_{Q,R,P}\) is given by \(Q \vee \pa*{R \times P}\).
  \item\label{powerset-is-delta-complete} The \(\V\)-powerset
    \(\powerset_{\V}(X) \colonequiv X \to \Omega_{\V}\) of a type \(X\) is
    \deltacomplete{\V}. Note that \(\powerset_{\V}(X)\) is another example of a
    \(\V\)-sup-lattice (ordered by subset inclusion and with suprema given by
    unions) and hence \deltacomplete{\V}.
    We will sometimes employ familiar set-theoretic notation when using elements
    of \(\powerset_{\V}(X)\), \eg~given \(A : \powerset_{\V}(X)\), we might
    write \(x \in A\) for the assertion that \(A(x)\) holds.
  \item\label{bounded-complete-are-delta-complete} Every \(\V\)-bounded complete
    poset is \deltacomplete{\V}. That is, if \((X,\below)\) is a poset with
    suprema for all bounded families \(I \to X\) with \(I\) in the universe
    \(\V\), then \((X,\below)\) is \deltacomplete{\V}.
    A family \(\alpha : I \to X\) is bounded if there exists some \(x : X\) with
    \(\alpha(i) \below x\) for every \(i : I\). For example, the family
    \(\delta_{x,y,P}\) is bounded by \(y\).
  \item\label{dcpos-are-delta-complete} Every \(\V\)-directed complete poset
    (dcpo) is \deltacomplete{\V}, since the family \(\delta_{x,y,P}\) is
    directed.  We note that~\cite{deJongEscardo2021} provides a host of examples
    of \(\V\)-dcpos.
  \end{enumerate}
\end{exas}

\subsection{Nontrivial and Positive Posets}
\label{sec:nontrivial-and-positive}
In \cref{classically-every-poset-is-delta-complete} we saw that if we can decide
a proposition \(P\), then we can define \(\bigvee \delta_{x,y,P}\) by case
analysis. What about the converse? That is, if \(\delta_{x,y,P}\) has a supremum
and we know that it equals \(x\) or \(y\), can we then decide \(P\)?  Of course,
if \(x = y\), then \(\bigvee \delta_{x,y,P} = x = y\), so we don't learn
anything about \(P\). But what if add the assumption that \(x \neq y\)? It turns
out that constructively we can only expect to derive decidability of \(\lnot P\)
in that case. This is due to the fact that \(x \neq y\) is a negated
proposition, which is rather weak constructively, leading us to later define
(see~\cref{def:strictly-below}) a constructively stronger notion for elements of
\deltacomplete{\V} posets.

\begin{defi}[Nontriviality]
  A poset \(X\) is \emph{nontrivial} if we have designated \(x,y : X\)
  with \(x \below y\) and \(x \neq y\).
\end{defi}

\begin{lem}\label{delta-sup-weak-em}
  For a nontrivial poset \((X,{\below},x,y)\) and a proposition \(P : \V\), we
  have the following two implications:
  \begin{enumerate}[(i)]
  \item\label{delta-sup-weak-em-1} if the supremum of \(\delta_{x,y,P}\) exists and
    \(x = \bigvee \delta_{x,y,P}\), then \(\lnot P\) is the case;
  \item\label{delta-sup-weak-em-2} if the supremum of \(\delta_{x,y,P}\) exists and
    \(y = \bigvee \delta_{x,y,P}\), then \(\lnot\lnot P\) is the case.
  \end{enumerate}
\end{lem}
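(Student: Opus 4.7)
The plan is to exploit antisymmetry together with the fact that the family $\delta_{x,y,P}$ has $y$ as an a priori upper bound (since $x\below y$) and has $x$ as an upper bound precisely when $P$ fails. Both parts will derive $x=y$ in the hypothetical case, contradicting the nontriviality hypothesis $x\neq y$.

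For part \ref{delta-sup-weak-em-1}, I would assume $P$ holds with witness $p:P$ and aim for a contradiction. From $p$ we have $\inr(p):1+P$ and $\delta_{x,y,P}(\inr(p))\equiv y$, so $y$ is in the family and therefore $y\below\bigvee\delta_{x,y,P}=x$. Combined with the hypothesis $x\below y$, antisymmetry gives $x=y$, contradicting $x\neq y$. This rules out $P$, giving $\lnot P$.

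For part \ref{delta-sup-weak-em-2}, I would assume $\lnot P$ and again derive a contradiction to conclude $\lnot\lnot P$. Under $\lnot P$, the index $\inr(p)$ cannot occur (it would yield $p:P$, contradicting $\lnot P$ and giving anything we want by ex falso), so $x$ is an upper bound of the family: the $\inl(\star)$ branch gives $\delta_{x,y,P}(\inl(\star))\equiv x\below x$ by reflexivity, and the $\inr$ branch is vacuously handled. Hence $\bigvee\delta_{x,y,P}\below x$, i.e.\ $y\below x$. With $x\below y$, antisymmetry again gives $x=y$, contradicting nontriviality.

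The argument is essentially routine order theory once the definition of $\delta_{x,y,P}$ is unpacked; there is no real obstacle. The only mildly delicate point is the asymmetry between the two parts: in (i) a positive witness $p:P$ directly puts $y$ into the family, so we get outright negation $\lnot P$; in (ii) we only get that $x$ bounds the family \emph{under} the hypothesis $\lnot P$, which is why the conclusion is the weaker $\lnot\lnot P$ rather than $P$ itself. This asymmetry foreshadows, as the authors indicate in the paragraph preceding the lemma, that $x\neq y$ is too weak a form of distinctness to yield full decidability, and motivates the strictly-below relation to be introduced later.
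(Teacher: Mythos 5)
Your proof is correct and follows essentially the same route as the paper: both parts argue by contradiction, using that a witness \(p:P\) forces \(y\) into the family (part (i)), and that \(\lnot P\) makes \(x\) an upper bound of the family (part (ii)), in each case deriving \(x=y\) by antisymmetry against the hypothesis \(x\neq y\). The paper's version of (ii) phrases the final step as \(x = \bigvee\delta_{x,y,P} = y\) rather than going via \(y \below x\) and antisymmetry, but this is a cosmetic difference.
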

\begin{proof}\hfill%
  \begin{enumerate}[(i)]
  \item Suppose that \(x = \bigvee \delta_{x,y,P}\) and assume for a
    contradiction that we have \(p : P\). Then
    \( y \equiv \delta_{x,y,P}(\inr(p)) \below \bigvee \delta_{x,y,P} = x, \)
    which is impossible by antisymmetry and our assumptions that \(x \below y\)
    and \(x \neq y\).
  \item Suppose that \(y = \bigvee \delta_{x,y,P}\) and assume for a
    contradiction that \(\lnot P\) holds. Then
    \(x = \bigvee \delta_{x,y,P} = y\), contradicting our assumption that
    \(x \neq y\). \qedhere
  \end{enumerate}
\end{proof}

\begin{propC}[{\cite[Section~4]{deJongEscardo2021}}]
  \label{Two-is-not-delta-complete}
  If the poset \(\Two\) with exactly two elements \(0 \below 1\) is
  \deltacomplete{\V}, then weak excluded middle in \(\V\) holds.
\end{propC}
\begin{proof}
  Suppose that \(\Two\) were \deltacomplete{\V} and let \(P : \V\) be an
  arbitrary subsingleton. We must show that \(\lnot P\) is decidable. Since
  \(\Two\) has exactly two elements, the supremum \(\bigvee \delta_{0,1,P}\)
  must be \(0\) or \(1\). But then we apply \cref{delta-sup-weak-em} to get
  decidability of \(\lnot P\).
\end{proof}
Combining~\cref{classically-every-poset-is-delta-complete,Two-is-not-delta-complete}
yields that excluded middle implies that every poset is \deltacomplete{\V},
which in turns implies weak excluded middle.
We do not know whether these implications can be reversed.
That the conclusion of the implication in
\cref{delta-sup-weak-em}\ref{delta-sup-weak-em-2} cannot be strengthened to
say that \(P\) is the case is shown by the following observation.
\begin{prop}
  \label{delta-sup-em}
  Recall~\cref{examples-of-delta-complete-posets}, which show that
  \(\Omega_{\V}\) is \deltacomplete{\V}. If for every two propositions \(Q\) and
  \(R\) with \(Q \below R\) and \(Q \neq R\) we have that the equality
  \(R = \bigvee \delta_{Q,R,P}\) in \(\Omega_{\V}\) implies \(P\) for every
  proposition \(P : \V\), then excluded middle in \(\V\) follows.
\end{prop}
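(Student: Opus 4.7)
The plan is to use the hypothesis to derive the principle $\lnot\lnot P_0 \to P_0$ for every proposition $P_0 : \V$. This is equivalent to excluded middle in $\V$: applying it to $P \vee \lnot P$ yields excluded middle, since $\lnot\lnot(P \vee \lnot P)$ is a constructive tautology.

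Given $P_0 : \V$ with $\lnot\lnot P_0$, I will instantiate the hypothesis at $Q \colonequiv \bot$, $R \colonequiv P_0$, and $P \colonequiv P_0$ in $\Omega_\V$. Each of the three preconditions should then be easy to check: $Q \below R$ is the trivial implication $\bot \to P_0$; the inequality $Q \neq R$ holds because, by propositional extensionality, an equality $\bot = P_0$ in $\Omega_\V$ would yield $\lnot P_0$, contradicting our assumption $\lnot\lnot P_0$; and the equality $R = \bigvee \delta_{Q,R,P}$ reduces, via the formula $\bigvee \delta_{Q,R,P} = Q \vee (R \wedge P)$ recorded in \cref{examples-of-delta-complete-posets}, to $P_0 = \bot \vee (P_0 \wedge P_0)$, which is just $P_0 = P_0$. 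The hypothesis then delivers the conclusion $P = P_0$, and since $P_0$ was arbitrary we obtain double-negation elimination, hence excluded middle, in $\V$.

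The main obstacle is landing on the right instantiation. The obvious first attempt $Q \colonequiv \bot$, $R \colonequiv \top$ makes the hypothesis degenerate: the supremum collapses to $\bigvee \delta_{\bot,\top,P} = P$, so the premise $\top = P$ is already logically equivalent to $P$, and the hypothesis becomes the trivial $P \to P$. Taking $R \colonequiv P_0$ rather than $\top$ is the key twist: it ties the nontriviality condition $Q \neq R$ precisely to $\lnot\lnot P_0$, converting the hypothesis into exactly the double-negation elimination we need. I expect no further difficulty; the remaining verifications are all immediate consequences of the description of finite meets and joins in $\Omega_\V$ together with propositional extensionality.
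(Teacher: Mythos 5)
Your proposal is correct and matches the paper's own proof essentially verbatim: both reduce to double-negation elimination, instantiate the hypothesis at \(Q \colonequiv \Zero_{\V}\), \(R \colonequiv P_0\), \(P \colonequiv P_0\), and verify the equality \(R = \bigvee \delta_{Q,R,P}\) via the computation \(\bigvee\delta_{\Zero,P_0,P_0} = \Zero \vee (P_0 \times P_0) = P_0\). (A small wording glitch: the hypothesis delivers the conclusion \(P\), i.e.\ \(P_0\) itself, not an equality \(P = P_0\), but your intent is clearly correct.)
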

\begin{proof}
  Assume the hypothesis in the proposition. We show that \(\lnot\lnot P \to P\)
  for every proposition \(P : \V\), from which excluded middle in \(\V\)
  follows. Let \(P\) be a proposition in~\(\V\) and assume that
  \(\lnot\lnot P\). This yields \(\Zero \neq P\), so by assumption the equality
  \(P = \bigvee \delta_{\Zero,P,P}\) implies~\(P\). But this equality holds,
  because \(\bigvee \delta_{\Zero,P,P} = \Zero \vee (P \times P) = P\), as
  described in
  \cref{examples-of-delta-complete-posets}\ref{Omega-is-delta-complete}.
\end{proof}

Thus, having a pair of elements \(x \below y\) with \(x \neq y\) is rather weak
constructively in that we can only derive \(\lnot\lnot P\) from
\(y = \bigvee\delta_{x,y,P}\).
As promised in the introduction of this section, we now introduce and motivate a
constructively stronger notion.

\begin{defi}[Strictly below, \(x \sbelow y\)]
  \label{def:strictly-below}
  We say that \(x\) is \emph{strictly below} \(y\) in a \deltacomplete{\V} poset
  if \(x \below y\) and, moreover, for every \(z \aboveorder y\) and every
  proposition \(P : \V\), the equality \(z = \bigvee \delta_{x,z,P}\) implies
  \(P\).
\end{defi}
Note that with excluded middle, \(x \sbelow y\) is equivalent to the conjunction
of \(x \below y\) and \(x \neq y\). But constructively, the former is much
stronger, as the following examples and proposition illustrate.

\begin{exas}[Strictly below in \(\Omega_{\V}\) and \(\powerset_{\V}(X)\)]\label{examples-strictly-below}
  \hfill
  \begin{enumerate}[(i)]
  \item Recall from~\cref{examples-of-delta-complete-posets} that
    \(\Omega_{\V}\) is \deltacomplete{\V}. Let \(P : \V\) be an arbitrary
    proposition. Observe that \(\Zero_\V \neq P\) holds precisely when
    \(\lnot\lnot P\) does. However, \(\Zero_\V\) is strictly below \(P\) if and
    only if \(P\) holds.
    More generally, for any two propositions \(Q,P : \V\), we have
    \((Q \below P) \times (Q\neq P)\) if and only if
    \(\lnot Q \times \lnot\lnot P\) holds.
    But, \(Q \sbelow P\) holds if and only if \(\lnot Q \times P\) holds.
  \item Another example (see~\cref{examples-of-delta-complete-posets}) of a
    \deltacomplete{\V}-poset is the powerset \(\powerset_{\V}(X)\) of a type
    \(X : \V\). If we have two subsets \(A \below B\) of \(X\), then
    \(A \neq B\) if and only if
    \(\lnot\pa*{\forall_{x : X}\pa*{x \in B \to x \in A}}\).

    However, if \(A \sbelow B\) and \(y \in A\) is decidable for every
    \(y : X\), then we get the stronger
    \(\exists_{x : X}\pa*{x \in B \times x \not\in A}\).
    For we can take \(P : \V\) to be
    \(\exists_{x : X}\pa*{x \in B \times x \not\in A}\) and observe that
    \(\bigvee\delta_{A,B,P} = B\), because if \(x \in B\), either \(x \in A\) in
    which case \(x \in \bigvee\delta_{A,B,P}\), or \(x \not\in A\) in which case
    \(P\) must hold and \(x \in B = \bigvee\delta_{A,B,P}\).

    Conversely, if we have \(A \below B\) and an element \(x \in B\) with
    \(x \not\in A\), then \(A \sbelow B\). For if \(C \aboveorder B\) is a
    subset and \(P : \V\) a proposition such that \(\bigvee\delta_{A,C,P} = C\),
    then \(x \in C = \bigvee\delta_{A,C,P} = A \cup \{y \in C \mid P\}\), so
  either \(x \in A\) or \(P\) must hold. But \(x \not\in A\) by assumption, so
  \(P\) must be true, proving \(A \sbelow B\).
  \end{enumerate}

\end{exas}

\begin{prop}
  \label{sbelow-below-neq}
  For elements \(x\) and \(y\) of a \deltacomplete{\V} poset, we have that
  \(x \sbelow y\) implies both \(x \below y\) and \(x \neq y\).
  However, if the conjunction of \(x \below y\) and \(x \neq y\) implies
  \(x \sbelow y\) for every \(x,y : \Omega_\V\), then excluded middle in \(\V\)
  holds.
\end{prop}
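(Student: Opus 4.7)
The plan is to handle the two implications separately. For the forward direction, that $x \sbelow y$ implies $x \below y$ is immediate from the definition of $\sbelow$. To derive $x \neq y$, I assume $x = y$ and aim for a contradiction. Set $z \colonequiv y$ (so $z \aboveorder y$ by reflexivity) and $P \colonequiv \Zero_\V$. The family $\delta_{x,z,\Zero_\V} : 1 + \Zero_\V \to X$ is essentially the singleton family with value $x$, so its supremum is $x$ itself; under the assumption $x = y = z$, this gives $z = \bigvee \delta_{x,z,\Zero_\V}$. Applying the strict-below property of $x \sbelow y$ then yields an element of $\Zero_\V$, the desired contradiction.

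For the second direction, I rely on the characterizations established in \cref{examples-strictly-below}(i): for any $Q, P : \Omega_\V$, the conjunction $(Q \below P) \times (Q \neq P)$ is equivalent to $\lnot Q \times \lnot\lnot P$, while $Q \sbelow P$ is equivalent to $\lnot Q \times P$. Thus the hypothesis that the former implies the latter translates into the implication $\lnot Q \times \lnot\lnot P \Rightarrow \lnot Q \times P$ for all propositions $Q, P : \V$. Instantiating with $Q \colonequiv \Zero_\V$ (for which $\lnot Q$ holds trivially) yields $\lnot\lnot P \to P$ for every proposition $P : \V$, i.e.\ double negation elimination.

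It remains to observe that double negation elimination for propositions implies excluded middle in $\V$: the proposition $P \lor \lnot P$ is always $\lnot\lnot$-stable by a standard intuitionistic tautology, so applying double negation elimination to it gives $P \lor \lnot P$.

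I do not anticipate a serious obstacle. The only detail warranting attention is the computation of $\bigvee \delta_{x,y,\Zero_\V}$ in the first part, which exists by $\delta_\V$-completeness (since $x \below y$) and equals $x$ because $1 + \Zero_\V \simeq 1$ makes the family a singleton at $x$.
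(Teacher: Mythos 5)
Your proof is correct and follows essentially the same route as the paper's: the first part instantiates the strictly-below property at \(z = y\), \(P = \Zero_\V\) to rule out \(x = y\), and the second part reduces to double negation elimination via the characterizations of \(\neq\) and \(\sbelow\) in \(\Omega_\V\) specialized at \(\Zero_\V\) (the paper invokes the special case directly where you first state the general equivalences from \cref{examples-strictly-below}(i), but the underlying argument is identical).
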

\begin{proof}
  Note that \(x \sbelow y\) implies \(x \below y\) by definition. Now suppose
  that \(x \sbelow y\). Then the equality
  \(y = \bigvee \delta_{x,y,\Zero_{\V}}\) implies that \(\Zero_{\V}\) holds. But
  if \(x = y\), then this equality holds, so \(x \neq y\), as desired.

  For \(P : \Omega_{\V}\) we observed that \(\Zero_\V \neq P\) is equivalent to
  \(\lnot\lnot P\) and that \(\Zero_\V \sbelow P\) is equivalent to \(P\), so if
  we had \(\pa*{\pa*{x \below y} \times \pa*{x \neq y}} \to x \sbelow y\) in
  general, then we would have \(\lnot\lnot P \to P\) for every proposition \(P\)
  in \(\V\), which is equivalent to excluded middle in \(\V\).
\end{proof}

\begin{lem}\label{sbelow-trans}
  The following transitivity properties hold for all elements \(x\), \(y\) and
  \(z\) of a \deltacomplete{\V} poset:
  \begin{enumerate}[(i)]
  \item if \(x \below y \sbelow z\), then \(x \sbelow z\);
  \item if \(x \sbelow y \below z\), then \(x \sbelow z\).
  \end{enumerate}
\end{lem}
\begin{proof}\hfill
  \begin{enumerate}[(i)]
  \item Assume \(x \below y \sbelow z\), let \(P\) be an arbitrary proposition
    in \(\V\) and suppose that \(z \below w\). We must show that
    \(w = \bigvee \delta_{x,w,P}\) implies \(P\). But \(y \sbelow z\), so we
    know that the equality \(w = \bigvee \delta_{y,w,P}\) implies \(P\). Now
    observe that \(\bigvee \delta_{x,w,P} \below \bigvee \delta_{y,w,P}\), so if
    \(w = \bigvee \delta_{x,w,P}\), then \(w = \bigvee \delta_{y,w,P}\),
    finishing the proof.
  \item Assume \(x \sbelow y \below z\), let \(P\) be an arbitrary proposition
    in \(\V\) and suppose that \(z \below w\). We must show that
    \(w = \bigvee \delta_{x,w,P}\) implies \(P\). But \(x \sbelow y\) and
    \(y \below w\), so this follows immediately. \qedhere
  \end{enumerate}
\end{proof}

\begin{prop}\label{positive-element-equivalent}
  The following are equivalent for an element \(y\) of a \(\V\)-sup-lattice
  \(X\):
  \begin{enumerate}[(i)]
  \item\label{positive-1} the least element of \(X\) is strictly below \(y\);
  \item\label{positive-2} for every family \(\alpha : I \to X\) with
    \(I : \V\), if \(y \below \bigvee \alpha\), then \(I\) is inhabited;
  \item\label{positive-3} there exists some \(x : X\) with \(x \sbelow y\).
  \end{enumerate}
\end{prop}
\begin{proof}
  Write \(\bot\) for the least element of \(X\). By~\cref{sbelow-trans} we have:
  \[
    \bot \sbelow y
    \iff \exists_{x : X}\pa*{\bot \below x \sbelow y}
    \iff \exists_{x : X}\pa*{x \sbelow y},
  \]
  which proves the equivalence of \ref{positive-1} and \ref{positive-3}. It
  remains to prove that \ref{positive-1} and \ref{positive-2} are
  equivalent. Suppose that \(\bot \sbelow y\) and let \(\alpha : I \to X\) with
  \(y \below \bigvee \alpha\). Using \(\bot \sbelow y \below \bigvee \alpha\)
  and~\cref{sbelow-trans}, we have \(\bot \sbelow \bigvee \alpha\). Hence, we
  only need to prove
  \(\bigvee \alpha \below \bigvee \delta_{\bot,\bigvee \alpha,\exists {i :
      I}}\), but
  \(\alpha_j \below \bigvee \delta_{\bot,\bigvee\alpha,\exists {i : I}}\) for
  every \(j : I\), so this is true indeed.
  For the converse, assume that \(y\) satisfies \ref{positive-2}, suppose
  \(z \aboveorder y\) and let \(P : \V\) be a proposition such that
  \(z = \bigvee \delta_{\bot,z,P}\). We must show that \(P\) holds. But notice
  that
  \(y \below z = \bigvee \delta_{\bot,z,P} = \bigvee \pa*{(p : P)\mapsto z}\),
  so \(P\) must be inhabited as \(y\) satisfies~\ref{positive-2}.
\end{proof}
\cref{positive-2}~in~\cref{positive-element-equivalent} says exactly that \(y\)
is a positive element in the sense of~\cite[p.~98]{Johnstone1984}.%
\index{positivity}
Observe that \ref{positive-2} makes sense for any poset, not just
\(\V\)-sup-lattices: we don't need to assume the existence of suprema to
formulate condition~\ref{positive-2}, because we can rephrase
\(y \below \bigvee \alpha\) as ``for every \(x : X\), if \(x\) is an upper bound
of \(\alpha\) and \(x\) is below any other upper bound of \(\alpha\), then
\(y \below x\)''. Similarly, the strictly-below relation makes sense for
any poset.
What \cref{positive-element-equivalent} shows is that the strictly-below
relation generalizes Johnstone's notion of positivity from a \emph{unary}
relation to a \emph{binary} one.
Another binary generalization of positivity in a different direction is that of
a positivity relation in formal
topology~\cite{Sambin2003,CirauloSambin2018,CirauloVickers2016}. For a formal
topology \(S\), one considers a binary relation \(\ltimes\) between \(S\) and
its powerclass. Then \(a \ltimes S\) implies that \(a\) is
positive~\cite[p.~764]{CirauloSambin2018}, while sets of the form
\(\{a \in S \mid a \ltimes U\}\) are thought of as formal closed
subsets~\cite{CirauloVickers2016}.

Looking to strengthen the notion of a nontrivial poset, we make the following
definitions.

\begin{defi}[Positivity; \cf~{\cite[p.~98]{Johnstone1984}}]\hfill
  \begin{enumerate}[(i)]
  \item An element of a \deltacomplete{\V} poset is \emph{positive} if it
    satisfies \cref{positive-element-equivalent}\ref{positive-3}.
  \item A \deltacomplete{\V} poset \(X\) is \emph{positive} if we have
    designated \(x,y : X\) with \(x\) strictly below~\(y\).
  \end{enumerate}
\end{defi}

\begin{exas}[Nontriviality and positivity in \(\Omega_{\V}\) and
  \(\powerset_{\V}(X)\)]\label{examples-nontrivial-positive}\hfill
  \begin{enumerate}[(i)]
  \item Consider an element \(P\) of the \deltacomplete{\V} poset
    \(\Omega_\V\). The pair \(\pa*{\Zero_\V , P}\) witnesses nontriviality of
    \(\Omega_\V\) if and only if \(\lnot\lnot P\) holds, while it witnesses
    positivity if and only if \(P\) holds.
  \item Consider the \(\V\)-powerset \(\powerset_{\V}(X)\) on a type \(X\) as a
    \deltacomplete{\V} poset
    (recall~\cref{examples-of-delta-complete-posets}). We write
    \(\emptyset : \powerset_{\V}(X)\) for the map \(x \mapsto \Zero_\V\).  Say
    that a subset \(A : \powerset_{\V}(X)\) is nonempty if \(A \neq \emptyset\)
    and inhabited if there exists some \(x : X\) such that \(A(x)\) holds.  The
    pair \((\emptyset , A)\) witnesses nontriviality of \(\powerset_\V(X)\) if
    and only if \(A\) is nonempty, while it witnesses positivity if and only if
    \(A\) is inhabited.
  \end{enumerate}
\end{exas}

In domain theory the \emph{way-below} relation is of fundamental importance. It
will be instructive to see how it relates to the strictly-below relation.

\begin{defi}[Way-below relation, compactness; {\cite[Definition~44]{deJongEscardo2021}}]
  Let \(x\) and \(y\) be elements of a \(\V\)-dcpo \(D\).
  \begin{enumerate}[(i)]
  \item We say that \(x\) is \emph{way below} \(y\), written \(x \ll y\), if for
    every directed family \(\alpha : I \to D\) with \(y \below \bigvee \alpha\),
    there exists \(i : I\) such that \(x \below \alpha_i\) already.
  \item An element \(x\) is said to be \emph{compact} if it is way below itself.
  \end{enumerate}
\end{defi}

\begin{prop}
  If \(x \below y\) are unequal elements of a \(\V\)-dcpo \(D\) and \(y\) is
  compact, then \(x \sbelow y\) without the need to assume excluded middle.
  In particular, a compact element \(x\) of a \(\V\)-dcpo with a least element
  \(\bot\) is positive if and only if \(x \neq \bot\).
\end{prop}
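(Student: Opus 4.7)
The plan is to verify the definition of strictly below directly. Assume $x \below y$ with $x \neq y$ and $y$ compact. We already have $x \below y$, so let $z \aboveorder y$ and let $P : \V$ be a proposition with $z = \bigvee \delta_{x,z,P}$; I must show $P$ holds. Since $x \below y \below z$, the family $\delta_{x,z,P} : 1 + P \to D$ is well-defined, and it is directed because $\inl(\star)$ inhabits $1 + P$ and $x \below z$, so $z$ is the supremum (in the directed sense) of~$\delta_{x,z,P}$.

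The key move is to apply compactness of $y$. From $y \below z = \bigvee \delta_{x,z,P}$ and $y \ll y$, I obtain some index $i : 1 + P$ with $y \below \delta_{x,z,P}(i)$. Because my goal $P$ is a proposition, I may eliminate the propositional truncation inherent in ``there exists'' and argue by cases on $i$. If $i = \inl(\star)$, then $y \below x$, and combined with the hypothesis $x \below y$ and antisymmetry this forces $x = y$, contradicting $x \neq y$; from the resulting absurdity we obtain $P$. If $i = \inr(p)$, then $p : P$ itself witnesses $P$. Either way $P$ holds, completing the proof that $x \sbelow y$.

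For the ``in particular'' clause, let $x$ be compact with $\bot$ the least element. If $x \neq \bot$, then $\bot \below x$ holds trivially and $x$ is compact, so the main statement gives $\bot \sbelow x$, and hence $x$ is positive by~\cref{positive-element-equivalent}. Conversely, if $x$ is positive, then again by~\cref{positive-element-equivalent} we have $\bot \sbelow x$, and~\cref{sbelow-below-neq} yields $\bot \neq x$, i.e.\ $x \neq \bot$. The only subtle point in the whole argument is the case analysis on $i : 1 + P$: this is legitimate only because the target $P$ is a proposition, which is precisely what lets us bypass excluded middle — compactness packages exactly the information we would otherwise need a case split on $P$ to obtain.
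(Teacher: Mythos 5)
Your proof is correct and follows the same approach as the paper: apply compactness of \(y\) to obtain \(i : \One + P\) with \(y \below \delta_{x,z,P}(i)\), then rule out \(i = \inl(\star)\) using antisymmetry and \(x \neq y\). You also supply a correct proof of the ``in particular'' clause (via \cref{positive-element-equivalent} and \cref{sbelow-below-neq}), which the paper leaves implicit.
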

\begin{proof}
  Suppose that \(x \below y\) are unequal and that \(y\) is compact. We are to
  show that \(x \sbelow y\). So assume we have \(z \aboveorder y\) and a
  proposition \(P : \V\) such that \(y \below z = \bigvee\delta_{x,z,P}\). By
  compactness of \(y\), there exists \(i : \One + P\) such that
  \(y \below \delta_{x,z,P}(i)\) already. But \(i\) can't be equal to
  \(\inl(\star)\), since \(x \neq y\) is assumed. Hence, \(i = \inr(p)\) and
  \(P\) must hold.
\end{proof}

Note that \(x \sbelow y\) does not imply \(x \ll y\) in general, because with
excluded middle, \(x \sbelow y\) is simply the conjunction of \(x \below y\) and
\(x \neq y\), which does not imply \(x \ll y\) in general.
Also, the conjunction of \(x \ll y\) and \(x \neq y\) does not imply
\(x \sbelow y\), as far as we know.

We end this section by summarizing why we consider the strictly-below relation
to be suitable in our constructive framework.
First of all, \(x \sbelow y\) coincides with \((x \below y) \times (x \neq y)\)
in the presence of excluded middle, so it is compatible with classical logic.
Secondly, we've seen in~\cref{examples-strictly-below} that the strictly-below
relation works well in the poset of truth values and in powersets, yielding
familiar constructive strengthenings.
Thirdly, the strictly-below relation generalizes Johnstone's notion of
positivity from a unary to a binary relation.
And finally, as we will see shortly, the derived notion of positive poset is
exactly what we need to derive \(\Omegaresizingalt{\V}\) rather than the weaker
\(\Omeganotnotresizingalt{\V}\) in \cref{positive-impredicativity}.

\subsection{Retract Lemmas}\label{sec:retract-lemmas}
We show that the type of propositions in \(\V\) is a retract of any positive
\deltacomplete{\V} poset and that the type of \(\lnot\lnot\)-stable
propositions in \(\V\) is a retract of any nontrivial \deltacomplete{\V} poset.

\begin{defi}[\(\Delta_{x,y}\)]
  For a nontrivial \deltacomplete{\V} poset \((X,\below,x,y)\), we define the
  map \(\Delta_{x,y} : \Omega_{\V} \to X\) by the assignment
  \(P \mapsto \bigvee \delta_{x,y,P}\).
\end{defi}
We will often omit the subscripts in \(\Delta_{x,y}\) when it is clear from the
context.

\begin{defi}[Locally smallness]
  A \deltacomplete{\V} poset \((X,\below)\) is \emph{locally small} if its order
  has \(\V\)-small values, \ie\ we have
  \({\below_{\V}} : X \to X \to \V\) with
  \(\pa*{x \below y} \simeq \pa*{x \below_{\V} y}\) for every \(x,y : X\).
\end{defi}

\begin{exas}\hfill
  \begin{enumerate}[(i)]
  \item The \(\V\)-sup-lattices \(\Omega_{\V}\) and \(\powerset_\V(X)\) (for
    \(X : \V\)) are locally small.
  \item All examples of \(\V\)-dcpos in~\cite{deJongEscardo2021} are locally
    small.
  \end{enumerate}
\end{exas}

\begin{lem}\label{nontrivial-retract}
  A locally small \deltacomplete{\V} poset \((X,\below)\) is nontrivial,
  witnessed by elements \(x \below y\), if~and only if the composite
  \(\Omeganotnot{\V} \hookrightarrow \Omega_{\V} \xrightarrow{\Delta_{x,y}} X\)
  is a section.
\end{lem}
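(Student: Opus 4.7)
The plan is to handle the two implications of the iff separately, constructing an explicit retraction for the forward direction and extracting the nontriviality witness from the given retraction for the backward direction.

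\emph{Forward direction.} Suppose \(x \sbelow y\); then \(x \below y\) and \(x \neq y\) by~\cref{sbelow-below-neq}. Using local smallness, let \({\below_\V}\) denote the \(\V\)-small copy of the order, and define \(r : X \to \Omeganotnot{\V}\) by \(r(z) \colonequiv \lnot\lnot(y \below_\V z)\). The double negation ensures \(r(z)\) is \(\lnot\lnot\)-stable, so \(r\) indeed lands in \(\Omeganotnot{\V}\). To verify \(r\) is a retraction of the composite, I show \(r(\Delta_{x,y}(P)) = P\) for every \(\lnot\lnot\)-stable \(P\) by propositional extensionality. The implication \(P \Rightarrow r(\Delta_{x,y}(P))\) is immediate, since when \(P\) holds we have \(y = \delta_{x,y,P}(\inr(p)) \below \bigvee\delta_{x,y,P} = \Delta_{x,y}(P)\). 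For the converse, \(\lnot\lnot\)-stability of \(P\) reduces the task to the contrapositive \(\lnot P \to \lnot(y \below \Delta_{x,y}(P))\); but under \(\lnot P\), the family \(\delta_{x,y,P}\) collapses to the constant family at \(x\), so \(\Delta_{x,y}(P) = x\), and then \(y \below x\) together with \(x \below y\) would force \(x = y\) by antisymmetry.

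\emph{Backward direction.} Assume a retraction \(r : X \to \Omeganotnot{\V}\) of the composite is given, so \(r(\Delta_{x,y}(P)) = P\) for every \(\lnot\lnot\)-stable \(P\). Evaluating \(r\) on the images of the false and true propositions yields \(r(x) = r(\Delta_{x,y}(\Zero_\V)) = \Zero_\V\) and \(r(y) = \One_\V\), so \(x \neq y\) (otherwise the false and true propositions in \(\V\) would coincide), establishing the nontriviality witness together with the given \(x \below y\). The main obstacle is upgrading this to the full \(\sbelow\) relation: given \(z \aboveorder y\) and a proposition \(P : \V\) with \(z = \bigvee \delta_{x,z,P}\), one immediately obtains \(\lnot\lnot P\), since \(\lnot P\) would force \(\Delta_{x,z}(P) = x\) and hence \(z = x\), contradicting \(y \below z\) combined with \(x \neq y\). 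Deriving \(P\) itself for arbitrary (not necessarily \(\lnot\lnot\)-stable) \(P\) appears to require a more delicate argument, likely by evaluating the retraction at \(\Delta_{x,y}\) applied to a \(\lnot\lnot\)-stable proposition constructed from \(P\), such as \(\lnot\lnot P\).
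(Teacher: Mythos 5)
Your forward direction is correct and employs a valid alternative retraction: you take \(r(z) \colonequiv \lnot\lnot(y \below_{\V} z)\) where the paper takes \(r(z) \colonequiv \lnot(z \below_{\V} x)\). Both land in \(\Omeganotnot{\V}\) (negations and double negations are \(\lnot\lnot\)-stable), and the two verifications of \(r(\Delta_{x,y}(P)) = P\) run along essentially parallel lines; the difference is cosmetic.

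Your backward direction is, in fact, already complete: extracting \(x \neq y\) from \(\Zero_\V = r(x)\) and \(\One_\V = r(y)\) is exactly where the paper's proof stops (``so \((X,\below,x,y)\) is nontrivial, as desired''). The final paragraph, in which you try to upgrade to \(x \sbelow y\), is chasing a typo: the lemma's conclusion is \emph{nontriviality}, which by definition means \(x \below y\) and \(x \neq y\); the relation \(x \sbelow y\) is what defines \emph{positivity}, and that case is the companion \cref{positive-retract}. The \(\sbelow\) in the present statement is evidently a slip (notice that \cref{positive-retract} conversely writes \(x \below y\) where \(x \sbelow y\) is intended). Moreover, \(x \sbelow y\) genuinely cannot be recovered from the retraction: by \cref{examples-nontrivial-positive}, in \(\Omega_\V\) the pair \(\pa*{\Zero_\V, P}\) witnesses nontriviality exactly when \(\lnot\lnot P\) holds but positivity exactly when \(P\) holds, and the forward direction already yields the retraction from the weaker hypothesis of nontriviality, so its existence cannot entail positivity in general. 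Delete your last paragraph and the proof matches the paper's.
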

\begin{proof}
  Suppose first that \((X,\below,x,y)\) is nontrivial and locally small. We define
  \begin{align*}
    r : X &\to \Omeganotnot{\V} \\
    z &\mapsto z \not\below_{\V} x.
  \end{align*}
  Note that negated propositions are \(\lnot\lnot\)-stable, so \(r\) is
  well-defined. Let \(P : \V\) be an arbitrary \(\lnot\lnot\)-stable
  proposition. We want to show that \(r (\Delta_{x,y}(P)) = P\). By propositional
  extensionality, establishing logical equivalence suffices.
  Suppose first that \(P\) holds. Then
  \(\Delta_{x,y}(P) \equiv \bigvee \delta_{x,y,P} = y\), so
  \(r(\Delta_{x,y}(P)) = r(y) \equiv \pa*{y \not\below_{\V} x}\) holds by
  antisymmetry and our assumptions that \(x \below y\) and \(x \neq y\).
  Conversely, assume that \(r(\Delta_{x,y}(P))\) holds, \ie\ that we have
  \(\bigvee \delta_{x,y,P} \not\below_{\V} x\). Since \(P\) is
  \(\lnot\lnot\)-stable, it suffices to derive a contradiction from~\(\lnot
  P\). So assume~\(\lnot P\). Then \(x = \bigvee \delta_{x,y,P}\), so
  \(r(\Delta_{x,y}(P)) = r(x) \equiv x \not\below_{\V} x\), which is false by
  reflexivity.

  For the converse, assume that
  \(\Omeganotnot{\V} \hookrightarrow \Omega_{\V} \xrightarrow{\Delta_{x,y}} X\)
  has a retraction \(r : \Omeganotnot{\V} \to X\). Then
  \(\Zero_{\V} = r(\Delta_{x,y}(\Zero_{\V})) = r(x)\) and
  \(\One_{\V} = r(\Delta_{x,y}(\One_{\V})) = r(y)\),
  where we used that \(\Zero_{\V}\) and \(\One_{\V}\) are \(\lnot\lnot\)-stable.
  Since \(\Zero_{\V} \neq \One_{\V}\), we get \(x \neq y\), so
  \((X,\below,x,y)\) is nontrivial, as desired.
\end{proof}
The appearance of the double negation in the above lemma is due to the
definition of nontriviality. If we instead assume a positive poset \(X\), then
we can exhibit all of \(\Omega_{\V}\) as a retract of \(X\).
\begin{lem}\label{positive-retract}
  A locally small \deltacomplete{\V} poset \((X,\below)\) is positive, witnessed
  by elements \(x \sbelow y\), if~and only if for every \(z \aboveorder y\), the
  map \(\Delta_{x,z} : \Omega_{\V} \to X\) is a section.
\end{lem}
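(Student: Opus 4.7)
The strategy mirrors \cref{nontrivial-retract}, but now we want to exhibit the full type \(\Omega_{\V}\) (rather than \(\Omeganotnot{\V}\)) as a retract, so we must craft a retraction whose image need not land in \(\lnot\lnot\)-stable propositions.

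For the forward direction, suppose \((X,\below,x,y)\) is positive, so \(x \sbelow y\), and let \(z \aboveorder y\) be given. By \cref{sbelow-trans} we have \(x \sbelow z\). Using local smallness, define
\[
  r : X \to \Omega_{\V},\qquad r(w) \colonequiv (z \below_{\V} w).
\]
I will verify that \(r\) retracts \(\Delta_{x,z}\). Using propositional extensionality, it suffices to show \(r(\Delta_{x,z}(P)) \iff P\) for every proposition \(P : \V\). If \(P\) holds, then \(\Delta_{x,z}(P) = \bigvee\delta_{x,z,P} = z\), so \(r(\Delta_{x,z}(P))\) holds by reflexivity. Conversely, if \(r(\Delta_{x,z}(P))\) holds, then \(z \below \bigvee\delta_{x,z,P}\); combining this with \(\bigvee\delta_{x,z,P} \below z\) (since \(z\) is an upper bound of \(\delta_{x,z,P}\)) yields the equality \(z = \bigvee\delta_{x,z,P}\), and then \(x \sbelow z\) forces \(P\) (instantiating \cref{def:strictly-below} at \(z \aboveorder z\)).

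For the converse, assume that \(\Delta_{x,z}\) has a retraction \(r_z : X \to \Omega_{\V}\) for every \(z \aboveorder y\); I will show \(x \sbelow y\). Since \(x \below y\) is given, let \(z \aboveorder y\) and \(P : \V\) with \(z = \bigvee\delta_{x,z,P}\); I need \(P\). Observing that \(\Delta_{x,z}(\One_{\V}) = \bigvee\delta_{x,z,\One_\V} = z = \bigvee\delta_{x,z,P} = \Delta_{x,z}(P)\), apply \(r_z\) to obtain \(P = r_z(\Delta_{x,z}(P)) = r_z(\Delta_{x,z}(\One_{\V})) = \One_{\V}\), whence \(P\) holds.

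There is no serious obstacle here: the main subtlety is choosing the right formula for the retraction in the forward direction. The choice \(r(w) \colonequiv (z \below_{\V} w)\) (as opposed to the \(z \not\below_{\V} x\) used in \cref{nontrivial-retract}) is what allows us to cover all of \(\Omega_{\V}\), at the cost of needing the genuine strength of \(x \sbelow y\) (rather than merely \(x \neq y\)) to close the reverse implication in the case analysis. The backward direction is essentially formal, being just the observation that a section is injective up to identification.
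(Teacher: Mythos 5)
Your proof is correct and takes essentially the same route as the paper: the same retraction \(r_z(w) \colonequiv (z \below_{\V} w)\), the same appeal to \cref{sbelow-trans} to upgrade \(x \sbelow y \below z\) to \(x \sbelow z\), and the same calculation \(\One_{\V} = r_z(\Delta_{x,z}(\One_{\V})) = r_z(\Delta_{x,z}(P)) = P\) for the converse. Nothing to add.
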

\begin{proof}
  Suppose first that \((X,\below,x,y)\) is positive and locally small and let
  \(z \aboveorder y\) be arbitrary. We define
  \begin{align*}
    r_z : X &\mapsto \Omega_{\V} \\
    w &\mapsto z \below_{\V} w.
  \end{align*}
  Let \(P : \V\) be arbitrary proposition. We want to show that
  \(r_z(\Delta_{x,z}(P)) = P\). Because of propositional extensionality, it
  suffices to establish a logical equivalence between \(P\) and
  \(r_z(\Delta_{x,z}(P))\).
  Suppose first that \(P\) holds. Then \(\Delta_{x,z}(P) = z\), so
  \(r_z(\Delta_{x,z}(P)) = r_z(z) \equiv \pa*{z \below_{\V} z}\) holds as well
  by reflexivity.
  Conversely, assume that \(r_z(\Delta_{x,z}(P))\) holds, \ie\ that we have
  \(z \below_{\V} \bigvee \delta_{x,z,P}\). Since
  \({\bigvee \delta_{x,z,P} \below z}\) always holds, we get
  \(z = \bigvee \delta_{x,z,P}\) by antisymmetry. But by assumption
  and~\cref{sbelow-trans}, the element \(x\) is strictly~below~\(z\), so \(P\)
  must hold.

  For the converse, assume that for every \(z \aboveorder y\), the map
  \(\Delta_{x,z} : \Omega_{\V} \to X\) has a retraction
  \(r_z : X \to \Omega_{\V}\). We must show that the equality
  \(z = \Delta_{x,z}(P)\) implies \(P\) for every \(z \aboveorder y\) and
  proposition \(P : \V\). Assuming \(z = \Delta_{x,z}(P)\), we have
  \(\One_{\V} = r_z(\Delta_{x,z}(\One_{\V})) = r_z(z) = r_z(\Delta_{x,z}(P)) =
  P\), so \(P\) must hold indeed. Hence, \((X,\below,x,y)\) is positive, as
  desired.
\end{proof}

\subsection{Small Completeness with Resizing}
\label{sec:small-completeness-with-resizing}
We present our main theorems here, which show that, constructively and
predicatively, nontrivial \deltacomplete{\V} posets are necessarily large and
necessarily lack decidable equality.

\begin{defi}[Smallness]
  A \deltacomplete{\V} poset is \emph{small} if it is locally small and its
  carrier is \(\V\)-small.
\end{defi}

\begin{thm}\label{nontrivial-impredicativity}\label{positive-impredicativity}\hfill
  \begin{enumerate}[(i)]
  \item\label{nontrivial-impredicativity-1} There is a nontrivial small
    \deltacomplete{\V} poset if and only if \(\Omeganotnotresizingalt{\V}\)
    holds.
  \item\label{positive-impredicativity-2} There is a positive small
    \deltacomplete{\V} poset if and only if \(\Omegaresizingalt{\V}\) holds.
  \end{enumerate}
\end{thm}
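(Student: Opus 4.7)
The plan is to handle both biconditionals simultaneously, treating each as a pair of implications. The forward (``$\Rightarrow$'') directions will be near-immediate applications of the retract lemmas (\cref{nontrivial-retract,positive-retract}) together with the size-closure result for embedded retracts (\cref{is-small-retract}), while the backward directions reduce to exhibiting concrete $\delta_\V$-complete posets, namely $\Omega^{\lnot\lnot}_\V$ and $\Omega_\V$.

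For the forward direction of~\ref{nontrivial-impredicativity-1}, suppose $(X,\below,x,y)$ is a nontrivial small \deltacomplete{\V} poset. Smallness means $X$ is $\V$-small and locally $\V$-small. By~\cref{nontrivial-retract} we have a section $s : \Omega^{\lnot\lnot}_\V \to X$ (namely the composite $\Omega^{\lnot\lnot}_\V \hookrightarrow \Omega_\V \xrightarrow{\Delta_{x,y}} X$). Since $X$ is a poset it is a set, and sections to sets are embeddings by~\cref{embeddings-sections-lemmas}(iii). Thus $s$ is a section-embedding, and~\cref{is-small-retract}(i) yields that $\Omega^{\lnot\lnot}_\V$ is $\V$-small, i.e.\ $\Omeganotnotresizingalt{\V}$ holds. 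The forward direction of~\ref{positive-impredicativity-2} is analogous, using $\Delta_{x,y} : \Omega_\V \to X$ (the case $z = y$ of~\cref{positive-retract}) as the section-embedding.

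For the backward direction of~\ref{positive-impredicativity-2}, assume $\Omegaresizingalt{\V}$. Then $\Omega_\V$ is $\V$-small, is locally $\V$-small by propositional extensionality, and is a \deltacomplete{\V} poset (indeed, a $\V$-sup-lattice, by~\cref{examples-of-delta-complete-posets}(ii)). It is positive with witness $\Zero_\V \sbelow \One_\V$ by~\cref{examples-strictly-below}(i). Hence it is a positive small \deltacomplete{\V} poset.

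The slightly more delicate converse is that of~\ref{nontrivial-impredicativity-1}: assume $\Omeganotnotresizingalt{\V}$ and exhibit a nontrivial small \deltacomplete{\V} poset. The natural candidate is $\Omega^{\lnot\lnot}_\V$ ordered by implication, which is $\V$-small by assumption and locally $\V$-small by propositional extensionality together with the fact that $\lnot\lnot$-stability is a property. The substantive step, and the main obstacle, is showing that $\Omega^{\lnot\lnot}_\V$ is \deltacomplete{\V}: given $\lnot\lnot$-stable propositions $P \below Q$ and an arbitrary proposition $R : \V$, I take
\[
  \bigvee \delta_{P,Q,R} \colonequiv \lnot\lnot\pa*{P \vee (Q \wedge R)},
\]
which is $\lnot\lnot$-stable by construction. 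It is an upper bound since $P \below \lnot\lnot P \below \lnot\lnot(P \vee (Q \wedge R))$ and $R$ yields $Q = Q \wedge R \below \lnot\lnot(P \vee (Q \wedge R))$; and it is least among $\lnot\lnot$-stable upper bounds $S$, because then $P \vee (Q \wedge R) \below S$ in $\Omega_\V$, so $\lnot\lnot(P \vee (Q \wedge R)) \below \lnot\lnot S = S$. Nontriviality is witnessed by $\Zero_\V \below \One_\V$, which are distinct and $\lnot\lnot$-stable. This concludes the proof.
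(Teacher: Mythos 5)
Your proof is correct and follows essentially the same route as the paper: the forward directions go through \cref{nontrivial-retract,positive-retract} together with \cref{is-small-retract} (and the observation that posets, being sets, make the relevant sections into embeddings), and the backward directions exhibit \(\Omeganotnot{\V}\) and \(\Omega_{\V}\) as the witnessing posets. The paper dispatches the \(\delta_{\V}\)-completeness of \(\Omeganotnot{\V}\) by simply stating that it is a \(\V\)-sup-lattice with \(\bigvee\alpha \colonequiv \lnot\lnot\exists_{i:I}\alpha_i\); your explicit verification of \(\bigvee\delta_{P,Q,R} = \lnot\lnot(P \vee (Q\wedge R))\) is a correct specialization of the same formula.
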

\begin{proof}\hfill
  \begin{enumerate}[(i)]
  \item Suppose that \((X,\below,x,y)\) is a nontrivial small \deltacomplete{\V}
    poset. By \cref{nontrivial-retract}, we can exhibit \(\Omeganotnot{\V}\) as
    a retract of \(X\). But \(X\) is \(\V\)-small by assumption, so
    by~\cref{is-small-retract} the type \(\Omeganotnot{\V}\) is \(\V\)-small as
    well.
    For the converse, note that
    \(\pa*{\Omeganotnot{\V},\to,\Zero_{\V},\One_{\V}}\) is a nontrivial locally
    small \(\V\)-sup-lattice with \(\bigvee \alpha\) given by
    \(\lnot\lnot\exists_{i : I}\alpha_i\). And if we assume
    \(\Omeganotnotresizingalt{\V}\), then it is small.
  \item Suppose that \((X,\below,x,y)\) is a positive small poset. By
    \cref{positive-retract}, we can exhibit \(\Omega_{\V}\) as a retract of
    \(X\). But \(X\) is \(\V\)-small by assumption, so
    by~\cref{is-small-retract} the type \(\Omega_{\V}\) is \(\V\)-small as well.
    For the converse, note that \(\pa*{\Omega_{\V},\to,\Zero_\V,\One_\V}\) is a
    positive locally small \(\V\)-sup-lattice. And if we assume
    \(\Omegaresizingalt{\V}\), then it is small. \qedhere
  \end{enumerate}
\end{proof}

\begin{lemC}[{\cite[{\mkTTurl{TypeTopology.DiscreteAndSeparated}}]{TypeTopology}}]\hfill
  \label{equality-retract}
  \begin{enumerate}[(i)]
  \item Types with decidable equality are closed under retracts.
  \item Types with \(\lnot\lnot\)-stable equality are closed under retracts.
  \end{enumerate}
\end{lemC}

\begin{exas}[Types with \(\lnot\lnot\)-stable equality]%
  \label{types-with-not-not-stable-equality}
  The simple types \(\Nat\), \({\Nat \to \Nat}\), \({\Nat \to \Nat \to \Nat}\),
  etc.~{\cite[{\mkTTurl{TypeTopology.SimpleTypes}}]{TypeTopology}}, and the type
  of Dedekind real
  numbers~{\cite[{\mkTTurl{Various.Dedekind}}]{TypeTopology}} all have
  \(\lnot\lnot\)-stable equality, as does the type \(\Omeganotnot{\U}\) of
  \(\lnot\lnot\)\nobreakdash-stable propositions in any universe \(\U\).
\end{exas}

\begin{thm}\label{nontrivial-weak-em}
  There is a nontrivial locally small \deltacomplete{\V} poset with decidable
  equality if and only if weak excluded middle in \(\V\) holds.
\end{thm}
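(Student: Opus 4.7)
The plan is to mirror the strategy of \cref{nontrivial-impredicativity}, but now transferring decidable equality rather than smallness along the retraction supplied by \cref{nontrivial-retract}.

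For the forward direction, let $(X, \below, x, y)$ be a nontrivial locally small \deltacomplete{\V} poset with decidable equality. By \cref{nontrivial-retract}, the type $\Omeganotnot{\V}$ is a retract of $X$, and by \cref{equality-retract}, decidable equality is inherited by retracts, so $\Omeganotnot{\V}$ has decidable equality. To deduce weak excluded middle, let $P : \V$ be any proposition; then $\lnot\lnot P$ is $\lnot\lnot$-stable and so defines an element of $\Omeganotnot{\V}$. We can therefore decide whether $\lnot\lnot P = \Zero_{\V}$. By propositional extensionality, this equality is equivalent to $\lnot\lnot\lnot P$, i.e.\ to $\lnot P$, while its negation unfolds to $\lnot\lnot P$. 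Either way we obtain $\lnot P \vee \lnot\lnot P$, which is weak excluded middle in $\V$.

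For the converse, assume weak excluded middle in $\V$ and consider the poset $(\Omeganotnot{\V}, {\to}, \Zero_{\V}, \One_{\V})$. As observed in the proof of \cref{nontrivial-impredicativity}, this is a nontrivial locally small $\V$-sup-lattice, and hence in particular a \deltacomplete{\V} poset by \cref{examples-of-delta-complete-posets}. It remains to verify decidable equality: given $P, Q : \Omeganotnot{\V}$, weak excluded middle, in its equivalent formulation that $\lnot\lnot$-stable propositions are decidable, lets us decide each of $P$ and $Q$ individually; by propositional extensionality, $P = Q$ is then equivalent to the biconditional $P \iff Q$, which is decidable once $P$ and $Q$ are.

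The only subtle point is the translation between decidable equality on $\Omeganotnot{\V}$ and the decidability of $\lnot\lnot$-stable propositions, but in both directions this reduces to a short propositional-extensionality argument, and no new technology beyond \cref{nontrivial-retract,equality-retract} is required.
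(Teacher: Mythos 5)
Your proposal is correct and follows essentially the same route as the paper: both directions hinge on \cref{nontrivial-retract} to exhibit $\Omeganotnot{\V}$ as a retract, \cref{equality-retract} to transport decidable equality, and the observation that weak excluded middle is precisely decidability of $\lnot\lnot$-stable propositions. The paper's forward direction is phrased more tersely (it works with the negated proposition $\lnot P$ rather than $\lnot\lnot P$, noting that negated propositions are $\lnot\lnot$-stable), but this is merely a cosmetic difference, and your spelled-out version of both directions fills in the same details the paper leaves to the reader.
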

\begin{proof}
  Suppose that \((X,\below,x,y)\) is a nontrivial locally small
  \deltacomplete{\V} poset with decidable equality. Then by
  Lemmas~\ref{nontrivial-retract} and \ref{equality-retract}, the type
  \(\Omeganotnot{\V}\) must have decidable equality too. But negated
  propositions are \(\lnot\lnot\)-stable, so this yields weak excluded middle in
  \(\V\). For the converse, note that
  \(\pa*{\Omeganotnot{\V},\to,\Zero_\V,\One_\V}\) is a nontrivial locally small
  \(\V\)-sup-lattice that has decidable equality if and only if weak excluded
  middle in \(\V\) holds.
\end{proof}

\begin{thm}\label{positive-em} The following are equivalent:
  \begin{enumerate}[(i)]
  \item\label{positive-em-1} there is a positive locally small
    \deltacomplete{\V} poset with \(\lnot\lnot\)-stable equality;
  \item\label{positive-em-2} there is a positive locally small
    \deltacomplete{\V} poset with decidable equality;
  \item\label{positive-em-3} excluded middle in \(\V\) holds.
  \end{enumerate}
\end{thm}
\begin{proof}
  Note that
  \(\textup{\ref{positive-em-2}}\Rightarrow\textup{\ref{positive-em-1}}\), so we
  are left to show that
  \(\textup{\ref{positive-em-3}}\Rightarrow\textup{\ref{positive-em-2}}\) and
  that
  \(\textup{\ref{positive-em-1}}\Rightarrow\textup{\ref{positive-em-3}}\). For
  the first implication, note that \(\pa*{\Omega_{\V},\to,\Zero_\V,\One_\V}\) is
  a positive locally small \(\V\)-sup-lattice that has decidable equality if and
  only if excluded middle in \(\V\) holds.
  To see that \ref{positive-em-1}~implies~\ref{positive-em-3}, suppose that
  \((X,\below,x,y)\) is a positive locally small \deltacomplete{\V} poset with
  \(\lnot\lnot\)-stable equality. Then by
  Lemmas~\ref{positive-retract}~and~\ref{equality-retract} the type
  \(\Omega_{\V}\) must have \(\lnot\lnot\)-stable equality. But this implies
  that \(\lnot\lnot P \to P\) for every proposition \(P\) in \(\V\) which is
  equivalent to excluded middle in \(\V\).
\end{proof}

In particular, \cref{positive-em}\ref{positive-em-1} shows that, constructively,
none of the types from~\cref{types-with-not-not-stable-equality} can be equipped
with the structure of a positive \deltacomplete{\V} poset.
In particular, we cannot expect the type of Dedekind reals to be a positive
bounded complete poset.

Lattices, bounded complete posets and dcpos are necessarily large and
necessarily lack decidable equality in our predicative constructive setting.
More precisely:
\begin{cor}\hfill
  \begin{enumerate}[(i)]
  \item There is a nontrivial small \(\V\)-sup-lattice (or \(\V\)-bounded complete
    poset or \(\V\)-dcpo) if~and~only~if \(\Omeganotnotresizingalt{\V}\) holds.
  \item There is a positive small \(\V\)-sup-lattice (or \(\V\)-bounded complete
    poset or \(\V\)-dcpo) if~and~only~if \(\Omegaresizingalt{\V}\) holds.
  \item There is a nontrivial locally small \(\V\)-sup-lattice (or
    \(\V\)-bounded complete poset or \(\V\)-dcpo) with decidable equality if and
    only if weak excluded middle in \(\V\) holds.
  \item There is a positive locally small \(\V\)-sup-lattice (or \(\V\)-bounded
    complete poset or \(\V\)-dcpo) with decidable equality if and only if
    excluded middle in \(\V\) holds.
  \end{enumerate}
\end{cor}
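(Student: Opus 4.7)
The plan is to reduce each of the four statements directly to the corresponding theorem already proved about general \(\delta_\V\)-complete posets, namely Theorems~\ref{nontrivial-impredicativity}, \ref{positive-impredicativity}, \ref{nontrivial-weak-em} and~\ref{positive-em}. The corollary is essentially a specialization: the three specific classes of posets mentioned (sup-lattices, bounded complete posets, dcpos) are all instances of \(\delta_\V\)-complete posets, and conversely all the witnessing examples used in the previous theorems belong to all three classes.

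For the forward (``only if'') directions, I would appeal to \cref{examples-of-delta-complete-posets}, which explicitly records that every \(\V\)-sup-lattice, every \(\V\)-bounded complete poset and every \(\V\)-dcpo is a \(\delta_\V\)-complete poset. Thus any nontrivial/positive small or locally small instance with decidable equality in one of these classes is in particular a nontrivial/positive \(\delta_\V\)-complete poset of the same size and local-size, and the desired resizing or (weak) excluded middle principle follows by applying the corresponding one of Theorems~\ref{nontrivial-impredicativity}--\ref{positive-em}.

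For the backward (``if'') directions, the converses of those same theorems already supply a witness, and it suffices to observe that the witness lies in all three classes simultaneously. Specifically, \(\pa*{\Omega_\V,\to,\Zero_\V,\One_\V}\) is a locally small \(\V\)-sup-lattice: arbitrary small suprema are given by existential quantification \(\bigvee_{i : I}P_i \colonequiv \exists_{i:I}P_i\), and local smallness is by propositional extensionality. Since every \(\V\)-sup-lattice is automatically a \(\V\)-bounded complete poset and a \(\V\)-dcpo (every family, in particular every bounded or directed one, admits a supremum), this single object witnesses all three cases at once. It is positive, by \cref{examples-nontrivial-positive}, with \((\Zero_\V,\One_\V)\). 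Assuming \(\Omegaresizingalt{\V}\) it is small; assuming excluded middle in \(\V\) it additionally has decidable equality. The analogous witness for the nontrivial/weak excluded middle cases is \(\pa*{\Omeganotnot{\V},\to,\Zero_\V,\One_\V}\), with suprema given by \(\lnot\lnot\exists_{i:I}P_i\) to remain inside \(\Omeganotnot{\V}\); this is a nontrivial locally small \(\V\)-sup-lattice, and hence a \(\V\)-bounded complete poset and \(\V\)-dcpo, small under \(\Omeganotnotresizingalt{\V}\) and with decidable equality under weak excluded middle in \(\V\).

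No step is really hard, but the one requiring the most care is the verification that the chosen witnesses genuinely inhabit all three classes simultaneously — in particular, that forming suprema by \(\lnot\lnot\exists\) keeps us inside \(\Omeganotnot{\V}\), so that \(\Omeganotnot{\V}\) is a \(\V\)-sup-lattice (and therefore also a \(\V\)-bounded complete poset and a \(\V\)-dcpo), rather than merely a \(\delta_\V\)-complete poset. Once that is in hand, the four items follow by a single invocation of the relevant earlier theorem in each direction.
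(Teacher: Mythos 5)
Your proposal is correct and is exactly the argument the paper intends: the forward directions specialize Theorems~\ref{nontrivial-impredicativity}, \ref{positive-impredicativity}, \ref{nontrivial-weak-em} and~\ref{positive-em} via \cref{examples-of-delta-complete-posets}, and the backward directions reuse the witnesses \(\pa*{\Omega_{\V},\to,\Zero_\V,\One_\V}\) and \(\pa*{\Omeganotnot{\V},\to,\Zero_\V,\One_\V}\) already exhibited in the proofs of those theorems, observing that a \(\V\)-sup-lattice automatically provides suprema of bounded and directed families. Your care about \(\lnot\lnot\exists\) remaining in \(\Omeganotnot{\V}\) is justified but already handled in the paper's proof of Theorem~\ref{nontrivial-impredicativity}, which explicitly gives that supremum formula.
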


The above notions of non-triviality and positivity are data rather than
property. Indeed, a nontrivial poset \((X,\below)\) is (by definition) equipped
with two designated points \(x,y : X\) such that \(x \below y\) and
\(x \neq y\). It is natural to wonder if the propositionally truncated versions
of these two notions yield the same conclusions. We show that this is indeed the
case if we assume univalence. The need for the univalence assumption comes from
the fact that smallness is a property precisely if univalence holds, as shown in
Propositions~\ref{is-small-is-prop}~and~\ref{is-small-univalence}.

\begin{defi}[Nontrivial/positive in an unspecified way]
  A poset \((X,\below)\) is \emph{nontrivial in an unspecified way} if there
  exist some elements \(x,y : X\) such that \(x \below y\) and \(x \neq y\),
  \ie\
  \(\exists_{x,y: X}\pa*{\pa*{x \below y} \times \pa*{x \neq y}}\).
  Similarly, we can define when a poset is \emph{positive in an unspecified way}
  by truncating the notion of positivity.
\end{defi}

\begin{thm}
  Suppose that the universes \(\V\) and \(\V^+\) are univalent.
  \begin{enumerate}[(i)]
  \item\label{unspecified-1} There is a small \deltacomplete{\V} poset that is
    nontrivial in an unspecified way if and only if
    \(\Omeganotnotresizingalt{\V}\) holds.
  \item\label{unspecified-2} There is a small \deltacomplete{\V} poset that is
    positive in an unspecified way if and only if \(\Omegaresizingalt{\V}\)
    holds.
  \end{enumerate}
\end{thm}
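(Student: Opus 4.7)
The plan is to reduce both parts of this theorem to the corresponding parts of \cref{nontrivial-impredicativity} by eliminating the propositional truncations, which is legitimate because the resizing axioms on the right-hand sides are themselves propositions under our univalence hypotheses.

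For the backward directions, I would recycle the same examples used in the proof of \cref{nontrivial-impredicativity}: the poset $\pa*{\Omeganotnot{\V},\to,\Zero_{\V},\One_{\V}}$ is a nontrivial locally small $\V$-sup-lattice and $\pa*{\Omega_{\V},\to,\Zero_{\V},\One_{\V}}$ is a positive locally small $\V$-sup-lattice, both witnessed in a specified way. In particular they are nontrivial, respectively positive, in the unspecified sense, and under the respective resizing assumptions they are small.

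For the forward direction of (i), suppose we have a small \deltacomplete{\V} poset $(X,\below)$ together with an inhabitant of $\exists_{x : X}\exists_{y : X}\pa*{\pa*{x \below y} \times \pa*{x \neq y}}$. The goal is $\Omeganotnotresizingalt{\V}$, which unfolds to $\Omeganotnot{\V} \issmall{\V}$. By \cref{is-small-is-prop} applied with $\U \colonequiv \V^+$, this is a proposition, because the hypothesis of that result requires $\V$ and $\V \sqcup \V^+ \equiv \V^+$ to be univalent, which is exactly what the theorem assumes. Since the goal is a proposition, the universal property of propositional truncation lets us assume specified witnesses $x,y : X$ with $x \below y$ and $x \neq y$, and then \cref{nontrivial-impredicativity}\ref{nontrivial-impredicativity-1} delivers the desired resizing. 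Part (ii) is entirely analogous: the same universe calculation shows that $\Omegaresizingalt{\V}$ is a proposition via \cref{is-small-is-prop}, and \cref{positive-impredicativity}\ref{positive-impredicativity-2} supplies the conclusion once specified positivity witnesses are extracted.

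The main point to verify carefully is the universe bookkeeping for \cref{is-small-is-prop}: the hypotheses of the present theorem (univalence of $\V$ and $\V^+$) were arranged precisely to match that bookkeeping, since $\V \sqcup \V^+ \equiv \V^+$. Beyond this check, the argument is a routine application of propositional-truncation elimination into a proposition, and I do not anticipate any further obstacle.
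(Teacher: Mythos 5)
Your proof is correct and follows the same strategy as the paper: use \cref{is-small-is-prop} (with the given univalence hypotheses, and indeed with the universe instantiation $\U \colonequiv \V^+$ so that $\U \sqcup \V \equiv \V^+$) to see that the goal is a proposition, eliminate the propositional truncation to obtain specified witnesses, and then reduce to \cref{nontrivial-impredicativity} and \cref{positive-impredicativity}. The remark that the backward directions are inherited directly from the specified examples of the earlier theorem (via $\tosquash{-}$) is a detail the paper leaves implicit, but it is correct.
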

\begin{proof}\hfill%
  \begin{enumerate}[(i)]
  \item Suppose that \((X,\below)\) is a \deltacomplete{\V} poset that is
    nontrivial in an unspecified way.  By~\cref{is-small-is-prop} and univalence
    of \(\V\) and \(\V^+\), the type \(``\Omeganotnot{\V}\issmall{\V}"\) is a
    proposition. By the universal property of the propositional truncation, in
    proving that the type \(\Omeganotnot{\V}\) is \(\V\)-small we can therefore
    assume that are given points \(x,y : X\) with \(x \below y\) and
    \(x \neq y\). The result then follows
    from~\cref{nontrivial-impredicativity}.
  \item By reduction to item~\ref{positive-impredicativity-2} of
    \cref{positive-impredicativity}. \qedhere
  \end{enumerate}
\end{proof}
Similarly, we can prove the following theorems by reduction to
Theorems~\ref{nontrivial-weak-em}~and~\ref{positive-em}.

\begin{thm}\hfill
  \begin{enumerate}[(i)]
  \item There is a locally small \deltacomplete{\V} poset with decidable
    equality that is nontrivial in an unspecified way if and only if weak
    excluded middle in \(\V\) holds.
  \item There is a locally small \deltacomplete{\V} poset with decidable
    equality that is positive in an unspecified way if and only if excluded
    middle in \(\V\) holds.
  \end{enumerate}
\end{thm}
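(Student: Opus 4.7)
The plan is to mirror the reduction used in the preceding theorem, but to note that here \emph{no univalence assumption is required}. The reason is that the intended conclusions — weak excluded middle in \(\V\) and excluded middle in \(\V\) — are already propositions by function extensionality, being universal quantifications over the set \(\Omega_\V\) of the propositions \(\lnot P \vee \lnot\lnot P\) and \(P \vee \lnot P\), respectively. This is the only observation we need in order to eliminate the unspecified-truncation.

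For the forward direction of~(i), assume \((X,\below)\) is a locally small \deltacomplete{\V} poset with decidable equality that is nontrivial in an unspecified way. Since weak excluded middle in~\(\V\) is a proposition, the recursion principle of the propositional truncation lets us assume we are given actual witnesses \(x,y : X\) with \(x \below y\) and \(x \neq y\). Then \((X,\below,x,y)\) is a nontrivial locally small \deltacomplete{\V} poset with decidable equality in the untruncated sense, and \cref{nontrivial-weak-em} delivers weak excluded middle in \(\V\). Conversely, assuming weak excluded middle in \(\V\), the structure \(\pa*{\Omeganotnot{\V},\to,\Zero_\V,\One_\V}\) is a locally small \(\V\)-sup-lattice (hence \deltacomplete{\V} by \cref{examples-of-delta-complete-posets}) and is nontrivial in an unspecified way as witnessed by \(\Zero_\V \below \One_\V\); moreover, under weak excluded middle every \(\lnot\lnot\)-stable proposition is decidable, so by propositional extensionality \(\Omeganotnot{\V}\) has decidable equality. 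This is the required example.

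Part~(ii) proceeds identically: excluded middle in \(\V\) is a proposition, so the unspecified positivity may be stripped by propositional-truncation elimination to produce genuine positivity witnesses \(x \sbelow y\), at which point \cref{positive-em} yields excluded middle in~\(\V\). Conversely, under excluded middle in \(\V\), the \(\V\)-sup-lattice \(\pa*{\Omega_\V,\to,\Zero_\V,\One_\V}\) is positive, locally small, and has decidable equality. The only point worth checking carefully is that the target conclusions really are propositions without univalence — once this is in hand, the ``hard part'' of the preceding theorem (needing univalence to make \(\V\)-smallness a proposition) disappears, and the rest is a direct reduction to \cref{nontrivial-weak-em,positive-em}.
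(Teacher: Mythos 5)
Your proof is correct and carries out exactly the reduction the paper has in mind (the paper only says the result follows ``similarly'' by reduction to \cref{nontrivial-weak-em,positive-em}, without spelling out the details). Your explicit observation that no univalence assumption is needed here — because (weak) excluded middle in \(\V\) is already a proposition by function extensionality, so the propositional truncation can be eliminated directly, unlike in the preceding theorem where univalence was required to make \(\V\)-smallness a proposition — is precisely the key point that justifies the absence of a univalence hypothesis in this theorem's statement.
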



\section{Maximal Points and Fixed Points}\label{sec:maximal-and-fixed-points}

As is well known, in impredicative mathematics, a poset has suprema of all
subsets if and only if it has infima of all subsets.
Perhaps counter-intuitively, this ``duality'' theorem can be proved
predicatively. However, in the absence of impredicativity, it is not possible to
fulfil its hypotheses when trying to apply it, because there are no nontrivial
examples.

To explain this, we first have to make the statement of the duality theorem
precise. A~single universe formulation is ``every \(\V\)-small
\(\V\)-sup-lattice has all infima of families indexed by types in \(\V\)''.
The usual proof, adapted from subsets to families, shows that this formulation
is predicatively provable, but in our predicative setting
\cref{nontrivial-impredicativity} tells us that there are no nontrivial examples
to apply it to.

It is natural to wonder whether the single universe formulation can be
generalized to \emph{locally} small \(\V\)-sup-lattices (with necessarily large
carriers), resulting in a predicatively useful result.
However, as one of the anonymous reviewers pointed out that this generalization
gives rise to a false statement and suggested the ordinals as a counterexample
in a set-theoretic setting: it is a class with suprema for all subsets but has
no greatest element.
This led us to prove~(\cref{sec:small-suprema-of-ordinals}) in our
type-theoretic context that the locally small, but large type of ordinals in a
univalent universe~\(\V\) is a \(\V\)-sup-lattice. But this is not a
\(\V\)-inf-lattice, because the unique family indexed by the empty type does not
have a greatest lower bound since the type of ordinals has no greatest element.

Similarly, consider a generalized formulation of Tarski's
theorem~\cite{Tarski1955} that allows for multiple universes, \ie\ we define
\(\text{\emph{Tarski's-Theorem}}_{\V,\U,\T}\) as the assertion that every
monotone endofunction on a \(\V\)-sup-lattice with carrier in a universe \(\U\)
and order taking values in a universe \(\T\) has a greatest fixed point.
Then \(\Tarski{\V}{\V}{\V}\) corresponds to the original formulation and,
moreover, is provable predicatively, but not useful predicatively because
Theorem 4.24 shows that its hypotheses can only be fulfilled for trivial posets.
On~the other hand, \(\Tarski{\V}{\V^+}{\V}\) is provably false because the
identity map on the \(\V\)-sup-lattice of ordinals in \(\V\) is a
counterexample.
Analogous considerations can be made for a lemma due to
Pataraia~\cite{Pataraia1997,Escardo2003} saying that every dcpo has a greatest
monotone inflationary endofunction.

\subsection{A Predicative Counterexample}\label{sec:predicative-counterexample}

Because the type of ordinals in \(\V\) is not \(\V\)-small even impredicatively,
the above does not rule out the possibility that a \(\V\)-sup-lattice \(X\) has
all \(\V\)-infima provided \(X\) is \(\V\)-small impredicatively.
To address this, we present an example of a \(\V\)-sup-lattice, parameterized by
a proposition, that is \(\V\)-small impredicatively, but predicatively does not
necessarily have a maximal element. In particular, it need not have a greatest
element or all \(\V\)-infima.

\begin{defi}[Lifting, \cf~\cite{EscardoKnapp2017}]\label{lifting-of-prop}
  Fix a proposition \(P_\U\) in a universe \(\U\). Lifting \(P_\U\) with respect
  to a universe \(\V\) is defined by
  \[
    \lifting_{\V}\pa*{P_\U} \colonequiv \sum_{Q : \Omega_\V} \pa*{Q \to P_\U}.
  \]
\end{defi}
This is a subtype of \(\Omega_\V\) (the map
\(\fst : \lifting_{\V}\pa*{P_\U} \to \Omega_\V\) is an embedding) and it is
closed under \(\V\)-suprema (in particular, it contains the least~element).

\begin{exas}\hfill
  \begin{enumerate}[(i)]
  \item If \(P_\U \colonequiv \Zero_\U\), then
    \( \lifting_{\V}(P_\U) \simeq \pa*{\Sigma_{Q : \Omega_\V} \lnot Q} \simeq
    \pa*{\Sigma_{Q : \Omega_\V} \pa*{Q = \Zero_{\V}}} \simeq \One \).
  \item If \(P_\U \colonequiv \One_\U\), then
    \( \lifting_{\V}(P_\U) \equiv \pa*{\Sigma_{Q : \Omega_\V} \pa*{Q \to \One_\U}}
    \simeq \Omega_\V \).
  \end{enumerate}
\end{exas}
What makes \(\lifting_{\V}(P_\U)\) useful is the following observation.
\begin{lem}\label{maximal-iff-resize}
  Suppose that the poset \(\lifting_{\V}(P_\U)\) has a maximal element
  \(Q : \Omega_\V\). Then \(P_\U\) is equivalent to \(Q\), which is the
  greatest element of \(\lifting_{\V}(P_\U)\). In particular,
  \(P_\U\) is \(\V\)-small.
  Conversely, if \(P_\U\) is equivalent to a proposition \(Q : \Omega_\V\), then
  \(Q\) is the greatest element of~\(\lifting_{\V}(P_\U)\).
\end{lem}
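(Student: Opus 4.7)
My plan is to unpack the definition of $\lifting_{\V}(P_\U)$ and exploit the fact that it sits as a subtype of $\Omega_\V$ via the projection $\fst$. Recall that an element of $\lifting_\V(P_\U)$ is a pair $(Q,f)$ with $Q : \Omega_\V$ and $f : Q \to P_\U$, and the order is inherited from implication on first components. The key observation that drives everything is that $\One_\V$ can be paired with any $p : P_\U$ via the constant map $\lambda \_. p : \One_\V \to P_\U$ to produce an element $(\One_\V,\lambda\_.\,p)$ of $\lifting_\V(P_\U)$ whose first component is the top of $\Omega_\V$.

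For the forward direction, assume that $(Q,f)$ is a maximal element. First I would show $P_\U \to Q$: given $p : P_\U$, form $(\One_\V,\lambda\_.\,p) \in \lifting_\V(P_\U)$. Since the trivial implication $Q \to \One_\V$ holds, $(Q,f)$ sits below $(\One_\V,\lambda\_.\,p)$ in the order, so by maximality the two elements are equal, which forces $Q$ to be (equal as a proposition to) $\One_\V$ and hence to hold. Combined with the map $f : Q \to P_\U$ that comes with the lifting element, this gives the equivalence $P_\U \simeq Q$, and since $Q : \V$ this witnesses that $P_\U$ is $\V$-small.

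Next I would verify that $(Q,f)$ is in fact the greatest element, not merely a maximal one. Take an arbitrary $(Q',f') \in \lifting_\V(P_\U)$; I need $Q' \to Q$. If $q' : Q'$, then $f'(q') : P_\U$, and the equivalence just established yields an element of $Q$. Thus $Q' \to Q$, i.e.\ $(Q',f') \leq (Q,f)$, as desired.

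For the converse, suppose $P_\U$ is equivalent to some proposition $Q : \Omega_\V$, with maps $g : Q \to P_\U$ and $h : P_\U \to Q$. Then $(Q,g)$ belongs to $\lifting_\V(P_\U)$, and for any $(Q',f')$ in the lifting, if $q' : Q'$ then $h(f'(q')) : Q$, so $Q' \to Q$ and $(Q',f') \leq (Q,g)$. Hence $(Q,g)$ is the greatest element. I do not anticipate any real obstacle here; the only point that requires a moment of attention is the choice of test element $(\One_\V,\lambda\_.\,p)$ in the forward direction, which turns the hypothesis of maximality into a usable fact.
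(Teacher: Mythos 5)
Your proof is correct and takes essentially the same approach as the paper: both argue that when \(P_\U\) holds, \(\One_\V\) becomes an element of \(\lifting_\V(P_\U)\) above \(Q\), so maximality forces \(Q = \One_\V\), giving \(P_\U \to Q\), while \(Q \to P_\U\) is supplied by the lifting structure itself. The paper's verification that \(Q\) is in fact greatest (and the converse direction) is stated more tersely via the equivalence \(\lifting_\V(P_\U) \simeq \Sigma_{Q' : \Omega_\V}(Q' \to Q)\), but this amounts to the same elementwise argument you spell out.
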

\begin{proof}
  Suppose that \(\lifting_{\V}(P_\U)\) has a maximal element \(Q :
  \Omega_\V\). We wish to show that \(Q \simeq P_\U\). By definition of
  \(\lifting_{\V}(P_\U)\), we already have that \(Q \to P_\U\). So only the
  converse remains. Therefore suppose that \(P_\U\) holds. Then, \(\One_\V\) is
  an element of \(\lifting_{\V}(P_\U)\). Obviously \(Q \to 1_\V\), but \(Q\) is
  maximal, so actually \(Q = 1_\V\), that is, \(Q\) holds, as
  desired. Thus, \(Q \simeq P_\U\). It~is then straightforward to see that \(Q\)
  is actually the greatest element of \(\lifting_{\V}(P_\U)\), since
  \(\lifting_{\V}(P_\U) \simeq \Sigma_{Q' : \Omega_\V}(Q' \to Q)\).
  For the converse, assume that \(P_\U\) is equivalent to a proposition
  \(Q : \Omega_\V\). Then, as before,
  \(\lifting_{\V}(P_\U) \simeq \Sigma_{Q' : \Omega_\V}(Q' \to Q)\), which shows
  that \(Q\) is indeed the greatest element of~\(\lifting_{\V}(P_\U)\).
\end{proof}

\begin{cor}
  The \(\V\)-sup-lattice \(\lifting_{\V}(P_{\U})\) has all \(\V\)-infima if and
  only if \(P_{\U}\) is \(\V\)-small.
\end{cor}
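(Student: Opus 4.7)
For the forward direction, assume \(\lifting_{\V}(P_\U)\) has all \(\V\)-infima. I would take the infimum of the empty family, indexed by \(\Zero_{\V} : \V\); this infimum is automatically the greatest element of \(\lifting_{\V}(P_\U)\), because every element of any poset is vacuously a lower bound of the empty family. By \cref{maximal-iff-resize}, possessing a greatest element forces \(P_\U\) to be equivalent to a proposition in \(\V\), that is, \(\V\)-small.

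For the backward direction, assume \(P_\U\) is \(\V\)-small, so there is \(Q : \Omega_{\V}\) equivalent to \(P_\U\); let \(\epsilon : Q \to P_\U\) be one direction of the equivalence. Given any \(\V\)-family \(\alpha : I \to \lifting_{\V}(P_\U)\) with components \(\alpha(i) = (R_i, f_i)\), I would propose the infimum
\[
  \Bigl(Q \times \prod_{i : I} R_i,\ (q, r) \mapsto \epsilon(q)\Bigr),
\]
whose first component is a proposition living in \(\V\), being a product over the \(\V\)-small type \(I\) of \(\V\)-small propositions, conjoined with \(Q : \V\). The straightforward checks that remain are: the projection \((q, r) \mapsto r(i)\) witnesses it as a lower bound of each \(\alpha_i\), with the required triangle commuting automatically since \(P_\U\) is a proposition; and any other lower bound \((S, g)\) yields a map \(S \to \prod_{i} R_i\) (from the lower-bound inequalities) together with a map \(S \to Q\) (by composing \(g : S \to P_\U\) with the reverse half of \(P_\U \simeq Q\)), which combine into the required factorization.

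The conceptual heart of the argument is that the \(Q\)-factor in the proposed infimum is precisely what allows the construction to handle the case when \(I\) is empty, where the formula collapses to a pair witnessing the greatest element needed for the forward direction. Without \(\V\)-smallness of \(P_\U\), there is no evident way to produce a \(\V\)-small proposition mapping into \(P_\U\), which is exactly the obstruction exposed by the empty infimum; so the two directions of the bi-implication really are governed by the same phenomenon.
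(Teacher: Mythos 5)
Your proof is correct and follows essentially the same route as the paper: the empty-family infimum gives the greatest element, which yields smallness via Lemma~\ref{maximal-iff-resize}, and conversely the infimum of \(\alpha\) is exhibited as \(Q \times \prod_{i:I} R_i\). The paper states the formula for the infimum without spelling out the lower-bound and universality checks that you supply; those details are correct and the conceptual remark at the end is a reasonable gloss.
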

\begin{proof}
  Suppose first that \(\lifting_{\V}(P_{\U})\) has all \(\V\)-infima. Then it
  must have a infimum for the empty family
  \(\Zero_{\V} \to \lifting_{\V}(P_{\U})\). But this infimum must be the greatest
  element of \(\lifting_{\V}(P_{\U})\). So by \cref{maximal-iff-resize} the
  proposition \(P_{\U}\) must be \(\V\)-small.

  Conversely, suppose that \(P_{\U}\) is equivalent to a proposition \(Q : \V\).
  Then the infimum of a family \(\alpha : I \to \lifting_{\V}(P_{\U})\) with
  \(I : \V\) is given by \(\pa*{Q \times \Pi_{i : I} \alpha_i} : \V\).
\end{proof}

In~\cite{FSCDversion} we used~\cref{maximal-iff-resize} to conclude that a
version of Zorn's lemma that says that every pointed dcpo has a maximal element
is predicatively unavailable, as \(\lifting_{\V}\pa*{P_{\U}}\) is a pointed
\(\V\)-dcpo, but has a maximal element if and only if \(P_{\U}\) is \(\V\)-small.
But, as in our discussion above of the duality theorem and Tarski's theorem, we
must pay attention to the universes here.
Zorn's lemma restricted to \(\V\)-small \(\V\)-sup-lattices is, assuming
excluded middle~\cite{Bell1997}, equivalent to the axiom of choice, as usual.
Disregarding its constructive status for a moment, the predicative issue is
that there are no nontrivial \(\V\)-small \(\V\)-sup-lattices
(\cref{nontrivial-impredicativity}).
But the generalization of Zorn's lemma to \emph{locally} small
\(\V\)-sup-lattices is false (even if we assume the axiom of choice and hence,
excluded middle), because the \(\V\)-sup-lattice of ordinals in \(\V\), having
no maximal element, is a counterexample.

\subsection{Small Suprema of Small Ordinals in Univalent Foundations}
\label{sec:small-suprema-of-ordinals}

We now show that the ordinal \(\Ord_{\V}\) of ordinals in a fixed
\emph{univalent} universe \(\V\) has suprema for all families indexed by types
in~\(\V\) and that it has no maximal element.
The latter is implied by~\cite[Lemma~10.3.21]{HoTTBook}, but we were not able to
find a proof of the former in the literature:
Theorem~9 of~\cite{KrausForsbergXu2021} only proves \(\Ord_{\V}\) to have joins
of increasing sequences, while \cite[Lemma~10.3.22]{HoTTBook} shows that every
family indexed by a type in \(\V\) has some upper bound, but does not prove it
to be the least (although least upper bounds are required for
\cite[Exercise~10.17(ii)]{HoTTBook}).
We present two proofs: one based on~\cite[Lemma~10.3.22]{HoTTBook} using small
set quotients and an alternative one using small images.

Following~\cite[Section~10.3]{HoTTBook}, we define an ordinal to be a type
equipped with a proposition-valued, transitive, extensional and (inductive)
well-founded relation.
In \cite{HoTTBook} the underlying type of an ordinal is required to be a set,
but this actually follows from the other axioms,
see~\cite[{\mkTTurllong{Ordinals.Type}{underlying-type-is-set}}]{TypeTopology}.
The type of ordinals, denoted by \(\Ord_{\V}\), in a given \emph{univalent}
universe \(\V\) can itself be equipped with such a
relation~\cite[Theorem~10.3.20]{HoTTBook} and thus is an ordinal again. However,
it is not an ordinal in \(\V\), but rather in the next universe \(\V^+\), and
this is necessary, because it is contradictory for \(\Ord_{\V}\) to be
isomorphic to an ordinal in \(\V\),
see~\cite{TypeTopologyBuraliForti}.

Before we prove that \(\Ord_{\V}\) has \(\V\)-suprema, we need to recall a few
facts.
The well-order on \(\Ord_{\V}\) is given by: \(\alpha \prec \beta\) if and only
if we can find a (necessarily) unique \(y : \beta\) such that
\(\alpha\)~and~\(\beta \initseg y\) are isomorphic ordinals. Here
\(\beta \initseg y\) denotes the ordinal of elements \(b : \beta\) satisfying
\(b \prec y\).

\begin{lemC}[{\cite[{\mkTTurl{Ordinals.OrdinalOfOrdinals}}]{TypeTopology}}]%
  \label{initseg-order}
  For every two points \(x\) and \(y\) of an ordinal \(\alpha\), we have
  \(x \prec y\) in \(\alpha\) if and only if
  \(\alpha \initseg x \prec \alpha \initseg y\) as ordinals.
\end{lemC}
\begin{proof}
  If \(x \prec y\), then we can consider \(\alpha \initseg y \initseg x\) which
  is easily seen to be isomorphic to \(\alpha \initseg x\), so that
  \(\alpha \initseg x \prec \alpha \initseg y\).
  Conversely, if \(\alpha \initseg x \prec \alpha \initseg y\), then
  \(\alpha \initseg x\) is isomorphic to \(\alpha \initseg y \initseg z\) for
  some unique \(z \prec y\).
  But now \(\alpha \initseg x\) and \(\alpha \initseg z\) are isomorphic which
  implies that \(x = z \prec y\).
\end{proof}

\begin{defi}[Simulation, {\cite[Section~10.3]{HoTTBook}}]
  A \emph{simulation} between two ordinals \(\alpha\)~and~\(\beta\) is a map
  \(f : \alpha \to \beta\) satisfying the following conditions:
  \begin{enumerate}[(i)]
  \item for every \(x,y : \alpha\), if \(x \prec y\), then \(f(x) \prec f(y)\);
  \item for every \(x : \alpha\) and \(y : \beta\), if \(y \prec f(x)\), then we
    can find a (necessarily unique) \(x' : \alpha\) such that \(x' \prec x\) and
    \(f(x') = y\).
  \end{enumerate}
\end{defi}

\begin{lemC}[{\cite[{\mkTTurl{Ordinals.OrdinalOfOrdinals}}]{TypeTopology}}]%
  \label{Ord-order}
  For ordinals \(\alpha\) and \(\beta\), the following are equivalent:
  \begin{enumerate}[(i)]
  \item we can find a (necessarily unique) simulation from \(\alpha\) to
    \(\beta\);
  \item for every ordinal \(\gamma\), if \(\gamma \prec \alpha\), then
    \(\gamma \prec \beta\).
  \end{enumerate}
  We write \(\alpha \preceq \beta\) in case the equivalent conditions above
  hold.
\end{lemC}
\begin{proof}
  Given a simulation \(f : \alpha \to \beta\) and an ordinal
  \(\gamma \prec \alpha\), we have \(x : \alpha\) such that
  \(\gamma\)~and~\(\alpha \initseg x\) are isomorphic.
  We claim that \(\alpha \initseg x\) and \(\beta \initseg f(x)\) are
  isomorphic, which entails \(\gamma < \beta\), as desired.
  The forward direction of the isomorphism is given by \(f\), while in the other
  direction we map \(y \prec f(x)\) to the unique \(x' : \alpha\) with
  \(f(x') = y\) given by the fact that \(f\) is a simulation.

  Conversely, if \(\gamma \prec \alpha\) implies \(\gamma \prec \beta\), then
  for every \(x : \alpha\), we have a unique \(y : \beta\) such that
  \(\alpha \initseg x\) and \(\beta \initseg y\) are isomorphic.
  This defines a map \(f : \alpha \to \beta\) which is easily seen to be
  monotone.
  Moreover, if \(y \prec f(x)\), then
  \(\beta \initseg y \prec \alpha \initseg x\), so that we get \(x' \prec x\)
  with \(y = f(x')\), and \(f\) is thus a simulation.
\end{proof}

Recall from~\cref{existence-of-small-set-quotients} what it means to have small
set quotients. If these are available, then the type of ordinals has all small
suprema.

\begin{thm}[Extending~{\cite[Lemma~10.3.22]{HoTTBook}}]
  Assuming small set quotients, the large ordinal \(\Ord_{\V}\) has suprema of
  families indexed by types in \(\V\).
\end{thm}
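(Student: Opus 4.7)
The plan is to build the supremum as a small set quotient. Given a family $\alpha : I \to \Ord_{\V}$ with $I : \V$, set $T \colonequiv \Sigma_{i : I} \alpha_i$, which lies in $\V$. Define a relation ${\sim}$ on $T$ by declaring $(i,x) \sim (j,y)$ to hold when there is an isomorphism of ordinals $\alpha_i \initseg x \cong \alpha_j \initseg y$. This relation is $\V$-valued because both $\alpha_i \initseg x$ and $\alpha_j \initseg y$ are ordinals whose underlying types lie in $\V$, and it is an equivalence relation because isomorphism of ordinals is. Using the hypothesis of small set quotients, we obtain $\sigma \colonequiv T/{\sim}$ in $\V$.

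Next, equip $\sigma$ with a binary relation ${\lhd}$ by $[(i,x)] \lhd [(j,y)]$ iff $\alpha_i \initseg x \prec \alpha_j \initseg y$ in $\Ord_{\V}$. Well-definedness is immediate, since $\prec$ on $\Ord_{\V}$ respects ordinal isomorphism. I would then verify that $(\sigma, \lhd)$ is an ordinal: transitivity is inherited from $\prec$ on $\Ord_{\V}$; well-foundedness follows by set-quotient induction combined with transfinite induction on the accessibility predicate of each $\alpha_i$ (showing $x \mapsto [(i,x)]$ lands in accessible elements of $\sigma$ for each $i$); and extensionality reduces to the statement that two initial segments of ordinals with the same $\prec$-predecessors in $\Ord_{\V}$ are isomorphic, which is the extensionality of $\prec$ on $\Ord_{\V}$ itself.

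For the upper-bound property, define $e_i : \alpha_i \to \sigma$ by $x \mapsto [(i,x)]$. By \cref{initseg-order}, $x \prec y$ in $\alpha_i$ is equivalent to $\alpha_i \initseg x \prec \alpha_i \initseg y$, so $e_i$ preserves the order; the second simulation condition is verified analogously, lifting a predecessor in $\sigma$ back to a predecessor in $\alpha_i$. To see $\sigma$ is the \emph{least} upper bound, suppose $\beta : \Ord_{\V}$ carries simulations $s_i : \alpha_i \to \beta$, and define $s : \sigma \to \beta$ by $s([(i,x)]) \colonequiv s_i(x)$, invoking the universal property of the quotient. Well-definedness is the key computation: if $(i,x) \sim (j,y)$, then since every simulation $s_i$ yields a canonical isomorphism $\beta \initseg s_i(x) \cong \alpha_i \initseg x$, we get $\beta \initseg s_i(x) \cong \beta \initseg s_j(y)$, whence $s_i(x) = s_j(y)$ by extensionality of $\beta$. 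That $s$ is then a simulation is straightforward, and uniqueness follows from \cref{Ord-order}.

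The main obstacle I expect is the extensionality of $\lhd$, which requires showing that two classes $[(i,x)]$ and $[(j,y)]$ sharing all $\lhd$-predecessors must satisfy $\alpha_i \initseg x \cong \alpha_j \initseg y$. One must argue that the $\lhd$-predecessors of $[(i,x)]$ in $\sigma$ correspond precisely to the ordinals $\prec$-below $\alpha_i \initseg x$ in $\Ord_{\V}$, so that equal predecessor sets in $\sigma$ yield equal predecessor sets in $\Ord_{\V}$; extensionality of $\Ord_{\V}$ then finishes the job. Once this point is settled, the rest is routine bookkeeping, and the theorem follows.
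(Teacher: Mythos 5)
Your proposal takes essentially the same approach as the paper: you construct the same quotient of \(\Sigma_{i:I}\alpha_i\) by the relation ``isomorphic initial segments,'' equip it with the same order, and establish the least-upper-bound property by inducing a map to any upper bound \(\beta\) via the universal property of the small quotient, with well-definedness handled by (univalence and) extensionality of \(\beta\) just as the paper handles it via uniqueness of the representing points \(b_i^x\). The only surface difference is that the paper cites Lemma~10.3.22 of the HoTT Book for the ordinal structure and upper-bound property of the quotient, which you re-derive inline.
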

\begin{proof}[Proof~{\cite{TypeTopologyOrdinalSuprema}}]
  Given \(\alpha : I \to \Ord_{\V}\), define \(\hat{\alpha}\) as the quotient of
  \(\Sigma_{i : I}\,\alpha_i\) by the \(\V\)-valued equivalence relation
  \({\approx}\) where \((i,x) \approx (j,y)\) if and only if \(\alpha_i \initseg x\)
  and \(\alpha_j \initseg y\) are isomorphic as ordinals.
  By our assumption, the quotient \(\hat\alpha\) lives in~\(\V\).
  Next, \cite[Lemma~10.3.22]{HoTTBook} tells us that \(\pa*{\hat{\alpha},\prec}\)
  with
  \[
    [(i,x)] \prec [(j,y)] \colonequiv (\alpha_i \initseg x) \prec (\alpha_j \initseg y).
  \]
  is an ordinal that is an upper bound of \(\alpha\).
  So we show that \(\hat{\alpha}\) is a lower bound of upper bounds of
  \(\alpha\). To this end, suppose that \(\beta : \Ord_\V\) is such that
  \(\alpha_i \preceq \beta\) for every \(i : I\). In light of~\cref{Ord-order},
  this assumption yields two things:
  \begin{enumerate}[(1)]
  \item\label{assum1} for every \(i : I\) and \(x : \alpha_i\) there exists a
    unique \(b_i^x : \beta\) such that
    \(\alpha_i \initseg x = \beta \initseg b_i^x\);
  \item\label{assum2} for every \(i : I\), a simulation
    \(f_i : \alpha_i \to \beta\) such that for every \(x : \alpha_i\), we have
    \(f_i(x) = b_i^x\).
  \end{enumerate}
  We are to prove that \(\hat{\alpha} \preceq \beta\). We start by defining
  \begin{align*}
    f : \pa*{\Sigma_{i : I}\,\alpha_i} &\to \beta \\
    (i,x) &\mapsto b_i^x
  \end{align*}
  Observe that \(f\) respects \(\approx\), for if \((i,x) \approx (j,y)\), then by
  univalence,
  \[(\beta \initseg b_i^x) = (\alpha_i \initseg x) = (\alpha_j \initseg y)
    = (\beta \initseg b_j^y),\]
  so \(b_i^x = b_j^y\) by uniqueness of \(b_i^x\).
  Thus, \(f\) induces a map \(\hat{f} : \hat{\alpha} \to \beta\) satisfying the equality
  \(\hat{f}([(i,x)]) = f(i,x)\) for every \((i,x) : \Sigma_{j : J}\,\alpha_j\).

  It remains to prove that \(\hat{f}\) is a simulation. Because the defining
  properties of a simulation are propositions, we can use set quotient induction
  and it suffices to prove the following two things:

  \begin{enumerate}[(I)]
  \item If \(\alpha_i \initseg x \prec \alpha_j \initseg y\), then
    \(b_i^x \prec b_j^y\).
  \item If \(b \prec b_i^x\), then there exists \(j : I\) and \(y : \alpha_j\) such
    that \(\alpha_i \initseg y \prec \alpha_j \initseg x\) and \(b_j^y = b\).
  \end{enumerate}
  For (I), observe that if \(\alpha_i \initseg x \prec \alpha_j \initseg y\),
  then \(\beta \initseg b_i^x \prec \beta \initseg b_j^y\), from which
  \(b_i^x \prec b_j^y\) follows using~\cref{initseg-order}.
  For (II), suppose that \(b \prec b_i^x\). Because \(f_i\) (see
  item~\ref{assum2} above) is a simulation, there exists \(y : \alpha_i\) with
  \(y \prec x\) and \(f_i(y) = b\). By~\cref{initseg-order}, we get
  \(\alpha_i \initseg y \prec \alpha_i \initseg x\). Moreover,
  \(b_i^y = f_i(y) = b\), finishing the proof of~(II).
\end{proof}

In~\cref{sec:set-replacement} we saw that set replacement is equivalent to the
existence of small set quotients, so the following result immediately follows
from the theorem above. But the point is that an alternative construction
without set quotients is available, if set replacement is assumed.

\begin{thm}
  Assuming set replacement, the large ordinal \(\Ord_{\V}\) has suprema of
  families indexed by types in \(\V\).
\end{thm}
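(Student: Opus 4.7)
The plan is to mirror the previous proof but replace the set quotient construction with an image construction. Given a family $\alpha : I \to \Ord_{\V}$, I would consider the map
\[ \phi : \pa*{\sum_{i : I} \alpha_i} \to \Ord_{\V}, \qquad (i,x) \mapsto \alpha_i \initseg x, \]
whose domain is $\V$-small (as $I : \V$ and each $\alpha_i : \V$) and whose codomain is a set that is locally $\V$-small, since by univalence equality $\alpha_i \initseg x = \alpha_j \initseg y$ in $\Ord_{\V}$ is the type of ordinal isomorphisms between two structures on types in $\V$, hence $\V$-small. Set replacement then yields that $\image(\phi)$ is $\V$-small.

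Next I would equip $\image(\phi)$ with the order inherited from $\Ord_{\V}$. The key observation is that $\image(\phi)$ is downward closed in $\Ord_{\V}$: if $\gamma \prec \alpha_i \initseg x$, then by definition of $\prec$ we have $z : \alpha_i \initseg x$ with $\gamma = (\alpha_i \initseg x) \initseg z$, and $z$ corresponds to some $w \prec x$ in $\alpha_i$ with $\gamma = \alpha_i \initseg w$, so $\gamma$ is again in the image. From this downward closure, transitivity, extensionality and well-foundedness are inherited directly from $\Ord_{\V}$, so $\image(\phi)$ is an ordinal, and it is $\V$-small by the previous paragraph.

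To show $\image(\phi)$ is the supremum of $\alpha$, I would first exhibit it as an upper bound: for each $i : I$, the map $\alpha_i \to \image(\phi)$ sending $x \mapsto \alpha_i \initseg x$ is a simulation by \cref{initseg-order} and downward closure, witnessing $\alpha_i \preceq \image(\phi)$ via \cref{Ord-order}. Then, for an arbitrary upper bound $\beta$, I would reuse the argument of the previous theorem verbatim: assumption \ref{assum1} picks out, for each $(i,x)$, a unique $b_i^x : \beta$ with $\alpha_i \initseg x = \beta \initseg b_i^x$, and the assignment $\alpha_i \initseg x \mapsto b_i^x$ is well-defined on the image (since $b_i^x$ depends only on the ordinal $\alpha_i \initseg x$ itself), giving a map $\image(\phi) \to \beta$ which I would check to be a simulation exactly as in parts (I) and (II) of the previous proof.

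The only real obstacle is verifying that $\Ord_{\V}$ is locally $\V$-small, which, together with the fact that $\Ord_{\V}$ is a set, is precisely what allows set replacement to apply; once that is in hand, the entire construction avoids quotients and proceeds by restricting the ordinal structure of $\Ord_{\V}$ to a small subtype. An alternative, essentially equivalent presentation would observe that the kernel of $\phi$ is precisely the equivalence relation $\approx$ from the previous theorem, so $\image(\phi)$ is equivalent to $\hat\alpha$ and the ordinal structure could also be transported along this equivalence rather than constructed afresh.
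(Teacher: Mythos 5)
Your overall strategy is the same as the paper's: take the image of the map $(i,x) \mapsto \alpha_i \initseg x$, apply set replacement together with local $\V$-smallness of $\Ord_{\V}$ (which holds by univalence) to get a $\V$-small carrier, equip it with the order inherited from $\Ord_{\V}$, and show it is the least upper bound of $\alpha$. Your downward-closure argument and the observation that this makes $\image(\phi)$ inherit a well-order are correct, as is the description of the upper-bound simulations $\alpha_i \to \image(\phi)$.

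There is, however, one genuine gap in the construction of the simulation $\image(\phi) \to \beta$. An element of $\image(\phi)$ is a pair $(\gamma, p)$ with $p : \exists_{(i,x)}\pa*{\alpha_i \initseg x = \gamma}$ a \emph{truncated} witness, and you need to produce an element of $\beta$, which is a set but not a proposition. The universal property of propositional truncation therefore does not apply directly, so saying the assignment ``is well-defined on the image, giving a map'' does not by itself construct anything. Your parenthetical --- that $b_i^x$ depends only on the ordinal $\alpha_i \initseg x$ --- is exactly the right observation: it shows the untruncated map $\pa*{\Sigma_{i,x}\pa*{\alpha_i \initseg x = \gamma}} \to \beta$, $(i,x,p)\mapsto b_i^x$, is weakly constant. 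But you still need the theorem that a weakly constant function into a set factors through the propositional truncation of its domain; this is precisely what the paper invokes at this point (Theorem~5.4 of~\cite{KrausEtAl2017}). Without citing this, or supplying an equivalent factorization device, the map out of $\image(\phi)$ is not actually defined. With that one lemma added, your argument coincides with the paper's proof.
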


\begin{proof}[Proof~{\cite{TypeTopologyOrdinalSuprema}}]
  Given \(\alpha : I \to \Ord_{\V}\), consider the image of the map
  \(e : \Sigma_{i : I}\,\alpha_i \to \Ord_{\V}\) given by
  \(e(i,x) \colonequiv \alpha_i \initseg x\).
  The image of \(e\) is equivalent to the type
  \(\Sigma_{\gamma : \Ord_{\V}}\exists_{i : I}\, \gamma \prec \alpha_i\), \ie\
  the type of ordinals that are initial segments of some \(\alpha_i\).
  One can prove that \(\image(e)\) with the induced order from \(\Ord_{\V}\) is
  again a well-order and that for every \(i : I\), the canonical map
  \(\alpha_i \to \image(e)\) is a simulation.
  Moreover, if \(\beta\) is an ordinal such that \(\alpha_i \preceq \beta\) for
  every \(i : I\), then for every \(i : I\) and every \(x : \alpha_i\) there
  exists a unique \(b_i^x : \beta\) such that
  \(\alpha_i \initseg x = \beta \initseg b_i^x\).
  Now observe that for every \(\gamma : \Ord_{\V}\), the map
  \(\pa*{\Sigma_{i : I}\Sigma_{x : \alpha_i}\,\pa*{\alpha_i \initseg x =
      \gamma}} \to \beta\) defined by the assignment
  \((i , x , p) \mapsto b_i^x\) is a constant function to a set.
  Hence, by \cite[Theorem~5.4]{KrausEtAl2017}, this map factors through the
  propositional truncation
  \(\exists_{i : I}\Sigma_{x : \alpha_i}\,\pa*{\alpha \initseg x = \gamma}\).
  This yields a map \(\image(e) \to \beta\) which can be proved to be a
  simulation, as desired.
  Finally, we use set replacement and the fact that \(\Ord_{\V}\) is locally
  \(\V\)-small (by univalence) to get an ordinal in \(\V\) equivalent to
  \(\image(e)\), finishing the proof.
\end{proof}

\section{Families and Subsets}
\label{sec:families-and-subsets}
In traditional impredicative foundations, completeness of posets is usually
formulated using subsets. For instance, dcpos are defined as posets \(D\) such
that every directed subset of \(D\) has a supremum in
\(D\). \cref{examples-of-delta-complete-posets} are all formulated using small
families instead of subsets. While subsets are primitive in set theory, families
are primitive in type theory, so this could be an argument for using families
above. However, that still leaves the natural question of how the family-based
definitions compare to the usual subset-based definitions, especially in our
predicative setting, unanswered. This section addresses this question. We first
study the relation between subsets and families predicatively and then clarify
our definitions in the presence of impredicativity.
In our answers we will consider sup-lattices, but similar arguments could be
made for posets with other sorts of completeness, such as dcpos.

We first show that simply asking for completeness with respect to all subsets is
not satisfactory from a predicative viewpoint. In fact, we will now see that
even asking for completeness with respect to all elements of
\(\powerset_{\T}(X)\) for some fixed universe \(\T\) is problematic from a
predicative standpoint, where we recall that
\(\powerset_{\T}(X) \equiv (X \to \Omega_{\T})\).

\begin{defi}[\(\T\)-valued subsets]
  For a universe \(\T\) and a type \(X\) in any universe, we refer to the
  elements of \(\powerset_{\T}(X)\) as \emph{\(\T\)-valued subsets} of \(X\).
\end{defi}

\begin{thm}
  \label{all-T-subsets-resizing}
  Let \(\U\) and \(\V\) be universes, fix a proposition \(P_{\U} : \U\) and
  recall \(\lifting_{\V}(P_{\U})\) from~\cref{lifting-of-prop}, which has
  \(\V\)-suprema.
  If \(\lifting_{\V}(P_{\U})\) has suprema for all \(\T\)-valued subsets, then
  \(P_{\U}\) is \(\V\)-small independently of the choice of the type universe
  \(\T\).
\end{thm}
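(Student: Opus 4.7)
The plan is to exploit the fact that completeness under \emph{all} subsets in particular gives us a supremum for the ``full'' subset, and this supremum must be the greatest element of $\lifting_{\V}(P_{\U})$. Since \cref{maximal-iff-resize} already tells us that having a greatest element in $\lifting_{\V}(P_{\U})$ is equivalent to $P_{\U}$ being $\V$-small, the result will drop out essentially in one step, regardless of the universe $\T$ in which the subset takes values.

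Concretely, the first step is to define the constant subset $\top \colon \lifting_{\V}(P_{\U}) \to \Omega_{\T}$ by $\top(x) \colonequiv \One_{\T}$. By hypothesis, $\top$ has a supremum $s : \lifting_{\V}(P_{\U})$. The next step is to observe that $s$ is the greatest element of $\lifting_{\V}(P_{\U})$: for any $x : \lifting_{\V}(P_{\U})$, we have $\top(x)$, so $x$ is one of the elements being joined and hence $x \leq s$. In particular $s$ is a maximal element.

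The final step is to invoke \cref{maximal-iff-resize}: from the existence of a maximal element of $\lifting_{\V}(P_{\U})$, we conclude that $P_{\U}$ is equivalent to some proposition in $\Omega_{\V}$, i.e.\ $P_{\U}$ is $\V$-small. Note that the conclusion is $\V$-smallness rather than $\T$-smallness because \cref{maximal-iff-resize} produces a $\V$-proposition equivalent to $P_{\U}$; the universe $\T$ merely parameterizes the \emph{hypothesis} (how large the subsets are allowed to be) and does not appear in the conclusion.

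There is no substantive obstacle here; the only thing to be careful about is the bookkeeping of what it means for a subset $S : X \to \Omega_{\T}$ to have a supremum, namely that the family $\fst : \pa*{\Sigma_{x : X}\, S(x)} \to X$ has a supremum in $X$. With this reading the argument above is immediate, so the proof is really a direct application of \cref{maximal-iff-resize} to the distinguished full subset.
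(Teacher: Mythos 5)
Your proof is correct and takes essentially the same approach as the paper: both consider the full subset $Q \mapsto \One_{\T}$, observe that its supremum would be a greatest (hence maximal) element of $\lifting_{\V}(P_{\U})$, and then invoke \cref{maximal-iff-resize} to conclude that $P_{\U}$ is $\V$-small.
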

\begin{proof}
  Let \(\T\) be a type universe and consider the subset \(S\) of
  \(\lifting_{\V}(P_\U)\) given by \(Q \mapsto \One_\T\).  Note that \(S\) has a
  supremum in \(\lifting_{\V}(P_\U)\) if and only if \(\lifting_{\V}(P_\U)\) has
  a greatest element, but by~\cref{maximal-iff-resize}, the latter is equivalent
  to \(P_\U\) being \(\V\)-small.
\end{proof}

The proof above illustrates that if we have a subset \(S : \powerset_{\T}(X)\),
then there is no reason why the total space \({\Sigma_{x : X} (x \in S)}\)
should be \(\T\)-small. In fact, for \(S(x) \colonequiv \One_{\T}\) as above,
the latter is equivalent to asking that \(X\) is \(\T\)-small.

\begin{defi}[Total space of a subset, \(\totalspace\)]
  The \emph{total space} of a \(\T\)-valued subset \(S\) of a type \(X\) is
  defined as \(\totalspace(S) \colonequiv \Sigma_{x : X} (x \in S)\).
\end{defi}

In an attempt to solve the problem described in \cref{all-T-subsets-resizing},
we look to impose size restrictions on the total space of a subset. There are
two natural such restrictions and they are reminiscent of Bishop and Kuratowski
finite subsets.

\begin{defi}[\(\V\)-small and \(\V\)-covered subsets]
  An element \(S : \powerset_{\T}(X)\) is
  \begin{enumerate}[(i)]
  \item \emph{\(\V\)-small} if its total space is \(\V\)-small, and
  \item \emph{\(\V\)-covered} if we have \(I : \V\) with a surjection
    \(e : I \surj \totalspace(S)\). %
    \qedhere
  \end{enumerate}
\end{defi}

Observe that every \(\V\)-small subset is \(\V\)-covered, because every
equivalence is a surjection.
But the converse does not hold: We can emulate the well-known argument used to
show that, constructively, Kuratowski finiteness does not necessarily imply
Bishop finiteness to show that, predicatively, being \(\V\)-covered does not
necessarily imply being \(\V\)-small.

\begin{prop}\label{covered-small-resizing}
  For every two universes \(\U\) and \(\V\), if every \(\V\)-covered element of
  \(\powerset_{\U}\pa*{\Omega_{\U}}\) is \(\V\)-small, then
  \(\Propresizing{\U}{\V}\) holds.
\end{prop}
\begin{proof}
  Suppose that every \(\V\)-covered \(\U\)-valued subset of \(\Omega_{\U}\) is
  \(\V\)-small and let \(P : \U\) be an arbitrary proposition. Consider the
  subset \(S_P : \Omega_{\U} \to \Omega_{\U}\) given by
  \(S_P(Q) \colonequiv \pa*{Q = P} \vee \pa*{Q = \One_{\U}}\). Notice that this
  is \(\V\)-covered as witnessed by
  \begin{align*}
    \pa*{\One_{\V} + \One_{\V}} &\surj \totalspace(S_P) \\
    \inl(\star) &\mapsto \pa*{P\hspace{1.25mm},\tosquash*{\inl(\refl)}} \\
    \inr(\star) &\mapsto \pa*{\One_{\U},\tosquash*{\inr(\refl)}},
  \end{align*}
  so by assumption \(\totalspace(S_P)\) is \(\V\)-small. But observe that \(P\)
  holds if and only if \(\totalspace(S_P)\) is a subsingleton, but the latter
  type is \(\V\)-small by assumption, hence so is \(P\).
\end{proof}

In the case where we restrict our attention to a single universe \(\V\) and a
locally \(\V\)-small set \(X\), the two notions coincide if and only if we have
set replacement for maps into \(X\) with \(\V\)-small domain.
\begin{prop}\label{covered-small-set-replacement}
  If \(X\) is locally \(\V\)-small set, then every \(\V\)-covered element of
  \(\powerset_{\V}(X)\) is \(\V\)-small if and only if the image of any map into
  \(X\) with \(\V\)-small domain is \(\V\)-small.
\end{prop}
\begin{proof}
  Suppose first that every \(\V\)-covered subset \(S : X \to \Omega_{\V}\) is
  \(\V\)-small and let \(f : I \to X\) be map such that \(I\) is \(\V\)-small.
  Without loss of generality, we may assume that \(I : \V\), because we can
  always precompose \(f\) with the equivalence witnessing that \(I\) is
  \(\V\)-small.
  Now consider the subset \(S : X \to \Omega_{\V}\) given by
  \(S(x) \colonequiv \exists_{i : I}\pa*{f(i) =_{\V} x}\), where \({=_{\V}}\)
  has values in \(\V\) and is provided by our assumption that \(X\) is locally
  \(\V\)-small.
  Then \(S\) is \(\V\)-covered, because we have
  \(I \surj \image(f) \simeq \totalspace(S)\), where the first map is the
  corestriction of \(f\).
  So by assumption \(\totalspace(S)\) is \(\V\)-small, which means that
  \(\image(f)\) must be \(\V\)-small too.

  Conversely, assume the set replacement principle and let
  \(S : X \to \Omega_{\V}\) be \(\V\)-covered by \(e : I \surj \totalspace(S)\).
  Define the subset \(S' : X \to \Omega_{\V}\) by
  \(S'(x) \colonequiv \exists_{i : I}\pa*{x =_{\V} \fst(e_i)}\).
  By the assumed set replacement principle for \(X\), the subset \(S'\) is a
  \(\V\)-small since \(\totalspace(S') \simeq \image({\fst} \circ {e})\).
  Finally, it follows from the surjectivity of \(e\) that \(S\) and \(S'\) are
  equal as subsets, and therefore that
  \(\totalspace(S) \simeq \totalspace(S')\). Hence, \(S\) is a \(\V\)-small
  subset, as desired.
\end{proof}

So, predicatively, and in the absence of a set replacement principle, the notion
of a \(\V\)\nobreakdash-small subset is strictly stronger than that of a
\(\V\)-covered subset. Hence, in this setting, having suprema for all
\(\V\)-small subsets is strictly weaker than having suprema for all
\(\V\)\nobreakdash-covered subsets.
Meanwhile, \cref{family-subset-sup-equiv} will imply that there are plenty of
examples of posets with suprema for all \(\V\)-covered subsets, even
predicatively.
So we prefer the stronger, but predicatively reasonable requirement of asking
for suprema of all \(\V\)-covered subsets.

Form a practical viewpoint, \(\V\)-covered subsets also give us an easy handle
on examples like the following: Let \(X\) be a poset with suprema for all
(directed) \(\U_0\)-covered subsets. Then the least fixed point of a Scott
continuous endofunction \(f\) on \(X\) can be computed as the supremum of the
subset \(\{\bot, f(\bot) , f^2(\bot) , \dots\}\), which is covered by \(\Nat\).
But it is not clear that this subset is \(\U_0\)-small, at least not in the
absence of set replacement.

Our preference for \(\V\)-covered subsets over \(\V\)-small subsets also makes
it clear why we do not impose an injectivity condition on families, because for
every type \(X : \U\) there is an equivalence between embeddings
\(I \hookrightarrow X\) with \(I : \V\) and \(\pa*{\U \sqcup \V}\)-valued
subsets of \(X\) whose total spaces are \(\V\)-small,
\cf~\cite[\mkTTurllong{Slice.Slice}{\%F0\%9D\%93\%95-equiv}]{TypeTopology}.

\begin{thm}
  \label{family-subset-equiv}
  For \(X : \U\) and any universe \(\V\) we have an equivalence between
  \(\V\)-covered \(\pa*{\U \sqcup \V}\)-valued subsets of \(X\) and families
  \(I \to X\) with \(I : \V\).
\end{thm}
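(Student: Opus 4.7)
The plan is to exhibit explicit maps in both directions between the type $\V$-covered subsets $\sum_{S : X \to \Omega_{\U \sqcup \V}} \sum_{I : \V}(I \surj \totalspace(S))$ and the type of families $\sum_{I : \V}(I \to X)$, and then verify they are mutually inverse.

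In one direction, from a family $(I, \alpha : I \to X)$ with $I : \V$, I form the image subset $S_\alpha(x) \colonequiv \exists_{i : I}(\alpha_i = x)$, which is a subset $X \to \Omega_{\U \sqcup \V}$. By construction $\totalspace(S_\alpha) \equiv \image(\alpha)$, and the corestriction $\bar\alpha : I \to \image(\alpha)$ is surjective by \cref{corestriction-is-surjective}; thus $(S_\alpha, I, \bar\alpha)$ is a $\V$-covered subset with the same index type $I$. In the other direction, from $(S, I, e : I \surj \totalspace(S))$ I take the family $(I, \fst \circ e : I \to X)$.

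For one round trip, starting from $(I, \alpha)$ and applying the two maps in succession returns $(I, \fst \circ \bar\alpha) = (I, \alpha)$ since the corestriction satisfies $\fst(\bar\alpha(i)) \equiv \alpha(i)$ by definition. For the other round trip, starting from $(S, I, e)$ we obtain $(S_{\fst \circ e}, I, \overline{\fst \circ e})$. I first show $S = S_{\fst \circ e}$ pointwise: by propositional extensionality it suffices to give logical equivalences $S(x) \leftrightarrow \exists_{i : I}(\fst(e(i)) = x)$. For the left-to-right direction, given $p : x \in S$ we have $(x, p) : \totalspace(S)$, and surjectivity of $e$ yields some $i : I$ with $e(i) = (x, p)$, whence $\fst(e(i)) = x$. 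For the right-to-left direction, if $\fst(e(i)) = x$ then $\snd(e(i)) : \fst(e(i)) \in S$ transports to $x \in S$. After identifying $S$ with $S_{\fst \circ e}$ by propositional extensionality and function extensionality, the two surjections $e$ and $\overline{\fst \circ e}$ from $I$ to $\totalspace(S)$ have the same first projection $\fst \circ e$ by construction, and their second projections both land in the proposition $\fst(e(i)) \in S$, so they agree by propositional truncation.

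The only subtle step is the last one: one must check that after transporting along the equality $S = S_{\fst \circ e}$ established via propositional and function extensionality, the surjection data lines up. This reduces to showing equality of two terms in a proposition, so it is routine, but care is required to set up the transport correctly. No other ingredients beyond \cref{corestriction-is-surjective}, propositional and function extensionality are needed.
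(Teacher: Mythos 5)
Your proof is correct and follows essentially the same route as the paper: the same pair of explicit maps (family $\mapsto$ image subset with corestriction; covered subset $\mapsto$ first projection of the surjection), the easy round trip, and the same transport-handling argument for the harder round trip, which the paper packages as a standalone claim proved by path induction and function extensionality.
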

\begin{proof}
  The forward map \(\varphi\) is given by \((S,I,e) \mapsto (I,{\fst} \circ {e})\).
  In the other direction, we define \(\psi\) by mapping \((I,\alpha)\) to the
  triple \((S,I,e)\) where \(S\) is the subset of \(X\) given by
  \(S(x) \colonequiv \exists_{i : I}\,(x = \alpha(i))\) and
  \(e : I \surj \totalspace(S)\) is defined as
  \(e (i) \colonequiv \pa*{\alpha(i),\tosquash*{(i,\refl)}}\).
  The composite \(\varphi \circ \psi\) is easily seen to be equal to the
  identity. To show that \(\psi \circ \varphi\) equals the identity, we need the
  following intermediate result, which is proved using function extensionality
  and path induction.
  \begin{clmnn}
    Let \(S,S' : X \to \Omega_{\U\sqcup\V}\), \(e : I \to \totalspace(S)\)
    and \(e' : I \to \totalspace(S')\). If \(S = S'\) and \({{\fst} \circ e \sim
    {\fst} \circ e'}\), then \((S,e) = (S',e')\).
  \end{clmnn}
  The result follows from the claim using function and propositional
  extensionality.
\end{proof}

\begin{cor}
  \label{family-subset-sup-equiv}
  A poset with carrier in \(\U\) has suprema for all \(\V\)-covered
  \(\pa*{\U \sqcup \V}\)\nobreakdash-valued subsets if and only if it has
  suprema for all families indexed by types in \(\V\).
\end{cor}
\begin{proof}
  This is because the supremum of a \(\V\)-covered subset equals the supremum of
  the corresponding family and vice versa by inspecting the proof
  of~\cref{family-subset-equiv}.
\end{proof}

We conclude by comparing our family-based approach to the subset-based approach
in the presence of impredicativity.

\begin{thm}\label{impred-comparison}
  Assuming \(\Omegaresizing{\T}{\U_0}\) for every universe \(\T\), the following
  are equivalent for a poset with carrier in a universe \(\U\):
  \begin{enumerate}[(i)]
  \item\label{all-subsets} the poset has suprema for all subsets;
  \item\label{all-covered-subsets} the poset has suprema for all \(\U\)-covered
    subsets;
  \item\label{all-small-subsets} the poset has suprema for all \(\U\)-small
    subsets;
  \item\label{all-small-families} the poset has suprema for all families indexed
    by types in \(\U\).
  \end{enumerate}
\end{thm}
\begin{proof}
  Clearly
  \ref{all-subsets}~\(\Rightarrow\)~\ref{all-covered-subsets}~\(\Rightarrow\)~\ref{all-small-subsets}.
  We show that \ref{all-small-subsets} implies \ref{all-subsets}, which proves
  the equivalence of \ref{all-subsets}--\ref{all-small-subsets}. Assume that a
  poset \(X\) has suprema for all \(\U\)-small subsets and let
  \(S : X \to \Omega_{\T}\) be any subset of \(X\). Using
  \(\Omegaresizing{\T}{\U_0}\), the total space \(\totalspace(S)\) is
  \(\U\)-small. So~\(X\)~has a supremum for \(S\) by assumption, as
  desired. Finally, \ref{all-covered-subsets}~and~\ref{all-small-families} are
  equivalent in the presence of \(\Omegaresizing{\T}{\U_0}\)
  by~\cref{family-subset-sup-equiv}.
\end{proof}
If condition~\ref{all-small-families} of~\cref{impred-comparison} holds, then
the poset has suprema for all families indexed by types in \(\V\) provided that
\(\V \sqcup \U \equiv \U\).
Typically, in the examples of~\cite{deJongEscardo2021}~for~instance,
\(\U \colonequiv \U_1\) and \(\V \colonequiv \U_0\), so that
\(\V\sqcup\U \equiv \U\) holds. Thus, our \(\V\)-families-based approach
generalizes the traditional subset-based approach.

\section{Conclusion}
\label{sec:conclusion}
Firstly, we have shown, constructively and predicatively, that nontrivial dcpos,
bounded complete posets and sup-lattices are all necessarily large and
necessarily lack decidable equality. We did so by deriving a weak
impredicativity axiom or weak excluded middle from the assumption that such
nontrivial structures are small or have decidable equality,
respectively. Strengthening nontriviality to the (classically equivalent)
positivity condition, we derived a strong impredicativity axiom and full
excluded middle.

Secondly, we showed that Tarski's greatest fixed point theorem cannot be applied
in nontrivial instances in our predicative setting, while generalizations of
Tarski's theorem that allow for large structures are provably
false. Specifically, we showed that the ordinal of ordinals in a univalent
universe does not have a maximal element, but does have small suprema in the
presence of small set quotients, or equivalently, set replacement.
More generally, we investigated the inter-definability and interaction of type
universes of propositional truncations and set quotients in the absence of
propositional resizing axioms. In particular, we showed that in the presence of
propositional truncations, but without assuming propositional resizing, it is
possible to construct set quotients that happen to live in higher type universes
but that do satisfy the appropriate universal properties with respect to sets in
arbitrary type universes.

Finally, we clarified, in our predicative setting, the relation between the
traditional definition of a lattice that requires completeness with respect to
subsets and our definition that asks for completeness with respect to small
families.

In future work, it would be interesting to study the predicative validity of
Pataraia's theorem and Tarski's \emph{least} fixed point theorem.
Curi~\cite{Curi2015,Curi2018} develops predicative versions of Tarski's fixed
point theorem in some extensions of CZF. It is not clear whether these arguments
could be adapted to univalent foundations, because they rely on the
set-theoretical principles discussed in the introduction. The availability of
such fixed-point theorems might be useful for application to inductive
sets~\cite{Aczel1977}, which we might otherwise introduce in univalent
foundations using higher inductive types~\cite{HoTTBook}.
In another direction, we have developed a notion of
apartness~\cite{BridgesVita2011} for continuous dcpos~\cite{GierzEtAl2003} that
is related to the strictly-below relation introduced in this paper. Namely, if
\(x \below y\) are elements of a continuous dcpo, then \(x\) is strictly below
\(y\) if \(x\) is apart from \(y\). In \cite{deJong2021}, we give a constructive
analysis of the Scott topology~\cite{GierzEtAl2003} using this notion of
apartness.

\section{Acknowledgements}
We would like to thank the reviewers for their valuable and complementary
suggestions. We~are particularly grateful to the reviewer who pointed out that
one of our results can be strengthened to~\cref{is-small-retract} and for their
insights and questions
on~\cref{sec:maximal-and-fixed-points,sec:families-and-subsets} that have
considerably improved the paper.

\bibliographystyle{alphaurl}
\bibliography{references.bib}

\end{document}